\newtheorem{theorem}{Theorem}
\newtheorem{corollary}{Corollary}
\newtheorem{lemma}{Lemma}
\newtheorem{definition}{Definition}
\newtheorem{example}{Example}
\newtheorem{proposition}{Proposition}
\newtheorem*{rep@theorem}{\rep@title}
\newcommand{\newreptheorem}[2]{%
\newenvironment{rep#1}[1]{%
 \def\rep@title{#2 \ref{##1}}%
 \begin{rep@theorem}}%
 {\end{rep@theorem}}}
\newtheorem{construction}{Construction}
\newcommand{\cprob}[3]{
\begin{center}
{\small 
\begin{tabularx}{0.98\columnwidth}{ll}
\toprule
\multicolumn{2}{c}{\textsc{#1}} \\
\midrule
{\bf Instance:}   & \parbox[t]{0.8\columnwidth}{#2\vspace*{1mm}}  \\
{\bf Question:}& \parbox[t]{0.8\columnwidth}{#3\vspace*{.5mm}} \\ 
\bottomrule
\end{tabularx}
}
\end{center}
}
\newcommand{\calO}{{\mathcal{O}}}
\newcommand {\bigO} {{\mathcal O}}
\newcommand{\NP}{NP}
\newcommand{\true}{\textsl{true}\xspace}
\newcommand{\false}{\textsl{false}\xspace}
\newcommand{\nsp}{\textnormal{\texttt{not\_single\_peaked}}\xspace}
\newcommand{\incompatible}{\textnormal{\texttt{incompatible}}\xspace}
\newcommand{\card}[1]{\ensuremath{|#1|}}
\renewcommand{\leq}{\le}
\newcommand{\citep}[1]{\cite{#1}}
\newcommand{\citet}[1]{\citeA{#1}}
\title{Incomplete Preferences in Single-Peaked Electorates}
\author{\name Zack Fitzsimmons  \email zfitzsim@holycross.edu\\
   \addr College of the Holy Cross \\
   Department of Mathematics and Computer Science\\
   One College Street, Worcester, MA 01610, USA
  \AND
\name Martin Lackner \email lackner@dbai.tuwien.ac.at\\
  \addr TU Wien\\
  Databases and Artificial Intelligence Group\\
  Favoritenstra\ss e 9-11, 1040 Vienna, Austria
}
\date{}
\begin{document}

\maketitle

\begin{abstract}
Incomplete preferences are likely to arise in real-world preference aggregation scenarios.
This paper deals with determining whether an incomplete preference profile is single-peaked.
This is valuable information since many intractable voting problems become tractable given single-peaked preferences.
We prove that the problem of recognizing single-peakedness is \NP-complete for incomplete profiles consisting of partial orders.
Despite this intractability result, we find several polynomial-time algorithms for reasonably restricted settings.
In particular, we give polynomial-time recognition algorithms for weak orders, which can be viewed as preferences with indifference.
\end{abstract}


\section{Introduction}

Both human and automated decision making often have to rely on incomplete information.
The same issue arises in collective decision making---voting---in multi-agent systems.
\citet{kon-lan:c:incomplete-prefs} distinguish two main sources of incompleteness:
The first one is \emph{intrinsic} incompleteness where a voter is unable
or unwilling to give complete information. %
The second one is \emph{epistemic} incompleteness where voters are able to provide complete information but at the time of decision making the required information is not fully available.
Clearly, also a combination of these two scenarios may arise.

Incomplete qualitative preferences are often modeled as partial orders \cite<see, e.g., the survey by>{brandt2015handbook-chapter10}.
For example, if voters provide some pairwise comparisons between alternatives, then partial orders arise naturally as the transitive closure of these comparisons.
In this sense, partial orders can be viewed as the most general model for (qualitative) incomplete preferences while maintaining transitivity as a rationality assumption.
A further standard assumption in this setting~\citep{brandt2015handbook-chapter10} is that voters have a true, \emph{complete} preference order, modeled as a total order.
This total order is not available (due to intrinsic or epistemic incompleteness), but the known partial order is compatible with the underlying total order, i.e., it can be extended to this total order.

Real-world preferences---whether they are incomplete or not---tend to possess an underlying structure, as they are formed by an implicit or explicit rationale.
In this paper, we consider a particularly advantageous structural property: single-peaked preferences~\citep{bla:j:rationale-of-group-decision-making}.
The single-peaked domain is one of the most commonly studied preference restrictions in social choice theory~\citep{gaertner-domainrestrictions}.
Single-peaked preferences arise if preferences are formed based on an underlying ordering of alternatives, a so-called axis.
It is assumed that voters have a most-preferred alternative, their peak, and alternatives farther away on this axis are increasingly less preferred.

Single-peaked preference profiles enjoy many positive properties \cite<see, e.g., surveys by>{elkind2016preferencerestrictions,ELP-trends}, both for algorithmic purposes and from the axiomatic point of view from classical social choice.
For example, the problem of determining if a candidate is a winner for Dodgson, Kemeny, and Young elections is computationally hard to compute~\citep{bar-tov-tri:j:who-won,HemaspaandraHR97,HemaspaandraSV05,rot-spa-vog:j:young}, but efficient, polynomial-time algorithms exist when preferences are single-peaked~\citep{bra-bri-hem-hem:j:sp2}.
Another example of a desirable property is that single-peaked profiles are guaranteed to possess a transitive majority relation and thus a (weak) Condorcet is guaranteed to exist \citep{bla:j:rationale-of-group-decision-making,bla:b:polsci:committees-elections}.
Furthermore, for single-peaked profiles there exist nondictatorial strategyproof voting rules~\citep{moulin1984generalized,mou:b:axioms} (e.g., selecting a Condorcet winner); such rules do not exist for arbitrary preference profiles by the Gibbard--Satterthwaite Theorem \citep{gib:j:polsci:manipulation,sat:j:polsci:manipulation}.

If only incomplete preference information is available, it is a challenge to determine whether voters' preferences can be explained by a single-peaked axis.
This is the main goal of this paper: to investigate the algorithmic problem of recognizing whether an incomplete preference profile may be single-peaked, i.e., whether the voters' underlying complete preference orders are single-peaked.
The motivation to obtain this information is twofold:
First, to know that incomplete preferences may be explainable by a single-peaked axis provides  fundamental understanding about the nature of the given preference data set, since
the preferences of all voters can be explained by optimal points (peaks) on such an axis.
Second, the single-peaked combinatorial structure is highly advantageous for algorithmic purposes, also in the case of incomplete preferences.
This claim has been corroborated by recent work, which we discuss below in Section~\ref{sec:related}.
In contrast to these two positive aspects, the axiomatic benefits of the single-peaked domain do \emph{not} fully extend to incomplete preferences. In particular, a transitive majority relation is no longer guaranteed (as we discuss later in Section~\ref{subsec:trans}). It thus deserves special attention that the algorithmic 
usefulness of incomplete single-peaked preferences is not an immediate consequence of their social choice-theoretic properties.

We now give an overview of the main results in this paper.

\subsection{Results}

This paper deals with the question of how to determine whether an incomplete preference profile (e.g., a preference profile comprised of partial orders) %
is single-peaked.
This requires a definition of single-peakedness that is applicable to more general types of orders than total orders, i.e., partial orders.
To this end, we propose the notion of \emph{possibly single-peakedness}, which is satisfied if---given an incomplete preference profile---there exist a compatible complete profile that is single-peaked.
To be more precise, we ask whether a given profile of partial orders can be extended to total orders such that the resulting complete profile is single-peaked (according to the usual definition).
Based on this definition, we analyze the corresponding algorithmic recognition problem.
In the following, let $n$ denote the size of the preference profile (number of voters) and let $m$ denote the number of candidates.
Our main computational results are as follows:

\begin{itemize}

\item
We prove that determining whether an incomplete preference profile is single-peaked is \NP-complete.
This is in contrast to the case of complete preferences for which single-peakedness can be determined in linear time~\citep{esc-lan-ozt:c:single-peaked-consistency}.
Furthermore, we strengthen this result by showing that \NP-completeness still holds if one voter completely specifies his or her preferences, and also for the case of local weak orders.

\end{itemize}

Apart from these hardness results, we present polynomial-time algorithms
for recognizing single-peakedness in relevant subcases:

\begin{itemize}

\item First, we consider the problem of determining whether a profile of partial orders is single-peaked for a given axis. 
We prove this problem to be polynomial-time solvable even for incomplete profiles consisting of partial orders.

\item Our most general polynomial-time algorithm is based on the consecutive ones problem. By providing a reduction from single-peaked recognition to the consecutive ones problem, we obtain an $\bigO(m^2\cdot n)$ algorithm for profiles of weak orders, i.e., rankings with ties.
We have implemented this algorithm; see the discussion section (Section~\ref{sec:disc}) for further details.

\item We also provide a direct $\bigO(m\cdot n)$ algorithm for single-peaked consistency for
    weak orders, with the requirement that the preference profile contains
    at least one implicitly specified total order.
This algorithm is an improvement over the single-peaked recognition algorithm by~\citet{esc-lan-ozt:c:single-peaked-consistency} since it is applicable to a broader class of preference profiles (weak orders instead of total orders) while maintaining its runtime.

\item
    Furthermore, we present a \textsc{2-SAT}-based algorithm,
which also requires an implicitly specified
total order, but is applicable to local weak orders (a generalization of weak orders).
This more general algorithm requires $\bigO(m^3\cdot n)$ time.

\item For the case of top orders (weak orders with all incomparable candidates ranked last), we provide another direct, combinatorial algorithm with a runtime of $\bigO(m^2\cdot n)$. 
This algorithm does not rely on solving consecutive ones instances and thus
is conceptually simpler, in particular for implementations.

\end{itemize}

These algorithms differ in their worst-case runtime estimates, which range from $\bigO(m\cdot n)$ to $\bigO(m^3\cdot n)$. Note that these runtime estimates differ only with respect to the number of candidates. In particular in scenarios with a large number of candidates \cite<e.g., when aggregating web search results, as studied by>{BetzlerBN14}, the actual runtime of these algorithms can be expected to vary.

\smallskip

Finally, we consider the case that voters are unable to provide total orders because their true preferences include some form of indifference. This gives another perspective on weak orders: viewing them not as incomplete but as rankings with ties.
We compare our definition of possibly single-peaked preferences to other applicable single-peaked concepts, highlighting benefits and drawbacks of our more general approach.

\subsection{Related Work}
\label{sec:related}

As mentioned before, the single-peaked property is highly advantageous for algorithmic purposes.
This has been demonstrated for many social choice problems based on total (linear) order preferences, e.g., in the work of~\citet{bet-sli-uhl:j:mon-cc,bra-bri-hem-hem:j:sp2}, and~\citet{elkind2015owa}.
Recent work has shown that similar benefits also hold for possibly single-peaked preferences, in particular for profiles of weak orders.
\citet{peters2018single} shows that several rules that are NP-hard otherwise can be computed in polynomial time for possibly single-peaked profiles of weak orders. This holds for multi-winner rules such as Chamberlin-Courant, Proportional Approval Voting, and the very general class of OWA-based rules.
Furthermore, Young's rule is also polynomial-time computable for possibly single-peaked profiles of weak orders; this follows from work of \citet{peters2016spoc}, who prove this result for the more general domain of preferences that are single-peaked on a circle.
If we restrict weak orders to two equivalence classes, i.e., we only distinguish between good (approved) and bad (disapproved) candidates, we obtain the class of dichotomous preferences. The notion of possibly single-peakedness directly translates to this setting. \citet{liu2016parameterized} have shown that Minimax Approval Voting, although NP-hard in general~\citep{legrand2007some}, can be solved in polynomial time for possibly single-peaked dichotomous preferences (see the work of \citeR{elkind2015structure} containing a detailed discussion of single-peakedness and dichotomous preferences).
Overall, the notion of possibly single-peaked preferences has been shown to be beneficial for many algorithmic purposes.

Another interesting effect of single-peaked preferences is that the complexity of manipulative actions (such as manipulation and control) often decreases for single-peaked preferences~\citep{fal-hem-hem-rot:j:shield,hemaspaandra2016complexity}.
When we instead consider more general types of preference orders than total orders we get different behavior
depending on the model of single-peakedness used. In general, the model of possibly single-peakedness does not yield such a reduction 
in complexity compared to the general case for different manipulative actions~\citep{fit-hem:c:voting-with-ties,FitzsimmonsH16,men-lar:j:top-truncated-sp}. This is in contrast to the algorithmic benefits mentioned in the last paragraph.
However, here a decrease in complexity would not necessarily be desirable.
In particular, if one seeks to reduce insincere behavior by choosing voting rules that are computationally hard to manipulate, then a reduction in computational complexity removes this kind of protection~\citep{fal-hem-hem-rot:j:shield}.
We also mention that~\citet{FitzsimmonsH16} consider the complexity of manipulative actions of the models of single-peakedness
discussion in Section~\ref{sec:other} as well as another related restriction, single-peaked preferences with outside
options~\citep{can:j:single-peaked-outside-option}.

Recognition algorithms for restricted preference domains are also indispensable for characterization results. The single-peaked domain~\citep{bal-hae:j:characterization-single-peaked}, the single-crossing domain~\citep{bre-che-woe:j:single-crossing}, and the group-separable domain~\citep{bal-hae:j:characterization-single-peaked} have all been characterized by a set of \emph{forbidden subprofiles}.
These characterizations are obtained by analyzing under which conditions recognition algorithms certify that a profile is \emph{not} contained in a preference domain.
The Unguided Algorithm, as presented in this paper, was used for characterizing preferences that are single-peaked on a circle~\citep{peters2016spoc}.
This was possible by relating the problem of recognizing preferences that are single-peaked on a circle to the problem of recognizing possibly single-peaked preferences.
Such characterizations do not always exist, e.g., the Euclidean domain cannot be characterized by a finite set of forbidden subprofiles~\citep{chen2017one}.

Our definition of possibly single-peaked preferences resembles that of a possible winner \citep{kon-lan:c:incomplete-prefs}: A candidate is a possible winner if---given an incomplete preference profile---there exists a compatible complete profile in which this candidate is a winner.
The possible winner problem is a central, well-studied problem in computational social choice~\citep{wal:c:uncertainty-in-preference-elicitation-aggregation,betzler2009multivariate,betzler2010towards,pini2011incompleteness,xia2011determining,Baumeister2012186,dery2014reaching}, see also the survey by \citet{brandt2015handbook-chapter10}.

The single-peaked domain is far from the only well-studied preference domain; many other domains have been considered from a computational viewpoint.
We refer to the survey of \citet{ELP-trends} for an overview.
Here, we briefly mention papers that build upon or are directly related to our work.
Another major preference restriction is the single-crossing domain, characterized by an ordering of voters.
\citet{elkind2015incompletecrossing} analyzed the recognition of incomplete single-crossing preferences. This work studies possibly single-crossing profiles (a notion analogously defined to our definition of possibly single-peaked profiles), but found that approaches like consecutive ones and \textsc{2-SAT} reductions---which we use here---yield weaker results in the single-crossing domain. In particular, fixing the underlying order appears to have a much weaker effect on the recognition problem.
Another notion is top-monotonicity \citep{bar-mor:j:top-monotonicity}, for which questions similar to ours have been studied.
The results for top-monotonicity resemble those for single-peaked preferences, although new techniques are required: The recognition problem for partial orders is NP-hard \citep{aziz2014testing} and solvable in polynomial time for weak orders~\citep{magiera2019recognizing}.
Furthermore, \citet{peters2016recognising} highlights the inherent complexity of detecting more-dimensional Euclidean preferences and discusses possible definitions for weak orders.

More broadly, recommender systems routinely have to deal with incomplete preferences and therefore this issue has received significant attention in the field. We refer to the following works for an overview \citep{schafer2007collaborative,de2010learning,Masthoff2015GroupRecommenderSystems}.

\subsection{Outline of the Paper}

In Section~\ref{sec:prel}, we provide basic definitions from social choice theory. This is followed by Sec\-tion~\ref{sec:sp}, in which we introduce our notion of possibly single-peaked preferences and study their basic properties.
In Section~\ref{sec:hardness}, we prove the computational hardness of recognizing possibly single-peaked preferences.
All algorithmic results are then presented in Section~\ref{sec:algo}; this section contains the most important contributions of this paper.
In Section~\ref{sec:other}, we discuss indifference, weak orders, and other models of single-peakedness.
A final discussion of our results, a note on an implementation of the consecutive ones approach, and suggestions for future research can be found in Section~\ref{sec:disc}. Some proof details (correctness proofs of some algorithms) are delegated to an appendix, Sections~\ref{sec:app:corr-guided}, and~\ref{sec:app:unguided}.

\section{Preliminaries}
\label{sec:prel}

In this paper, preferences are represented by different types of orders (see Figure~\ref{fig:zoo} for examples).
The most general type are partial orders:
a \emph{(strict) partial order} $P$ on a set $X$ is a binary relation on $X$ that
is transitive ($xPy$ and $yPz$ implies $xPz$ for all $x,y,z\in X$) and asymmetric (if $xPy$ then not $yPx$ for all $x,y\in X$).
We say that \emph{$y$ is ranked above $x$} if $xPy$ holds.
If for two elements $x,y\in X$ neither $xPy$ nor $yPx$ holds, we say that these two elements are \emph{incomparable}; we write $x\sim y$.
An element $x\in X$ is \emph{minimal} if there is no element $y\in X$ with $yPx$; maximal elements are defined analogously.

A \emph{(strict) weak order} is a partial order where the incomparability relation is transitive.
Weak orders have the same expressiveness as total preorders\footnote{Total preorders are binary relations that are transitive and complete ($xPy$ or $yPx$ for all $x,y\in X$).
For our algorithmic purposes, it is not relevant whether to use strict weak-orders or total preorders.
Our main reason to use strict weak orders as the standard definition is that total preorders are not necessarily partial orders (they are not asymmetric)
and hence do not fit in our hierarchy of orders.} (also known as nonstrict weak orders or preference orders).
Since a weak order can be considered a ranking with ties, we refer to $\sim$ also as ``indifference'' in this context
(see Section~\ref{sec:other}, where we discuss
several other models of single-peakedness for preferences with indifference).
Weak orders are also referred to as bucket orders (e.g. in \citeR{fagin2006comparing}); elements that tie are in the same ``bucket''.

A \emph{top order} is a weak order where incomparability appears only among minimal elements, i.e., if $a\sim b$ then both $a$ and $b$ are minimal elements in a top order.
The \emph{ranked} (i.e., nonminimal) elements of a top order $T$
are those that are not incomparable to any other candidate.
We would like to remark that top orders are also known as \emph{top lists} \citep{dwo-kum-nao-siv:c:rank-aggregation,fagin2003comparing} and as top-truncated votes \citep{Baumeister2012}.

A partial order with no incomparable elements is called \emph{total order}, or, equivalently, a total order is a complete ($xPy$ or $yPx$ for all $x,y\in X$) partial order.
Any partial order $P$ can be \emph{extended} to some total order $T$ such that $aPb$ implies $aTb$; $T$ is then a (not necessarily unique) \emph{extension} of $P$.

Finally, we define a \emph{local weak order} $P$ on a set $X$ to be a partial order on $X$ with the following property:
there exist sets $X_1,X_2$ with $X_1\cup X_2=X$ such that the elements in $X_1$ are incomparable to all other elements in $X$ and the profile $P$ restricted to $X_2$ is a weak order.
Intuitively, a local weak order is a weak order together with some isolated elements for which absolutely no information is available.

Throughout the paper, total orders are denoted by $\left\langle c_1 > c_2 > \ldots > c_k \right\rangle$; the brackets allow us to unambiguously  denote total orders consisting of one or even zero elements, i.e, we use $\left\langle \right\rangle$ to denote the empty order relation.
For top orders, we write $\left\langle c_1 > c_2 > \ldots > c_k > \bullet \right\rangle$ to denote a top order where $c_1, \ldots, c_k$ are ranked as stated and all other elements (usually all remaining candidates in $C$) are ranked last, i.e., are minimal elements.
We sometimes use set operators ($\cup,\cap,\setminus$)
on top orders with the intended meaning that we apply these operators to the corresponding sets of ranked candidates.
We describe weak orders with a notation similar to top orders and use $\sim$ to signify incomparability, e.g., $\left\langle c_1 > c_2 \sim c_3 > c_4 \right\rangle$.

\begin{figure}
\centering
{\renewcommand{\tabcolsep}{1.4em}
{
\begin{tabular}{ccccc}
	Total & Top & Weak & Local weak & Partial \\
	order & order & order & order & order \\
	\begin{tikzpicture}
	\tikzstyle{every node}=[circle, draw=black, fill=black,inner sep= 0.07cm];
	\tikzstyle{every path}=[thick]
	\node (5) at (0,5*0.7) {};
	\node (4) at (0,4*0.7) {};
	\node (3) at (0,3*0.7) {};
	\node (2) at (0,2*0.7) {};
	\node (1) at (0,1*0.7) {};
	\node (0) at (0,0*0.7) {};
	\draw (0) -- (1) -- (2)-- (3) -- (4) -- (5);
	\end{tikzpicture}
	& 
	\begin{tikzpicture}
	\tikzstyle{every node}=[circle, draw=black, fill=black,inner sep= 0.07cm];
	\tikzstyle{every path}=[thick]
	\node (5) at (0,5*0.7) {};
	\node (4) at (0,4*0.7) {};
	\node (3) at (0,3*0.7) {};
	\node (2) at (0,2*0.7) {};
	\node (1) at (0,1*0.7) {};
	\node (0a) at (0.4,0) {};
	\node (0b) at (0,0) {};
	\node (0c) at (-0.4,0) {};
	\draw (0a) -- (1) -- (2)-- (3) -- (4) -- (5);
	\draw (1) --(0b);
	\draw (1) --(0c);
	\end{tikzpicture}
	& 
	\begin{tikzpicture}
	\tikzstyle{every node}=[circle, draw=black, fill=black,inner sep= 0.07cm];
	\tikzstyle{every path}=[thick]
	\node (5a) at (-0.4,5*0.7) {};
	\node (5b) at (0.4,5*0.7) {};
	\node (4) at (0,4*0.7) {};
	\node (3a) at (-0.4,3*0.7) {};
	\node (3b) at (0.4,3*0.7) {};
	\node (3c) at (0,3*0.7) {};
	\node (2) at (0,2*0.7) {};
	\node (1) at (0,1*0.7) {};
	\node (0a) at (-0.4,0) {};
	\node (0b) at (0,0) {};
	\node (0c) at (0.4,0) {};
	\draw (0a) -- (1) -- (2)-- (3a) -- (4) -- (5a);
	\draw (1) --(0b);
	\draw (1) --(0c);
	\draw (5b) --(4)--(3b)--(2);
	\draw (4)--(3c)--(2);
	\end{tikzpicture} 
	& 
	\begin{tikzpicture}
	\tikzstyle{every node}=[circle, draw=black, fill=black,inner sep= 0.07cm];
	\tikzstyle{every path}=[thick]
	\node (5a) at (-0.4,5*0.7) {};
	\node (5b) at (0.4,5*0.7) {};
	\node (4) at (0,4*0.7) {};
	\node (4r) at (0.3,1.5*0.7) {};
	\node (4s) at (0.6,1.5*0.7) {};
	\node (3a) at (-0.4,3*0.7) {};
	\node (3b) at (0.4,3*0.7) {};
	\node (3c) at (0,3*0.7) {};
	\node (2) at (0,2*0.7) {};
	\node (1) at (0,1*0.7) {};
	\node (0a) at (-0.4,0) {};
	\node (0b) at (0,0) {};
	\node (0c) at (0.4,0) {};
	\draw (0a) -- (1) -- (2)-- (3a) -- (4) -- (5a);
	\draw (1) --(0b);
	\draw (1) --(0c);
	\draw (5b) --(4)--(3b)--(2);
	\draw (4)--(3c)--(2);
	\end{tikzpicture} 
	& 
	\begin{tikzpicture}
	\tikzstyle{every node}=[circle, draw=black, fill=black,inner sep= 0.07cm];
	\node (5a) at (-0.4,5*0.7) {};
	\node (5b) at (0.4,5*0.7) {};
	\node (4) at (0,4*0.7) {};
	\node (4r) at (0.4,2.3*0.7) {};
	\node (4s) at (0.4,1.3*0.7) {};
	\node (3a) at (-0.4,3*0.7) {};
	\node (3b) at (0.4,3*0.7) {};
	\node (3c) at (0,3*0.7) {};
	\node (2) at (0,2*0.7) {};
	\node (1) at (0,1*0.7) {};
	\node (0a) at (-0.4,0) {};
	\node (0b) at (0,0) {};
	\node (0c) at (0.4,0) {};
	\draw (0a) -- (1) -- (2)-- (3a) -- (4) -- (5a);
	\draw (1) --(0b);
	\draw (1) --(0c);
	\draw (5b) --(4)--(3b);
	\draw (4)--(3c)--(2);
	\draw (4r)--(4s);
	\end{tikzpicture} 
\end{tabular}
}
}
\caption{The order zoo: examples of different types of orders that are used to describe preferences.}
\label{fig:zoo}
\end{figure}
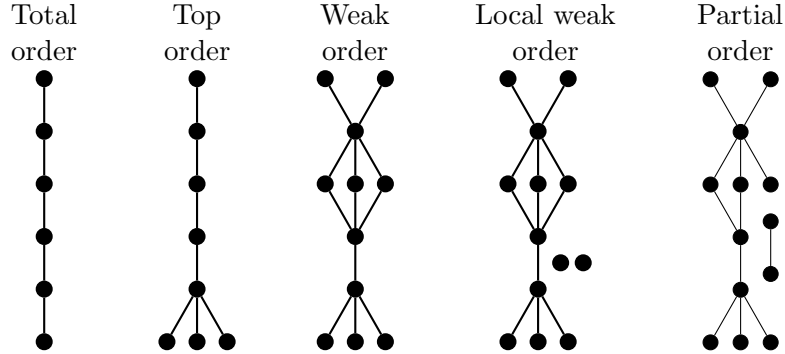

We would now like to address the usefulness of these types of orders for expressing preferences.
Total orders allow the voter to fully specify a strict ranking of options.
Given a large set of options, this might be unfeasible.
Partial orders, on the other hand, allow the voter to specify the relative strict order of any pair of options subject to transitivity, which can be viewed as a rationality constraint.
Thus they can be seen as a very general formalism for representing incomplete preferences.
They are compatible with total orders in the sense that partial orders can always be extended to total orders.
Weak orders are less general than partial orders but arise in many natural scenarios.
For example every real-valued utility function implies a weak order (candidates with the same utility tie, i.e., are incomparable in our sense).
Local weak orders correspond to partial real-valued utility functions and thus arise in scenarios where voters do not have knowledge about all candidates.
If the elicitation of preferences is costly, one might ask only for the most important (top ranked) options of each voter;
in such a case we obtain top orders.
Top orders also are the natural type of order for specifying preferences in some scoring rules.
We will further comment on scoring rules and top orders at the end of Section~\ref{sec:guided}.

Throughout this paper we use $C$ to denote the set of candidates or options.
Votes (or preference orders)
are considered to be either partial, local weak, weak, top, or total orders.
For a vote $V_i$, we use $x\succ_i y$ to denote that $x$ is ranked above $y$.
If there is only one vote under consideration, usually denoted by $V$, we omit the index and write $x\succ y$.
If $V_i$ is a weak or top order, we write $x\sim_i y$ if voter $i$ is indifferent between $x$ and $y$, $x\succsim_i y$ if voter $i$ is either indifferent or ranks $x$ above $y$.

A tuple $(V_1,\ldots,V_n)$ of votes is called a \emph{(preference) profile of $\{$partial orders, local weak orders, weak orders, top orders, total orders$\}$}, depending on the type of orders.
Given a vote $V$ and a set of candidates $C'\subseteq C$, we define $V[C']$ to be the order $V$ restricted to elements in $C'$.
Analogously, given a preference profile $\mathcal{P}=(V_1,\ldots,V_n)$, we define $\mathcal{P}[C']$ to be $(V_1[C'],\ldots,V_n[C'])$.
We denote the number of candidates with $m$ and the number of votes with $n$.

\section{Single-Peaked Profiles}
\label{sec:sp}
We start by giving a definition for single-peaked profiles of total orders and then extend this definition to partial orders.
A central concept is that of an \emph{axis}, which is a total order on $C$.
Let $A$ be an axis.
Throughout this paper we write $x\rhd y$ instead of $xAy$.
(Note that we use $\succ$ for votes and $\rhd$ for axes.)
We formally define single-peakedness for total orders below using triples of candidates, so-called {\em valleys}. Valleys
will be important for many of the results in our paper.
\begin{definition} %
Let $V$ be a partial order on $C$.
A vote $V$ \emph{contains a v-valley with respect to an axis $A$} if there exist $c_1,c_2,c_3\in C$ such that $c_1\rhd c_2\rhd c_3$, $c_1\succ c_2$ and $c_3\succ c_2$.
\end{definition}
The definition of v-valleys suffices to define single-peakedness for profiles of total orders.
\begin{definition}
A profile $\mathcal{P}$ of total orders is single-peaked with respect to $A$ if no vote $V\in\mathcal{P}$ contains a v-valley with respect to $A$ (and thus every vote has only a single ``peak''). A profile of total orders is single-peaked consistent (or simply, single-peaked) if there exists some axis $A$ such that $\mathcal{P}$ is single-peaked with respect to $A$.
\end{definition}
Note that this
is similar to how single-peakedness is defined in~\citep{bra-bri-hem-hem:j:sp2,ELP-trends}, and easily seen to be equivalent.
We now want to extend this definition to profiles of partial orders.
A natural way is to consider extensions of partial orders to total orders.
\begin{definition}
Let $\mathcal{P}=(V_1,\ldots,V_n)$ be a profile of partial orders.
The profile $\mathcal{P}$ is \emph{possibly single-peaked} with respect to an axis $A$ if for every $k\in \{1,\ldots,n\}$, $V_k$ can be extended to a total order $V_k'$ such that the profile of total orders $\mathcal{P'}=(V_1',\ldots,V_n')$ is single-peaked with respect to~$A$.
A profile of partial orders is \emph{possibly single-peaked consistent} if it is possibly single-peaked  with respect to some axis.
\end{definition}

While it is also conceivable to require that every extension is single-peaked (``necessarily'' instead of ``possibly''), this would yield an extremely restrictive definition, basically requiring that the profile of partial orders already contains ``almost'' total orders.
We consider this definition at the end of Section~\ref{sec:other}.
Apart from that, we focus our attention on the more general definition of possibly single-peaked profiles.

We now define an equivalent definition for possibly single-peakedness based on valleys. For this, we first need
the following definition of u-valleys.
\begin{definition} %
\label{def:valley}
Let $V$ be a partial order on $C$.
The vote $V$ \emph{contains a u-valley with respect to $A$} if there exist distinct elements $a,b,c,d\in C$ with $a\rhd  b \rhd  d$ and $a\succ b$ as well as $a \rhd  c \rhd  d$ and $d \succ c$.
\end{definition}
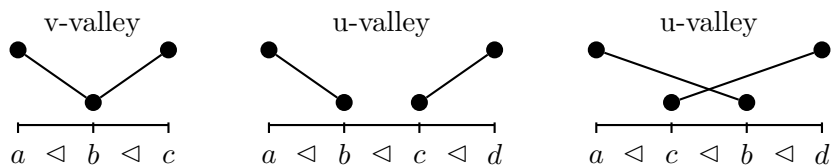
\begin{figure}
\centering
{
\renewcommand{\tabcolsep}{0.8em}
\begin{tabular}{cccc}
	v-valley & u-valley & u-valley \\
	\begin{tikzpicture}[x=1cm]
	\tikzstyle{point}=[circle, draw=black, fill=black,inner sep= 0.07cm];
	\tikzstyle{every path}=[thick]
	\node[point] (0) at (0,1*0.7+0.3) {};
	\node[point] (1) at (1,0*0.7+0.3) {};
	\node[point] (2) at (2,1*0.7+0.3) {};
	\draw (0) -- (1) -- (2);	
	\draw (0,0) -- coordinate (x axis mid) (2,0);
   	\draw (0,1pt) -- (0,-3pt)
		node[anchor=north] {\phantom{I}$a$\phantom{I}};
   	\draw (1,1pt) -- (1,-3pt)
		node[anchor=north] {\phantom{I}$b$\phantom{I}};	
   	\draw (2,1pt) -- (2,-3pt)
 	node[anchor=north] {\phantom{I}$c$\phantom{I}};
 	\foreach \x/\xtext in {0, 1}
		\node[below] at (\x+0.5,0) {$\strut\lhd$};  
	\end{tikzpicture}
	& 
	\begin{tikzpicture}[x=1cm]
	\tikzstyle{point}=[circle, draw=black, fill=black,inner sep= 0.07cm];
	\tikzstyle{every path}=[thick]
	\node[point] (0) at (0,1*0.7+0.3) {};
	\node[point] (1) at (1,0*0.7+0.3) {};
	\node[point] (3) at (2,0*0.7+0.3) {};
	\node[point] (2) at (3,1*0.7+0.3) {};
	\draw (0) -- (1);
	\draw (3) -- (2);	
	\draw (0,0) -- coordinate (x axis mid) (3,0);
   	\draw (0,1pt) -- (0,-3pt)
		node[anchor=north] {\phantom{I}$a$\phantom{I}};
   	\draw (1,1pt) -- (1,-3pt)
		node[anchor=north] {\phantom{I}$b$\phantom{I}};	
   	\draw (2,1pt) -- (2,-3pt)
 	node[anchor=north] {\phantom{I}$c$\phantom{I}};
   	\draw (3,1pt) -- (3,-3pt)
		node[anchor=north] {\phantom{I}$d$\phantom{I}};	
 	\foreach \x/\xtext in {0, 1,2}
		\node[below] at (\x+0.5,0) {$\strut\lhd$};  		
	\end{tikzpicture}
	& 
	\begin{tikzpicture}[x=1cm]
	\tikzstyle{point}=[circle, draw=black, fill=black,inner sep= 0.07cm];
	\tikzstyle{every path}=[thick]
	\node[point] (0) at (0,1*0.7+0.3) {};
	\node[point] (1) at (2,0*0.7+0.3) {};
	\node[point] (3) at (1,0*0.7+0.3) {};
	\node[point] (2) at (3,1*0.7+0.3) {};
	\draw (0) -- (1);
	\draw (3) -- (2);	
	\draw (0,0) -- coordinate (x axis mid) (3,0);
   	\draw (0,1pt) -- (0,-3pt)
		node[anchor=north] {\phantom{I}$a$\phantom{I}};
   	\draw (1,1pt) -- (1,-3pt)
		node[anchor=north] {\phantom{I}$c$\phantom{I}};	
   	\draw (2,1pt) -- (2,-3pt)
 	node[anchor=north] {\phantom{I}$b$\phantom{I}};
   	\draw (3,1pt) -- (3,-3pt)
		node[anchor=north] {\phantom{I}$d$\phantom{I}};	
 	\foreach \x/\xtext in {0, 1,2}
		\node[below] at (\x+0.5,0) {$\strut\lhd$};  				
	\end{tikzpicture}
\end{tabular}%
}%
\caption{Examples of v-valleys and u-valleys}\label{fig:valleys}
\end{figure}
In Figure~\ref{fig:valleys} a graphical comparison of v- and u-valleys is shown.
In the following, we see that these two types of valleys characterize possibly single-peaked incomplete profiles.
\begin{lemma}
\label{lem:singlepeaked-incomplete}
Let $\mathcal{P}=(V_1,\ldots,V_n)$ be a profile of partial orders and $A$ an axis.
The following two statements are equivalent.
\begin{enumerate}
\item[(i)] The profile $\mathcal{P}$ is possibly single-peaked with respect to $A$.
\item[(ii)] Every vote $V\in\mathcal{P}$ contains neither a u-valley nor a v-valley with respect to $A$.
\end{enumerate}
\end{lemma}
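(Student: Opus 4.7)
Both directions can be argued vote-by-vote, since both ``possibly single-peaked w.r.t.\ $A$'' and ``no u- or v-valley w.r.t.\ $A$'' are per-vote properties. For (i) $\Rightarrow$ (ii) I would argue by contraposition on a single vote $V$. A v-valley in $V$ is inherited by every total extension, immediately breaking single-peakedness. If instead $V$ contains a u-valley on distinct $a, b, c, d$ with $a \rhd b \rhd d$, $a \rhd c \rhd d$, $a \succ b$, and $d \succ c$, I split on the axis-order of $b$ and $c$. When $b \rhd c$, so the axis reads $a \rhd b \rhd c \rhd d$, any total extension $V'$ orients $\{b,c\}$ one way: $c \succ' b$ yields a v-valley on $(a,b,c)$, while $b \succ' c$ yields a v-valley on $(b,c,d)$. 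When $c \rhd b$ the argument is symmetric and additionally invokes transitivity of $V'$ to combine $a \succ' b$ and $d \succ' c$ with the chosen orientation of $\{b,c\}$, landing a v-valley on either $(a,c,b)$ or $(c,b,d)$.

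For (ii) $\Rightarrow$ (i) I would proceed by induction on the number $m$ of candidates, writing the axis as $A = c_1 \rhd \cdots \rhd c_m$. The engine is the claim that if $V$ has neither a u-valley nor a v-valley with respect to $A$, then at least one of the axis endpoints $c_1, c_m$ is minimal in $V$. Assuming this, the induction step is clean: WLOG $c_m$ is minimal, so I restrict $V$ to $C \setminus \{c_m\}$ with the truncated axis (which preserves the absence of u- and v-valleys), invoke the induction hypothesis to obtain a single-peaked total order on $C \setminus \{c_m\}$, and then rank $c_m$ strictly last. This is a valid extension of $V$ precisely because $c_m$ is minimal in $V$; single-peakedness is preserved because a v-valley in the completed order cannot have $c_m$ in the middle (it is the rightmost axis element), cannot have $c_m$ outer (it is ranked last), and cannot avoid $c_m$ entirely (that would contradict the inductive hypothesis applied to the restriction).

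To prove the endpoint claim, suppose for contradiction that both $c_1$ and $c_m$ are non-minimal, so $c_1 \succ x$ and $c_m \succ y$ for some $x, y \in C$. If $x = c_i$ and $y = c_j$ are both strictly interior on the axis, then $i = j$ yields a v-valley at $c_i$, while $i \neq j$ yields a u-valley on the quadruple $(c_1, c_i, c_j, c_m)$; the u-valley definition only requires $b, c$ to lie strictly between $a$ and $d$ on the axis, with no further ordering between them, so both $i < j$ and $j < i$ work. The endpoint cases $x = c_m$ and $y = c_1$ are reduced to the interior case by transitivity: they force $c_1 \succ c_m$ (respectively $c_m \succ c_1$), and combining with $c_m \succ y$ (respectively $c_1 \succ x$) pulls an interior candidate below both endpoints, producing a v-valley at that candidate. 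The degenerate sub-cases $x = c_1$, $y = c_m$, and $c_1 \succ c_m$ together with $c_m \succ c_1$ are all ruled out by asymmetry of $\succ$.

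The main obstacle I anticipate is precisely this case-bookkeeping for the endpoint claim: there are several sub-cases according to whether $x, y$ coincide with one another or with an endpoint, and in the endpoint cases one must invoke transitivity to expose a suitable interior candidate without collapsing the distinctness requirements needed to hit the u- or v-valley pattern. Once the endpoint claim is secured, both directions of the lemma assemble by the routine induction and the case split described above.
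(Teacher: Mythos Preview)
Your proposal is correct and follows essentially the same approach as the paper. For (i)$\Rightarrow$(ii) both argue that a u- or v-valley survives into every total extension; for (ii)$\Rightarrow$(i) the paper gives an explicit bottom-up construction that at each step picks the rightmost remaining axis element if minimal and otherwise the leftmost, whereas you phrase the same idea as induction on $m$ together with an explicit ``one endpoint is minimal'' claim---these are the same argument in different clothing.
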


\begin{proof}
To see that (i) implies (ii), note that if some $V_k$ contained a u-valley in the sense of Definition~\ref{def:valley} then every extension $V_k'$ would contain a v-valley.
More concretely, if $a,b,c,d\in C$ form a u-valley in $V_k$ then any extension of $V_k$ either contains a v-valley with respect to $A$ on the candidates $a,b,c$ or on $b,c,d$.
Furthermore, if $V_k$ contained a v-valley then so would every extension.

For the other direction, we show that a vote $V$ not containing a valley can be extended to a total order that is single-peaked with respect to $A$.
We recursively define its extension $V'$ starting with its last ranked candidate.
Let $V'(1)$ denote the last ranked candidate of $V'$, $V'(2)$ the second-to-last, etc.
For the definition we require two functions:
\begin{itemize}
\item $\textit{left}_A(X)$ is the smallest (leftmost) candidate in $X$ with respect to the axis $A$.
\item $\textit{right}_A(X)$ is the largest (rightmost) candidate in $X$ with respect to the axis $A$.
\end{itemize}
We now define for every $i\in \{1,\ldots, m\}$,
\begin{align*}
X_i &= C\setminus\{V'(1),\ldots,V'(i-1)\}\text{ and }\\
V'(i) &= \begin{cases} \textit{right}_A\left(X_i\right) & \mbox{if } \textit{right}_A\left(X_i\right)\text{ is a minimal element in } V[X_i],\\
\textit{left}_A\left(X_i\right) &\mbox{otherwise}. \end{cases}.
\end{align*}
This definition immediately yields that $V'$ is single-peaked with respect to $A$:
By always choosing one of the two outermost candidates on $A$ (that have not yet been chosen) for the next higher ranked candidate, valleys cannot arise.

It remains to show that $V'$ is indeed an extension of $V$, i.e., we have to show that for every pair of candidates $a,b\in C$, $a\succ b$ implies $a\succ' b$.
Towards a contradiction assume that $a\succ b$ and $b\succ' a$.
Let $i\in\{1,\ldots,m\}$ such that $V'(i)=a$.
We have to consider two cases: $a=\textit{left}_A(X_i)$ and $a=\textit{right}_A(X_i)$.

Let $a=\textit{left}_A(X_i)$ and $d=\textit{right}_A(X_i)$.
Since $V'(i)=a$, we know that there has to exist a $c\in X_i$ with $d\succ c$.
Observe that $a\rhd c\rhd d$ has to hold.
Furthermore, either $a\rhd b\rhd d$ or $b=d$ holds.
If $a\rhd b\rhd d$ holds then $a,b,c,d$ for a u-valley.
If $b=d$ then $a\rhd c\rhd b$, $a\succ b$ and $b\succ c$ holds: a v-valley.
Both cases contradict our assumption that $V$ does not contain valleys with respect to $A$.

Now, let $a=\textit{right}_A(X_i)$.
This immediately yields a contradiction to the definition of $V'(i)$, since $b\in X_i$, $a\succ b$ and hence $a$ is not a minimal element in $V[X_i]$.
\end{proof}

This lemma immediately yields a polynomial-time algorithm for checking whether an incomplete profile is possibly single-peaked with respect to a given axis:
\begin{proposition}
Verifying whether a profile of partial orders is single-peaked with respect to a given axis can be done in $\mathcal{O}(m^4\cdot n )$ time.
\label{prop:poly-with-given-axis}
\end{proposition}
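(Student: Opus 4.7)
The plan is to reduce the verification task directly to the characterization provided by Lemma~\ref{lem:singlepeaked-incomplete}: a profile of partial orders is possibly single-peaked with respect to $A$ if and only if no vote contains a u-valley or v-valley with respect to $A$. Since $A$ is given as part of the input, checking the absence of valleys is a purely combinatorial enumeration problem, and the claimed runtime follows from a coarse counting of candidate tuples.

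Concretely, I would proceed as follows. First, preprocess the axis $A$ so that the position of each candidate on $A$ can be looked up in constant time; this takes $\mathcal{O}(m)$ per axis. Next, for each voter $V_k$, check the absence of v-valleys by iterating over all ordered triples $(c_1,c_2,c_3)$ with $c_1 \rhd c_2 \rhd c_3$ and testing in constant time whether $c_1 \succ_k c_2$ and $c_3 \succ_k c_2$ both hold; there are $\mathcal{O}(m^3)$ such triples. Then check the absence of u-valleys by iterating over all ordered quadruples $(a,b,c,d)$ with $a \rhd b \rhd d$ and $a \rhd c \rhd d$ (with $b \ne c$), testing in constant time whether $a \succ_k b$ and $d \succ_k c$; there are $\mathcal{O}(m^4)$ such quadruples. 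If any valley is found for any voter, report that $\mathcal{P}$ is not possibly single-peaked with respect to $A$; otherwise, by Lemma~\ref{lem:singlepeaked-incomplete}, conclude that it is.

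Assuming the partial order $\succ_k$ is stored (or precomputed) as an $m \times m$ Boolean matrix giving $\mathcal{O}(1)$ comparison queries, the per-voter cost is dominated by the $\mathcal{O}(m^4)$ u-valley test, so the total running time is $\mathcal{O}(m^4 \cdot n)$. The matrices themselves can be precomputed in time $\mathcal{O}(m^2 \cdot n)$ by storing the transitive closure of each $V_k$, which does not affect the overall bound.

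There is really no significant obstacle here: Lemma~\ref{lem:singlepeaked-incomplete} does all of the conceptual work by reducing the semantic condition (``some simultaneous single-peaked extension exists'') to a purely local, finitely checkable forbidden-pattern condition on each individual vote. The only step worth stating carefully is the observation that both valley tests require only constant-time queries into the transitive partial order, justifying the $\mathcal{O}(m^4 \cdot n)$ bound.
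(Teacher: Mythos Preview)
Your proposal is correct and takes essentially the same approach as the paper: invoke Lemma~\ref{lem:singlepeaked-incomplete} and then brute-force check every quadruple of candidates per vote for a u- or v-valley. The paper's proof is a one-sentence version of exactly this argument, so your additional remarks about preprocessing and constant-time comparison queries simply make explicit what the paper leaves implicit.
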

\begin{proof}
For every quadruple of candidates and every vote, an algorithm has to check whether a u- or v-valley arises (by Lemma~\ref{lem:singlepeaked-incomplete}).
\end{proof}

Let $\mathcal{T}\in \{$partial order, local weak order, weak order, top order, total order$\}$ be a type of order.
The remainder of the paper is dedicated to a more challenging computational problem:
\cprob{\textsc{$\mathcal{T}$ Single-peaked Consistency}}{
A profile $\mathcal{P}$ of type $\mathcal{T}$ and a set of candidates $C$.}{
Is $\mathcal{P}$ possibly single-peaked consistent?}
Note that in contrast to the problem in Proposition~\ref{prop:poly-with-given-axis}, the input of this problem does not include an axis.
The \textsc{Total Order Single-peaked Consistency} problem is known to be solvable in polynomial time as witnessed by several algorithms. %
Historically, the first algorithm to solve this problem was due to \citet{bar-tri:j:stable-matching-from-psychological-model} and is based on the consecutive ones problem, which we will encounter in Section~\ref{sec:consones}.
This approach yields a runtime of $\calO(m^2 \cdot n)$, which was improved by direct, combinatorial algorithms first to $\calO(m \cdot n+m^2)$ by \citet{doignon1994polynomial} and finally to $\calO(m \cdot n)$ by \citet{esc-lan-ozt:c:single-peaked-consistency}.
In the next section, we show that a polynomial-time result is unlikely to exist for the case of partial orders and even local weak orders.

\section{Hardness Results}
\label{sec:hardness}

\begin{theorem}
\textsc{Local Weak Order Single-peaked Consistency} is \textnormal{\textsf{NP}}-complete.\label{thm:SPE-npc}
\end{theorem}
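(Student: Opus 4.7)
The plan is to establish NP-completeness via a reduction from a classical ordering problem, exploiting the flexibility of isolated candidates in local weak orders.

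Membership in \NP\ follows immediately: one guesses an axis $A$ on $C$ and, by Lemma~\ref{lem:singlepeaked-incomplete} together with Proposition~\ref{prop:poly-with-given-axis}, verifies in polynomial time that no vote contains a u-valley or v-valley with respect to $A$.

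For hardness, I would reduce from \textsc{Not-Betweenness}, the \NP-complete problem that asks, given a set $E$ and a family $T$ of ordered triples, for a total order on $E$ such that for no $(a,b,c) \in T$ does $b$ lie strictly between $a$ and $c$. Set $C := E$, and for each triple $(a,b,c) \in T$ add one vote $V_{(a,b,c)}$, namely the local weak order whose $X_2 = \{a,b,c\}$ carries the weak order $\langle a \sim c > b\rangle$ and whose $X_1 = C \setminus \{a,b,c\}$ consists of isolated candidates. Correctness then hinges on a per-vote valley analysis via Lemma~\ref{lem:singlepeaked-incomplete}: the only strict comparisons in $V_{(a,b,c)}$ are $a \succ b$ and $c \succ b$, so any v-valley $x \rhd y \rhd z$ must have $y = b$ and $\{x,z\} = \{a,c\}$, which happens precisely when $b$ is strictly between $a$ and $c$ on $A$; and any u-valley would require two distinct ``low'' candidates $\beta \neq \gamma$ both appearing as the smaller side of a strict comparison, but in our vote this role can only be played by $b$, so distinctness fails and no u-valley can arise. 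Hence the profile is possibly single-peaked with respect to $A$ if and only if $A$ is a valid \textsc{Not-Betweenness} solution.

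The main design step---more than an obstacle---is to encode exactly one betweenness-style constraint per vote without triggering additional valleys. The isolated set $X_1$ is essential: if one instead used the seemingly analogous weak order $\langle a \sim c > b \sim x_1 \sim x_2 \sim \cdots\rangle$ over all of $C$, every candidate outside $\{a,c\}$ would become a potential v-valley midpoint, forcing $a$ and $c$ to be adjacent on the axis. This much stronger constraint would not encode \textsc{Not-Betweenness}, consistent with the polynomial-time recognition algorithm for weak orders developed later in the paper.
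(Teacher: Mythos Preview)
Your proof is correct. The valley analysis is sound: the only strict comparisons in $V_{(a,b,c)}$ are $a\succ b$ and $c\succ b$, so the middle element of any v-valley must be $b$ with endpoints $\{a,c\}$, and any putative u-valley would force its two ``low'' elements to coincide with $b$, violating distinctness. Hence the profile is possibly single-peaked with respect to $A$ exactly when $A$ solves the \textsc{Not-Betweenness} instance.

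The paper proceeds differently: it reduces from \textsc{Betweenness} rather than \textsc{Not-Betweenness}. Since a single vote of the form $\langle x\sim z > y\rangle$ encodes the constraint ``$y$ is \emph{not} between $x$ and $z$'', the paper needs \emph{two} votes per triple $(a,b,c)$---one forbidding $c$ between $a$ and $b$, another forbidding $a$ between $b$ and $c$---so that only the desired placement ``$b$ between $a$ and $c$'' survives. Your approach exploits the observation that a single such vote already encodes a \textsc{Not-Betweenness} constraint directly, yielding a one-vote-per-triple reduction. The trade-off is that \textsc{Betweenness} is the more classical source problem (Opatrny, 1979), whereas \textsc{Not-Betweenness} requires a slightly less standard citation (e.g., Guttmann and Maucher, 2006); you should add such a reference. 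Structurally the two reductions are duals of one another---indeed, the paper's two-vote gadget is precisely the standard reduction from \textsc{Betweenness} to \textsc{Not-Betweenness} composed with your one-vote gadget.
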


\begin{proof}
We reduce from the \textsf{NP}-complete \textsc{Betweenness} problem~\citep{Opatrny-betweenness}.
A \textsc{Betweenness} instance consists of a finite set $S$ and a set $T$ containing (ordered) triples of distinct elements of $S$.
The decision problem asks whether there is a total order $L$ such that for every triple $(a,b,c)\in T$ we have either $a L b L c$ or $c L b L a$.
Intuitively, a triple $(a,b,c)\in T$ corresponds to the constraint that $b$ has to lie ``in between'' $a$ and $c$ on the total order $L$.

We construct a profile of local weak orders $\mathcal{P}$ with a candidate set $C=S$, i.e., we identify elements in $S$ with candidates in $C$.
The preference profile $\mathcal{P}$ consists of two votes for each triple $(a,b,c)$: the partial order $\{a\succ c, b\succ c\}$ and the partial order $\{b\succ a, c\succ a\}$.
These two votes form a valley on any axis with $c$ between $a$ and $b$ and on any axis with $a$ between $b$ and $c$.
Thus $b$ has to be between $a$ and $c$ on any single-peaked axis.
We are now going to show that $\mathcal{P}$ is possibly single-peaked consistent if and only if the \textsc{Betweenness} instance is a yes-instance.

$``\Rightarrow''$ Assume that there exists an extension of $\mathcal{P}$, $\mathcal{P}^\textit{ext}$, and an axis, $A$, such that $\mathcal{P}^\textit{ext}$ is single-peaked with respect to $A$.
By Lemma~\ref{lem:singlepeaked-incomplete} we know that this implies that no v-valleys exist.
Since for every triple $(a,b,c)\in T$ both the vote $\{a \succ c, b \succ c\}$ and $\{b\succ a, c\succ a\}$ are contained in $\mathcal{P}$, we have that neither $a \rhd  c \rhd  b$, $b \rhd  c \rhd  a$, $b \rhd  a \rhd  c$, nor $c \rhd  a \rhd  b$ can hold.
Consequently it has to hold that either $a \rhd  b \rhd  c$ or $c \rhd  b \rhd  a$ holds and thus $b$ is ``in between'' $a$ and $c$.

$``\Leftarrow''$
Assume that there exists a set $T$ such that all constraints in $T$ are satisfied.
It is easy to verify that $\mathcal{P}$ is possibly single-peaked with respect to $T$.
\end{proof}

\begin{corollary}
\textsc{Partial Order Single-peaked Consistency} is \textnormal{\textsf{NP}}-complete.
\label{cor:POSPC-npc}
\end{corollary}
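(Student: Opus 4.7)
The plan is to derive this corollary almost immediately from Theorem~\ref{thm:SPE-npc} by observing that local weak orders form a subclass of partial orders, so the \textsf{NP}-hardness reduction already constructed gives \textsf{NP}-hardness for the more general problem as well. I therefore need only (a) verify membership in \textsf{NP} for the partial order version, and (b) point out that the hardness construction in the proof of Theorem~\ref{thm:SPE-npc} is already a profile of partial orders.

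For \textsf{NP} membership, I would use a guess-and-check argument: a certificate consists of an axis $A$ on $C$ (a total order of size $m$); given such a certificate, one checks in polynomial time whether $\mathcal{P}$ is possibly single-peaked with respect to $A$ by appealing to Proposition~\ref{prop:poly-with-given-axis} (or directly to Lemma~\ref{lem:singlepeaked-incomplete}, scanning all quadruples of candidates in every vote for u- and v-valleys). This yields an $\mathcal{O}(m^4 \cdot n)$ verification step, so the problem lies in \textsf{NP}.

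For hardness, I would simply invoke the reduction in the proof of Theorem~\ref{thm:SPE-npc}: it constructs, for each triple $(a,b,c) \in T$, the two votes $\{a \succ c,\, b \succ c\}$ and $\{b \succ a,\, c \succ a\}$ on candidate set $C = S$. Each of these is a partial order, so the same reduction from \textsc{Betweenness} establishes \textsf{NP}-hardness of \textsc{Partial Order Single-peaked Consistency}. The reduction is polynomial-time computable and its correctness (the ``$\Rightarrow$''/``$\Leftarrow$'' argument via forbidden v-valleys) carries over verbatim, since enlarging the class of admissible votes from local weak orders to partial orders cannot affect the constructed instance or the correctness proof.

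The only thing that might look like an obstacle is making sure no additional verification is needed — but since the very votes produced in Theorem~\ref{thm:SPE-npc} are already partial orders (indeed they are a fortiori partial orders, being local weak orders), no new construction is required, and the corollary follows in one or two sentences.
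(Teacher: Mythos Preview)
Your proposal is correct and matches the paper's approach: the corollary is stated immediately after Theorem~\ref{thm:SPE-npc} without a separate proof, since local weak orders are a subclass of partial orders and hence the hardness reduction carries over directly, while \textsf{NP} membership follows from Proposition~\ref{prop:poly-with-given-axis}.
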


The proof of Theorem~\ref{thm:SPE-npc} uses preference profiles where the votes contain very little information:
only two pairs of candidates are comparable in each vote.
We know that determining single-peaked consistency is possible in polynomial time if every vote is a total order, i.e., all votes contain complete information.
What happens if one voter provides complete information?
Having a single completely specified vote has been found to be helpful in a related context: It allows for the efficient elicitation of single-peaked preferences using only few comparison queries \citep{con:j:single-peaked} and thus the communication complexity of preference elicitation is reduced.
However, in our case such a voter does not provide enough additional information for a decrease in (computational) complexity.

\begin{theorem}
The \textsc{Partial Order Single-peaked Consistency} problem is \textnormal{\textsf{NP}}-complete, even if the given preference profile contains a total order.
\label{thm:guidedNP}
\end{theorem}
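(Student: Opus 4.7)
The plan is to extend the \textsc{Betweenness} reduction of Theorem~\ref{thm:SPE-npc} so that the constructed profile contains a designated total order vote $V^*$. The central obstacle is that any single total order $V^*$ on $m$ candidates is single-peaked with respect to only $2^{m-1}$ axes --- its ``zipper unfoldings,'' obtained by successively placing each candidate of $V^*$ at one of the two current extremes of the already-placed block --- and the restriction of any such axis to any subset of candidates coincides with a zipper unfolding of $V^*$ restricted to that subset. Hence naively appending $V^*$ to the reduction of Theorem~\ref{thm:SPE-npc} can forbid every axis that would have encoded a \textsc{Betweenness} solution, and preserving hardness requires the auxiliary structure of the reduction to be carefully tailored to $V^*$'s zipper constraints.

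To circumvent this, I would enlarge the candidate set with a polynomial number of auxiliary candidates $D$ and introduce partial-order ``scaffolding'' gadgets in the style of Theorem~\ref{thm:SPE-npc} (each consisting of a small number of pairwise comparisons that jointly rule out unwanted orderings by Lemma~\ref{lem:singlepeaked-incomplete}) that pin the $D$-candidates to prescribed positions on every compatible axis. I would then design $V^*$ on $S \cup D$ by interleaving $S$- and $D$-candidates so that, modulo the scaffolding, the zipper unfoldings of $V^*$ admit exactly those axes in which each $S$-candidate has a genuine two-way choice of placement (left or right of its flanking $D$-candidates). The Betweenness gadgets of Theorem~\ref{thm:SPE-npc}, applied to the original triples in $T$, then carry over unchanged and encode the original \textsc{Betweenness} problem on $S$.

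The correctness proof would proceed in two directions in the usual way: $(\Leftarrow)$ from a \textsc{Betweenness} solution $L$, construct an explicit axis on $S \cup D$ by placing the $D$-candidates in their scaffolded positions and inserting $S$-candidates according to $L$, and then verify via Lemma~\ref{lem:singlepeaked-incomplete} that this axis admits no v-valley or u-valley for $V^*$ or for any partial-order gadget; $(\Rightarrow)$ from any axis making the profile possibly single-peaked, the scaffolding gadgets force the $D$-candidates to their intended positions, and the Theorem~\ref{thm:SPE-npc} gadgets force the $S$-restriction of the axis to be a \textsc{Betweenness} solution. The main technical difficulty lies in the explicit design of $V^*$ and the scaffolding: the auxiliary candidates must ``absorb'' the rigidity of $V^*$'s zipper structure (in which only the two current extremes can receive the next candidate at each step), so that the residual flexibility in the zipper of $V^*$ suffices to realize every \textsc{Betweenness} solution on $S$, and so that the scaffolding gadgets themselves are valley-free on $V^*$.
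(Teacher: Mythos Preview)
Your plan has a structural gap that the auxiliary candidates cannot bridge. Single-peakedness is closed under restriction: if the total order $V^*$ is single-peaked on an axis $A$, then $V^*[S]$ is single-peaked on $A[S]$, since any v-valley among $S$-candidates on $A[S]$ is already a v-valley on $A$. Consequently, no matter how you interleave $D$-candidates into $V^*$ or what scaffolding you impose on them, the $S$-restriction $A[S]$ of every compatible axis is forced to be one of the $2^{|S|-1}$ zipper unfoldings of the \emph{fixed} total order $V^*[S]$. Your Theorem~\ref{thm:SPE-npc} gadgets additionally force $A[S]$ to be a \textsc{Betweenness} solution. So correctness of the forward direction demands that, for every yes-instance, at least one \textsc{Betweenness} solution lies among the zipper unfoldings of $V^*[S]$. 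But $V^*[S]$ must be written down at reduction time without access to any solution, and you give no recipe for it; choosing $V^*[S]$ so that some unknown solution is among its zipper unfoldings is precisely the hard part. Concretely, whatever $V^*[S]$ you pick, its bottom element must sit at an extreme of every zipper unfolding, so a yes-instance whose every solution places that element strictly between two others is mapped to a profile that is not possibly single-peaked. The ``genuine two-way choice'' you identify is real, but it encodes a 2-partition of $S$, not an arbitrary linear order on $S$, and \textsc{Betweenness} asks for the latter.

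The paper resolves this by changing the source problem to \textsc{Set Splitting}, whose solutions \emph{are} 2-partitions and therefore match the zipper structure exactly. With the single guiding vote $\langle x \succ c_m \succ \cdots \succ c_1 \rangle$, every compatible axis places each $c_i$ either left or right of $x$, with the index order fixed on each side; for each 3-set $\{c_i,c_j,c_k\}\in Z$ with $i<j<k$ the two-comparison vote $\{c_i \succ c_j,\ x \succ c_k\}$ creates a u-valley precisely when all three land on the same side of $x$. No auxiliary candidates or scaffolding are needed.
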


\begin{proof}
We reduce from \textsc{Set Splitting}:
Let $X$ be a finite set.
Given a collection $Z$ of subsets of $X$, is there a partition of $X$ into two subsets $X_1$ and $X_2$ such that no subset of $Z$ is contained entirely in either $X_1$ or $X_2$?
This problem is \NP-complete even if all sets in $Z$ have cardinality three \citep{gar-joh:b:int}.

Let $X=\{c_1,\ldots,c_m\}$.
For the construction,  we identify the elements of $X$ with candidates and add an additional candidate $x$.
For each set $\{c_i,c_j,c_k\}\in Z$ with $i<j<k$ we introduce one vote: $\{c_i\succ c_j, x\succ c_k\}$.
In addition, we add the vote $x\succ c_m \succ \cdots \succ c_1$.
We claim that the resulting preference profile $\mathcal{P}$ is possibly single-peaked consistent if and only if $(X,Z)$ is a \textsc{Set Splitting} yes-instance.

Assume that $\mathcal{P}$ is possibly single-peaked with respect to an axis $A$.
We define $X_1$ to be the candidates on $A$ left of $x$ and $X_2$ those that are right of $x$.
We will show that there is no subset of $Z$ entirely contained in $X_1$ or $X_2$.
Towards a contradiction assume that $\{c_i,c_j,c_k\}\in Z$ with $i<j<k$ are contained in $X_1$.
Then it has to hold that, on $A$, $c_i,c_j,c_k$ are all left of $x$.
Furthermore, from the vote $\left\langle x\succ c_m \succ \cdots \succ c_1\right\rangle$ then it follows that the relative order on $A$ of  $x, c_i,c_j,c_k$ has to be $c_i\rhd c_j\rhd c_k\rhd x$.
However, then the vote $\{c_i\succ c_j, x\succ c_k\}$ contains a u-valley with respect to this order.
Assuming that $\{c_i,c_j,c_k\}\in Z$ are contained in $X_2$ leads to the same contradiction.
Thus, $X_1$ and $X_2$ indeed certify that $(X,Z)$ is a yes-instance.

For the other direction, assume that $(X,Z)$ is a yes-instance and that $X_1$ and $X_2$ are the corresponding partition.
Let an axis $A$ be defined as the elements in $X_1$ with indices in increasing order followed by $x$ followed by the elements in $X_2$ with indices in decreasing order.
We claim that $A$ is an axis for $\mathcal{P}$.
Clearly, the vote $\left\langle x\succ c_m \succ \cdots \succ c_1\right\rangle$ is single-peaked with respect to $A$.
Let us consider a vote $\{c_i\succ c_j, x\succ c_k\}$ with $i<j<k$.
Since $X_1$ and $X_2$ are a valid partition, at least one of $c_i, c_j, c_k$ has to be left of $x$ and another one right.
This rules out that a u-valley is formed and thus all votes are possibly single-peaked with respect to $A$.
\end{proof}

As we will see in the following section, these two hardness results establish the tractability frontier. 

\section{Algorithms}
\label{sec:algo}

In this section, we present several polynomial-time algorithms for recognizing possibly single-peaked profiles.

\subsection{The Consecutive Ones Approach}
\label{sec:consones}

Our first algorithm in this section solves the \textsc{Weak Order Single-peaked Consistency} problem in polynomial time. It uses a reduction to the problem of detecting the \emph{consecutive ones property} in a binary matrix, i.e., a matrix consisting of zeros and ones.
Such a matrix has the consecutive ones property if its columns can be permuted in such a way that in all rows the ones appear consecutively.
The corresponding decision problem is the following:
\cprob{\textsc{Consecutive Ones}}
{A binary matrix $M$.}
{Does $M$ possess the consecutive ones property, i.e., does there exist a permutation of the columns of $M$ such  that in each row all of the ones are consecutive?}

The consecutive ones property was originally defined by \citet{ful-gro:j:contiguous-ones} and shown solvable in polynomial time.
More specifically, they showed that given an $s\times t$ matrix this problem can be solved in $\bigO(s\cdot t^2)$.
\citet{boo-lue:j:consecutive-ones-property}
improved on this result by finding an $\bigO(s\cdot t)$ algorithm\footnote{More precisely, their algorithm has a runtime of $\bigO(s+ t+f)$, where $f$ is the number of ones in $M$. In our case, the matrices obtained from the reduction have $\Theta(s \cdot t)$ one entries and hence this algorithm has a runtime of $\bigO(s\cdot t)$ .} through the
development and use of the novel PQ-tree data structure.
With this data structure it is not only possible to find one valid permutation of columns (if it exists), but to compactly represent all possible column permutations that witness the consecutive ones property.
Subsequent work has improved these results in various ways \citep{meidanis1998consecutive,habib2000lex,hsu2002simple,mcconnell2004certifying,raffinot2011consecutive}.

\citet{bar-tri:j:stable-matching-from-psychological-model} were the first to relate the problem of recognizing single-peaked preferences to the consecutive ones problem.
We slightly modify their approach for profiles of total orders to be applicable to profiles of weak orders, and by that solve the \textsc{Weak Order Single-peaked Consistency} problem.
The reduction works as follows:

\begin{construction}\label{con:mat}
Let $\mathcal P=(V_1,\dots,V_n)$ be a profile of weak orders over candidate set $C$ with $\card{C}=m$.
For each $V_i$ we construct an $m \times m$ binary matrix $X_i$.
We assume that the rows and columns of $X_i$ are indexed by $C$.
For $a,b\in C$, the entry of $X_i$ is defined as:
\[X_i(a,b) = \begin{cases}0 &\text{if }a\succ_i b\\
1 &\text{if }b\succ_i a\text{ or }a\sim_i b\end{cases}.\]
Finally, the matrices $X_1,\dots,X_n$ are 
row-wise concatenated to obtain the
$m n \times m$ matrix~$X_\mathcal{P}$.
\end{construction}
\medskip

\begin{example}\label{ex:extsp} %
Consider the preference profile $\mathcal{P}=(V_1,V_2)$ with 
$V_1=\left\langle a \sim c \succ b \succ e \sim d \succ f\right\rangle$ and
$V_2=\left\langle a \succ b \succ c \succ e \sim d \succ f\right\rangle$.
We construct $X_1$ and $X_2$:

\[X_1 = 
   \kbordermatrix{
     & a & b & c & d & e & f\\
   a & 1 & 0 & 1 & 0 & 0 & 0\\
   b & 1 & 1 & 1 & 0 & 0 & 0\\
   c & 1 & 0 & 1 & 0 & 0 & 0\\
   d & 1 & 1 & 1 & 1 & 1 & 0\\
   e & 1 & 1 & 1 & 1 & 1 & 0\\
   f & 1 & 1 & 1 & 1 & 1 & 1\\
  }
\hspace{0.15\textwidth}
X_2 = 
   \kbordermatrix{ & a & b & c & d & e & f\\
   a & 1 & 0 & 0 & 0 & 0 & 0\\
   b & 1 & 1 & 0 & 0 & 0 & 0\\
   c & 1 & 1 & 1 & 0 & 0 & 0\\
   d & 1 & 1 & 1 & 1 & 1 & 0\\
   e & 1 & 1 & 1 & 1 & 1 & 0\\
   f & 1 & 1 & 1 & 1 & 1 & 1\\
}\]
We then row-wise concatenate $X_1$ and $X_2$ to construct $X_\mathcal{P}$.
\[ X_\mathcal{P} = 
   \kbordermatrix{ & a & b & c & d & e & f\\
    & 1 & 0 & 1 & 0 & 0 & 0\\
    & 1 & 1 & 1 & 0 & 0 & 0\\
    & 1 & 0 & 1 & 0 & 0 & 0\\
    & 1 & 1 & 1 & 1 & 1 & 0\\
    & 1 & 1 & 1 & 1 & 1 & 0\\
    & 1 & 1 & 1 & 1 & 1 & 1\\
    & 1 & 0 & 0 & 0 & 0 & 0\\
    & 1 & 1 & 0 & 0 & 0 & 0\\
    & 1 & 1 & 1 & 0 & 0 & 0\\
    & 1 & 1 & 1 & 1 & 1 & 0\\
    & 1 & 1 & 1 & 1 & 1 & 0\\
    & 1 & 1 & 1 & 1 & 1 & 1\\
}
\hspace{0.25\textwidth}
 X_\mathcal{P}' = 
   \kbordermatrix{ & b & a & c & d & e & f\\
    & 0 & 1 & 1 & 0 & 0 & 0\\
    & 1 & 1 & 1 & 0 & 0 & 0\\
    & 0 & 1 & 1 & 0 & 0 & 0\\
    & 1 & 1 & 1 & 1 & 1 & 0\\
    & 1 & 1 & 1 & 1 & 1 & 0\\
    & 1 & 1 & 1 & 1 & 1 & 1\\
    & 0 & 1 & 0 & 0 & 0 & 0\\
    & 1 & 1 & 0 & 0 & 0 & 0\\
    & 1 & 1 & 1 & 0 & 0 & 0\\
    & 1 & 1 & 1 & 1 & 1 & 0\\
    & 1 & 1 & 1 & 1 & 1 & 0\\
    & 1 & 1 & 1 & 1 & 1 & 1\\
    }\]
Observe that $X_\mathcal{P}$ has the consecutive ones property, as witnessed by $X_\mathcal{P}'$.
The corresponding permutation of columns ($bacdef$) directly corresponds to an ordering of candidates, and indeed
$\mathcal{P}$ is possibly single-peaked with respect to $b \rhd a \rhd c \rhd d \rhd e \rhd f$.
\end{example}

To prove the correctness of Construction~\ref{con:mat}, we first define a valley-based characterization of possibly single-peaked profiles of weak orders, analogous to Lemma~\ref{lem:singlepeaked-incomplete}.
We see that for profiles of weak orders, u-valleys are not relevant.

\begin{lemma}
\label{lem:extv}
Let $\mathcal{P}=(V_1,\ldots,V_n)$ be a profile of weak orders and $A$ an axis.
The following two statements are equivalent.
\begin{enumerate}
\item[(i)] The profile $\mathcal{P}$ is possibly single-peaked with respect to $A$.
\item[(ii)] Every vote $V\in\mathcal{P}$ does not contain a v-valley with respect to $A$.
\end{enumerate}
\end{lemma}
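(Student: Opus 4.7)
The plan is to deduce the equivalence from the analogous characterization for partial orders (Lemma~\ref{lem:singlepeaked-incomplete}). The direction (i) $\Rightarrow$ (ii) is immediate and identical to the corresponding direction in Lemma~\ref{lem:singlepeaked-incomplete}: a v-valley in $V$ consists of three comparabilities that persist in every extension, so if $V$ contained a v-valley with respect to $A$, no extension could be single-peaked with respect to $A$.

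For the converse (ii) $\Rightarrow$ (i), I would invoke Lemma~\ref{lem:singlepeaked-incomplete} and reduce the task to the following key claim: \emph{if a weak order $V$ contains a u-valley with respect to $A$, then $V$ already contains a v-valley with respect to $A$.} Once this claim is established, the absence of v-valleys in the weak orders $V_1,\ldots,V_n$ automatically yields the absence of u-valleys, and Lemma~\ref{lem:singlepeaked-incomplete} gives the desired extensions.

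To prove the claim, I would take a u-valley formed by distinct $a,b,c,d$ with $a \rhd b \rhd d$, $a \rhd c \rhd d$, $a \succ b$ and $d \succ c$, and split on the axis-position of $b$ relative to $c$ (both lie strictly between $a$ and $d$). In the case $b \rhd c$, I would inspect the preference relation between $b$ and $c$: if $b \succ c$ then $(b,c,d)$ is a v-valley, while if $c \succ b$ then $(a,b,c)$ is a v-valley. The symmetric case $c \rhd b$ is handled analogously using the triples $(c,b,d)$ and $(a,c,b)$. The only remaining possibility is $b \sim c$, which is where the weak-order hypothesis (transitivity of $\sim$) matters.

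The main obstacle is precisely this indifference case, and I expect to spend the bulk of the argument there. In the case $b \sim c$ with $b \rhd c$, transitivity of $\sim$ rules out $a \sim c$ (which would force $a \sim b$, contradicting $a \succ b$) and $c \succ a$ (which via $c \succ a \succ b$ would force $c \succ b$, contradicting $b \sim c$), leaving $a \succ c$; together with $d \succ c$ and $a \rhd c \rhd d$ this yields a v-valley on $(a,c,d)$. In the case $b \sim c$ with $c \rhd b$, a parallel argument applied to $d \succ c$ (ruling out $b \sim d$ and $b \succ d$ via transitivity of $\sim$ and of $\succ$) forces $d \succ b$, and together with $a \succ b$ and $a \rhd b \rhd d$ gives a v-valley on $(a,b,d)$. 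This exhausts all cases and completes the proof.
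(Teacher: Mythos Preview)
Your proof is correct and matches the paper's strategy: reduce to Lemma~\ref{lem:singlepeaked-incomplete} by showing that in a weak order every u-valley forces a v-valley. The paper streamlines the case analysis by never splitting on the axis position of $b$ versus $c$: since $a\rhd b\rhd d$ and $a\rhd c\rhd d$ hold in either case, one simply compares $b$ and $c$ in the preference order and locates the v-valley on $(a,c,d)$ (when $b\succ c$ or $b\sim c$, yielding $a\succ c$) or on $(a,b,d)$ (when $c\succ b$, yielding $d\succ b$), so your axis-position split and the auxiliary triples $(b,c,d)$, $(a,b,c)$, $(c,b,d)$, $(a,c,b)$ are correct but unnecessary.
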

\begin{proof}
That statement~(i) implies (ii) is a special case of Lemma~\ref{lem:singlepeaked-incomplete}.
For the other direction,
every u-valley on $a\rhd b\rhd c\rhd d$ (see~Figure~\ref{fig:valleys}) implies a v-valley either on $a\rhd b\rhd d$ or on $a\rhd c\rhd d$.
\end{proof}

Using Construction~\ref{con:mat} and Lemma~\ref{lem:extv}, we can now show that:

\begin{theorem}\label{thm:weaksp}
The \textsc{Weak Order Single-peaked Consistency} problem can be solved in $\bigO(m^2\cdot n)$ time.%
\end{theorem}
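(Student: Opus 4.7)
The plan is to establish that Construction~\ref{con:mat} yields a correct reduction to \textsc{Consecutive Ones} and then invoke a linear-time algorithm for that problem. More precisely, I will show that $X_\mathcal{P}$ has the consecutive ones property if and only if $\mathcal{P}$ is possibly single-peaked consistent, and moreover that any column permutation witnessing consecutive ones directly gives an axis witnessing possible single-peakedness (as the example before the theorem already suggests).

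The key observation is that the row of $X_i$ indexed by a candidate $a$ has a $1$ in column $b$ precisely when $b\succsim_i a$, i.e., each row encodes an upper contour set (which always contains $a$ itself, since $\succ_i$ is irreflexive). Hence a column permutation, identified with an axis $A$, yields consecutive ones if and only if every upper contour set of every vote forms an interval on $A$. Combining this with Lemma~\ref{lem:extv}, it suffices to prove the equivalence: vote $V_i$ contains a v-valley with respect to $A$ if and only if some upper contour set of $V_i$ is non-contiguous on $A$. For one direction, I will take a v-valley $c_1\rhd c_2\rhd c_3$ with $c_1\succ_i c_2$ and $c_3\succ_i c_2$, and then distinguish the cases $c_3\succsim_i c_1$ and $c_1\succ_i c_3$: in the former the upper contour set of $c_1$, and in the latter that of $c_3$, contains the two outer candidates but misses the middle one, hence is non-contiguous. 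For the converse, given $a$ whose upper contour set $U=\{b:b\succsim_i a\}$ fails to be contiguous on $A$, I will pick witnesses $b_1\in U$, $b_2\notin U$, $b_3\in U$ appearing in this order on $A$ and, using that weak orders have a transitive indifference relation (so $\succsim_i$ is transitive and composes with $\succ_i$ to give $\succ_i$), derive $b_1\succ_i b_2$ and $b_3\succ_i b_2$, producing a v-valley on $b_1\rhd b_2\rhd b_3$.

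Once this equivalence is established, the algorithm is immediate: build $X_\mathcal{P}$ in $\bigO(m^2\cdot n)$ time (one look-up per entry, assuming $V_i$ is preprocessed so that $a\succsim_i b$ is answerable in $\bigO(1)$), then run the Booth–Lueker consecutive ones algorithm on this $mn\times m$ binary matrix. Booth–Lueker runs in time $\bigO(s+t+f)$ on an $s\times t$ matrix with $f$ one-entries, so here it runs in $\bigO(mn+m+f)=\bigO(m^2\cdot n)$, since $f\le m^2n$. If the test succeeds, any permutation extracted from the PQ-tree is returned as an axis; otherwise $\mathcal{P}$ is reported as not possibly single-peaked consistent. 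The overall running time is $\bigO(m^2\cdot n)$.

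The main obstacle I expect is the careful case analysis in the correctness equivalence, especially handling ties: a row being non-contiguous need not directly come from two strict preferences, but rather from combinations of $\succ_i$ and $\sim_i$ that must be composed via transitivity to produce the strict comparisons required by the v-valley definition. Everything else—construction, invocation of Booth–Lueker, and runtime bookkeeping—is routine once the equivalence is in hand.
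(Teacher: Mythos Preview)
Your proposal is correct and follows essentially the same approach as the paper: both use Construction~\ref{con:mat} to reduce to \textsc{Consecutive Ones}, invoke Lemma~\ref{lem:extv} to reduce correctness to the equivalence ``v-valley $\Leftrightarrow$ some row has a $1\cdots 0\cdots 1$ pattern,'' and finish with the Booth--Lueker runtime bound. Your write-up is in fact more explicit than the paper's about this equivalence---the paper asserts the correspondence in one line, whereas your case split on $c_3\succsim_i c_1$ versus $c_1\succ_i c_3$ and your composition argument $b_j\succsim_i a\succ_i b_2\Rightarrow b_j\succ_i b_2$ make precise exactly where the weak-order assumption (transitivity of $\sim$) is used.
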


\begin{proof}
Let $\mathcal{P}$ be a preference profile of weak orders. We will show that $V$ is possibly single-peaked if and only if
the matrix $X_\mathcal{P}$, as obtained by Construction~\ref{con:mat}, has the consecutive ones property.
This proof closely follows the argument made by \citet{bar-tri:j:stable-matching-from-psychological-model} for total orders, but requires Lemma~\ref{lem:extv} to relate possible single-peaked weak order to the consecutive ones property.

If $\mathcal{P}$ is possibly single-peaked with respect to an axis $A$
then by Lemma~\ref{lem:extv} we know that
no preference order $V \in \mathcal{P}$ contains a v-valley
with respect to $A$.
When the columns of the matrix $X_\mathcal{P}$ are
permuted with respect to the axis $A$,
no row will contain the sequence
$\cdots 1 \cdots 0 \cdots 1 \cdots$.
since this corresponds to a preference order that strictly decreases
and then strictly increases along the axis $A$ (a v-valley).
Therefore $X$ has the consecutive ones property. 

For the other direction suppose that
$V$ is not possibly single-peaked. Then by Lemma~\ref{lem:extv} we know that for every
possible axis there exists a preference order $V \in \mathcal{P}$ such that $V$ contains a
v-valley with respect to that axis.
So every permutation of the columns of $X_\mathcal{P}$ will
correspond to an axis where some preference order has a v-valley.
As stated in the proof of the other direction, a v-valley
corresponds to a row containing the sequence
$\cdots 1 \cdots 0 \cdots 1 \cdots$, so clearly $X$ does not have the
consecutive ones property.

Since it takes $\bigO(m^2\cdot n)$ time to construct $X_\mathcal{P}$ and $\bigO(m^2\cdot n)$ time to solve the corresponding \textsc{Consecutive Ones} problem,
we can solve the \textsc{Weak Order Single-peaked Consistency} problem in $\bigO(m^2\cdot n)$ time.
\end{proof}

As mentioned earlier, the approach using PQ-trees by \citet{boo-lue:j:consecutive-ones-property} is able to compute all possible permutations that witness the consecutive ones property. Thus, Theorem~\ref{thm:weaksp} actually allows us to compute all possibly single-peaked axes. As these permutations are stored in a compact way, the runtime of $\bigO(m^2\cdot n)$ time holds even for a potentially exponential number of axes.

To sum up, the consecutive ones approach grants us a simple and quick solution for the \textsc{Weak Order Single-peaked Consistency} problem. The downside is that by relying on the full consecutive ones machinery, any practical implementation has to deal with the nontrivial PQ-tree data structure or related concepts \citep{meidanis1998consecutive,habib2000lex,mcconnell2004certifying}.
In the following two sections we present direct, combinatorial algorithms.

\subsection{The Guided Algorithm}
\label{sec:guided}
We now present a second polynomial-time algorithm for profiles of weak orders.
In contrast to consecutive ones approach of Section~\ref{sec:consones}, this algorithm requires an additional condition: The input profile must contain at least one total order, the  \emph{guiding vote}, to guide the placement of candidates on the axis.
Due to this extra information, we gain two major benefits:
First, the algorithm is conceptually simpler than the more complex algorithms for solving the consecutive ones problem.
This has clear benefits for implementing the algorithm, and is also necessary for the use in characterization proofs (see, e.g., the work of \citet{peters2016spoc} as discussed in Section~\ref{sec:related}).
Second, this algorithm is faster. We achieve a runtime of $\bigO(m\cdot n)$ compared to the runtime of $\bigO(m^2\cdot n)$ by the consecutive ones approach.

\begin{theorem}
If the profile contains a total order,  the \textsc{Weak Order Single-peaked Consistency} problem can be solved in $\bigO(m\cdot n)$ time.%
\label{thm:guided-weak}%
\end{theorem}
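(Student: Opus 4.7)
The plan is to exploit the structure imposed by the guiding total order $T = \langle t_1 > t_2 > \cdots > t_m \rangle$: on any single-peaked axis $A$ consistent with $T$, the candidate $t_1$ must be the peak of $T$ on $A$ and, more generally, $\{t_1,\ldots,t_i\}$ forms a contiguous segment of $A$ containing $t_1$ for every $i$. Thus once $t_1,\ldots,t_{i-1}$ have been placed on a partial axis with leftmost element $L$ and rightmost element $R$, the only two candidate positions for $t_i$ are the immediate left of $L$ or the immediate right of $R$. The algorithm maintains such a partial segment and, for each $i\geq 2$, decides which end $t_i$ extends.

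The feasibility test invokes Lemma~\ref{lem:extv}: under the invariant that the current partial axis is v-valley-free for every vote, any new v-valley introduced by extending with $t_i$ must use $t_i$ as the extreme element of its triple. A short case analysis then shows that extending on the left creates a v-valley in vote $V$ iff $t_i \succ_V L$ and $L$ is not a maximum of $V$ among the placed candidates; an analogous characterization holds on the right. To evaluate this predicate in $O(1)$ per vote, I would maintain for each vote $V$ a representative peak candidate $p_V$ of $V$ on the placed set, so that the condition ``$L$ is a $V$-maximum'' becomes $L \succsim_V p_V$. When a candidate is placed, updating $p_V$ and the two endpoint-maximum flags takes constant time per vote, since only the new endpoint can dethrone the current peak. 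The algorithm returns \emph{no} if some vote forbids left while another forbids right at the same step, and otherwise places $t_i$ on a permitted side and updates all per-vote bookkeeping. This gives $O(n)$ work per step and $O(m\cdot n)$ total.

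Soundness of a \emph{no}-answer is immediate from the contiguity observation. For completeness, I would prove by induction on $i$ that whenever $\mathcal{P}$ is possibly single-peaked consistent, the partial axis after step $i$ can be extended to a valid global axis. When only one side is permissive at step $i$, that side is forced in every completion, so the inductive step is automatic. When both sides are permissive, one needs a swap argument: starting from any valid axis extending the previous state, one reflects an appropriate portion of it around the current segment to align with the algorithm's choice, and verifies that v-valley-freeness survives the reflection. This last step is the main obstacle, since the choice between two permissive sides could in principle constrain later placements. The key observation making the swap work is that all future feasibility predicates depend only on the placed set and on the identities of the current endpoints, so swapping the roles of ``left'' and ``right'' around the segment preserves the per-vote constraints for all later steps; formalising this invariant will likely be the technical heart of the proof (plausibly delegated to the appendix referenced in the paper's outline).
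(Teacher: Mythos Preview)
Your high-level plan---use the guiding total order to grow a contiguous segment and decide left/right for each new candidate---matches the paper's Guided Algorithm in spirit (the paper processes the guiding vote bottom-up and fills the axis outside-in, which is the exact dual of your top-down, inside-out scheme). However, your feasibility test is strictly weaker than what is needed, and as a consequence your swap argument fails.

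The concrete gap is this: you test only whether placing $t_i$ creates a v-valley among the \emph{already placed} candidates. The paper's conditions (R1), (R2), (L1), (L2) additionally compare $c_i$ against the \emph{unplaced} set $C_{>i}$; this look-ahead is precisely what makes an arbitrary choice safe when both sides pass the test. Without it, ``both sides permissive'' can occur even when one side is a dead end. Take $T=\langle a>b>c>d\rangle$ and $V=\langle d>b>a>c\rangle$. The profile is possibly single-peaked (axis $c\rhd a\rhd b\rhd d$). Your algorithm places $a$, then may place $b$ on the right (both sides are trivially permissive), obtaining $\langle a,b\rangle$. Now placing $c$ on the right yields $\langle a,b,c\rangle$, which contains no v-valley in either vote, so your test declares ``right'' permissive. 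But from $\langle a,b,c\rangle$, placing $d$ on the left gives a v-valley $d\rhd a\rhd b$ in $V$, and on the right a v-valley $a\rhd c\rhd d$ in $V$; the algorithm reports \nsp incorrectly. Had you checked $c$ against the still-unplaced $d$, you would have seen that $a\succ_V c$ and $d\succ_V c$ already rules out putting $c$ right.

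This also shows why your reflection argument cannot be rescued as stated. With $S_2=\langle a,b\rangle$ and the valid axis $A^*=c\rhd a\rhd b\rhd d$, reflecting around $S_2$ to move $c$ to the right gives $d\rhd a\rhd b\rhd c$, which has the v-valley $d\rhd a\rhd b$ in $V$. Your ``key observation'' that future predicates depend only on the placed set and endpoints is true of \emph{your} predicates, but the inductive invariant you need is the existence of a valid \emph{global} extension, and that is not preserved by the swap. The fix is exactly what the paper does: strengthen the left/right test to also compare the new candidate against extrema of the unplaced set (maintained in amortized $O(1)$ per vote), after which the arbitrary-choice argument goes through via the case analysis in the appendix rather than a reflection.
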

Theorem~\ref{thm:guided-weak} is based on Algorithm~\ref{alg:untie}, which we refer to as the \emph{Guided Algorithm}.
Without loss of generality, we assume that the guiding vote is $\left\langle c_m \succ c_{m-1} \succ \cdots \succ c_1 \right\rangle$, i.e., we number the candidates based on the guiding vote.
The requirement that this guiding vote is part of the input can be weakened considerably.
It suffices that a guiding vote is implicitly contained in the profile.
The following procedure finds such an implicit guiding vote, if one exists:
Look for some vote with a unique last ranked candidate.
This candidate is ranked last in the guiding vote.
Remove this candidate from the profile and repeat this step to obtain the second-to-last element in the guiding vote, etc.
If at some point no vote has a unique last-ranked candidate, no implicit guiding vote exist.
As in each step there might be more than one last-ranked candidate to choose from, such a guiding vote is not unique.
However, it is straightforward to prove that adding any implicit guiding vote obtained in this way does not change whether the profile is possibly single-peaked.
This procedure can be implemented in $\bigO(m\cdot n)$ time.

\begin{example}\label{ex:guiding}
Let us consider the profile $\mathcal{P}=(V_1,V_2,V_3)$ with
$V_1=\langle a\succ b \succ c \sim d \succ e \rangle$,
$V_2=\langle a\sim b \sim c \succ d \succ e \rangle$, and
$V_3=\langle e\sim d \succ b \succ c \succ a \rangle$.
This profile clearly does not contain a total order. However, a guiding vote
can be found implicitly: 
Candidate $e$ is (uniquely) ranked last in $V_1$, so we remove this candidate.
In this reduced profile, $V_2$ ranks $d$ last.
We again remove this candidate and see that $V_1$ now equals $\langle a\succ b \succ c\rangle$.
Thus, we can use $\langle a\succ b \succ c \succ d \succ e \rangle$ as our
guiding order.
\end{example}

The algorithm has a simple structure:
The lowest ranked candidate in the guiding vote, $c_1$, is placed on the rightmost position of the axis (this choice is arbitrary.)
Starting with the second lowest ranked candidate in the guiding vote, $c_2$, the candidates are successively placed on the axis---either at the leftmost or rightmost still-available position.
The lists $A_L$ and $A_R$ correspond to the left-hand and right-hand side of the axis under construction.
For each candidate, we test whether it can be placed on the right-hand side or left-hand side without creating a v-valley; u-valleys can be ignored by Lemma~\ref{lem:extv}.
If only one of these options is viable, the candidate is placed accordingly.
If both left and right are possible, we place the candidate arbitrarily right.
If neither is possible, the preference profile is not single-peaked.

Testing whether a vote $V_k$ imposes restrictions on the placement of a candidate is achieved by four conditions.
These conditions distinguish four categories of candidates:
candidates in $A_R$, candidates in $A_L$, candidates that have not yet been placed ($C_{>i}=\{c_{i+1},\ldots,c_m\}$), and the candidate that is currently under consideration ($c_i$).
We are only checking for valleys that include $c_i$.
This gives rise to the following four conditions:
\eqref{eq:cond1} and \eqref{eq:cond2} test whether placing $c_i$ on the right-hand side leads to valleys,
\eqref{eq:cond3} and \eqref{eq:cond4} do the same for the left-hand side.
Figure~\ref{fig:cond} displays a graphical representation.
Note that we do not verify whether a v-valley arises with $a_{\ell}\succ c_i$ and $a_r\succ c_i$, where $a_{\ell}\in A_L$ and $a_r\in A_R$; such a valley would have already be detected at an earlier stage of the algorithm (cf.~correctness proof).

\begin{figure}
\centering
\includegraphics[scale=1]{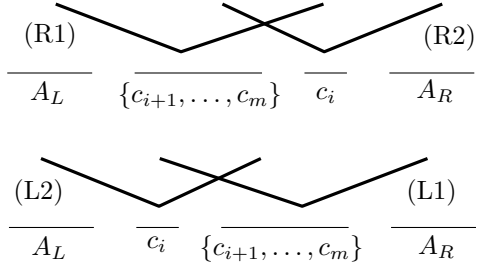}
\caption{Graphical representation of the conditions testing whether $c_i$ can be placed on the right-hand side (\ref{eq:cond1}, \ref{eq:cond2}) or on the left-hand side (\ref{eq:cond3}, \ref{eq:cond4})}
\label{fig:cond}
\end{figure}

Since we are working with weak orders, we do not have to consider every candidate triple possibly fulfilling these conditions but have to check only maximal or minimal candidates.
More specifically, checking whether there is a candidate $c\in A_L$ and $c'\in C_{>i}$ with $c\succ c'$ is equivalent to whether any maximal element in $A_L$ is preferred to some minimal element in $C_{>i}$ (subject to $\succ$).
For $k\in[n]$, let $\min_k(X)$ denote a function that picks an element in $X$ that is minimal with respect to $\succ_k$.
The function $\max_k(X)$ is defined analogously.
Now, we can formally define the four conditions:
\begin{align}
\tag{R1}\label{eq:cond1}c_i \succ_k \min_k(C_{>i}) &\text{\quad and \quad}
\max_k(A_L) \succ_k \min_k(C_{>i})  \\
\tag{R2}\label{eq:cond2}\max_k(C_{>i}) \succ_k c_i &\text{\quad and \quad} \max_k(A_R)\succ_k c_i\\
\tag{L1}\label{eq:cond3}c_i \succ_k \min_k(C_{>i}) &\text{\quad and \quad} \max_k(A_R)\succ_k \min_k(C_{>i}) \\
\tag{L2}\label{eq:cond4}\max_k(C_{>i}) \succ_k c_i &\text{\quad and \quad} \max_k(A_L)\succ_k c_i
\end{align}

Using these four definitions, we can give a succinct description of the Guided Algorithm,  (Algorithm~\ref{alg:untie}).
For a correctness proof of Algorithm~\ref{alg:untie}, we refer the reader to the appendix, Section~\ref{sec:app:corr-guided}.

{\renewcommand{\baselinestretch}{1.05}
\IncMargin{0.60em}
\begin{algorithm}
\Indm
\smallskip
  \KwIn{A set of candidates $C$, a preference profile of weak orders $\mathcal{P}=(V_1,\ldots V_n)$ including a guiding vote $c_m \succ c_{m-1} \succ \cdots \succ c_1$.}
  \KwOut{An axis $A$ or \nsp.}
\smallskip
\Indp
  \DontPrintSemicolon
  $A_L \leftarrow \left\langle\right\rangle$,
  $A_R \leftarrow \left\langle c_1\right\rangle$\;
  \For{$i \leftarrow 2\ldots m$} {
  $\textit{right} \leftarrow\true$; $\textit{left} \leftarrow\true$\;
  \For{$k \leftarrow 2\ldots n$} {
	  \If{condition \eqref{eq:cond1} or \eqref{eq:cond2} holds}	{  
	  	  $\textit{right} \leftarrow\false$\;
	  }
	  \If{condition \eqref{eq:cond3} or \eqref{eq:cond4} holds}	{  
	  	  $\textit{left} \leftarrow\false$\;
	  }
  }
  \eIf{$\textit{right} =\true$} {
    $A_R\leftarrow \left\langle c_i \rhd  A_R \right\rangle$\;
  } {
    \eIf{$\textit{left} =\true$} {
      $A_L\leftarrow \left\langle A_L \rhd  c_i \right\rangle$\;
  } {
    \Return{\nsp}
  }
  }
  }
  \Return{$A_L\rhd A_R$}\medskip
\caption{The Guided Algorithm}
\label{alg:untie}
\end{algorithm}
\DecMargin{0.6em}
} %

\addtocounter{example}{-1}
\begin{example}[{\bf continued}]
We apply the Guided Algorithm to
$\mathcal{P}=(V_1,V_2,V_3)$ with
$V_1=\langle a\succ b \succ c \sim d \succ e \rangle$,
$V_2=\langle a\sim b \sim c \succ d \succ e \rangle$, and
$V_3=\langle e\sim d \succ b \succ c \succ a \rangle$.
We use the implicitly found guiding order   $\langle a\succ b \succ c \succ d \succ e \rangle$
and thus start with $A_L = \left\langle\right\rangle$ and $A_R = \left\langle e\right\rangle$.
We now want to place $d$ on the axis.
None of the four conditions is satisfied for voters $V_1$ and $V_2$, but $V_3$ satisfies condition (L1).
Thus, $d$ has to be placed right, i.e., $A_R\leftarrow \left\langle d \rhd e\right\rangle$.
We continue with the third-lowest candidate on the guiding order: $c$.
Here, we encounter a problem: when considering $V_3$, we see that $c$ satisfies both (R2) and (L1). Thus, the algorithm cannot place $c$ and returns $\nsp$.
Using other implicitly given guiding orders, e.g., $\langle d \succ b \succ c \succ e \succ a \rangle$, would have yielded the same outcome.
\end{example}

Theorem~\ref{thm:guided-weak} claims that the Guided Algorithm requires $\bigO(m\cdot n)$ time.
Note that this is only possible if the conditions (\ref{eq:cond1}, \ref{eq:cond2}, \ref{eq:cond3}, \ref{eq:cond4}) can be checked in constant time.
Thus, the minima and maxima have to be computable in constant time.
For $\max_k(A_L)$ and $\max_k(A_R)$ this is easily possible by storing and updating these two values:
if $c_{i}$ is placed left, we update $\max_k(A_L)$ in case $c_{i}$ is the new maximum (with respect to $\succ_k$); if $c_{i}$ is placed right, we proceed analogously $\max_k(A_R)$.
For computing a minimal value of $C_{>i}$, observe that the set $C_{>i}$ becomes smaller with increasing $i$.
Thus, a minimal value of $C_{>i}$ might disappear at some point and a new (larger) value has to be chosen.
The new minimum is the smallest element (with respect to $\succ_k$) in $C_{>i}$ that is at least as large as the old minimum.
If we maintain pointers to the minimum elements, the amortized cost of this update procedure is $\bigO(1)$.
A maximal value of $C_{>i}$ can be found analogously.

\subsection{The Unguided Algorithm}
\label{sec:unguided}

Our next algorithm is applicable to top orders. As it is not dependent on a guiding vote, we refer to it as the Unguided Algorithm.
The Unguided Algorithm has a runtime of $\bigO(m^2\cdot n)$, the same as achieved by the consecutive ones approach. 
We see that from the perspective of worst-case complexity the Unguided Algorithm is inferior to the consecutive ones approach: while having the same worst-case runtime, it is applicable only to a smaller domain (top orders vs.\ weak orders).
However, a point made in favor of the Guided Algorithm also applies here: The main strength of the Unguided Algorithm is its relative simplicity. It can be seen as a solution method for certain consecutive ones instances and---due to this specialization---does not require the full power of consecutive ones machinery. It is thus easier to implement and can be used in characterization proofs.\footnote{Preliminary work with Dominik Peters indicates that the Unguided Algorithm can be used to characterize the single-peaked-or-caved domain: A total order $A$ is single-peaked-or-caved with respect to $A$ if either $T$ or its reverse is single-peaked with respect to $A$. This domain is clearly more general than the single-peaked domain, but less general than the single-peaked-on-a-circle domain \citep{peters2016spoc}.}

In the description of the algorithm we assume that the input preference profile is \emph{connected}.
Let us consider a simple graph with vertices corresponding to candidates.
We connect two vertices with an edge whenever the corresponding two candidates are both ranked in some top order.
A profile of top orders is called connected if this graph has only one connected component.
This assumption does not limit the applicability:
if two or more connected components exist in this graph, we can use the algorithm for each component (i.e., its respective candidates and votes) and concatenate the resulting axes in arbitrary order.
\IncMargin{0.60em}
{\renewcommand{\baselinestretch}{1.05}
\begin{algorithm}[t]
  \SetKwFunction{GuidedSP}{GuidedSP}
  \SetKwFunction{IntersectingVote}{IntersectingVote}
  \SetKwInOut{Input}{input}
  \SetKwInOut{Output}{output}
\smallskip
\Indm
  \KwIn{A set of candidates $C$ and a connected preference profile of top orders $\mathcal{P}=(V_1,\ldots V_n)$.}
  \KwOut{An axis $A$ or \textit{not\_single\_peaked}.}
\Indp
  \smallskip
  \DontPrintSemicolon
  \ForEach{$c_\text{start}\in C$\nllabel{line:cstart}}{
  	$A\leftarrow \left\langle c_\text{start} \right\rangle $\;
  	\For{$i \leftarrow 1 \ldots m$\nllabel{line:fori}} {
	  	\ForEach{vote $V\in\mathcal{P}$ that has $a_i$ as its top-ranked candidate\nllabel{line:foreachvotewithpeaki}} {
			\If{$A\oplus V = \ $\incompatible\nllabel{line:oplus}}{
				Continue with next $c_\text{start}\in C$ in line~\ref{line:cstart}.\;
			} 
			\lElse{
				$A \leftarrow A \oplus V$\nllabel{line:end:foreachvotewithpeaki}
			}
		} %
		\If{$\card{A} = i$ and $i<m$\nllabel{line:nomorepeakinA}} {
			$V \leftarrow $\IntersectingVote{$A$}\nllabel{line:intersectingvote}\;
			\If{$a_i\notin V$\nllabel{line:ainotinV1}}
			{Continue with next $c_\text{start}\in C$ in line~\ref{line:cstart}.\nllabel{line:ainotinV2}}
			Let $x$ be a new candidate, distinct from those in $C$.\;
			$C'\leftarrow \{c\in V \mid c \succ a_i\}\cup\{a_i,x\}$\;
			\For{$k \leftarrow 1 \ldots n$\nllabel{line:fork}} {
				$V_k'\leftarrow$ \textnormal{\texttt{RepTop}}$(V_k,C\setminus(A\cup C'),x)$\;
			}
			$\mathcal{P}'\leftarrow (V_1',\dots,V_n')[C']$\nllabel{line:end:fork}\;
		  	$A'\leftarrow $\ axis returned by Guided Alg.\ with input $(\mathcal{P}',V[C'],a_i,x)$\;
		  	\If{$A'=\,$\nsp} {
			  	Continue with next $c_\text{start}\in C$ in line~\ref{line:cstart}.\;
	  		}
	  		\lElse{
	  			$A\leftarrow  A \rhd  A'[C'\setminus\{x\}]$\nllabel{line:end:nomorepeakinA}
		  	}
		}
  	}
  	\Return{$A$}\; %
  }
  \Return{\nsp}
  \bigskip
\caption{The Unguided Algorithm}\label{alg:unguided}
\end{algorithm}
} %
\DecMargin{0.6em}

The Unguided Algorithm (Algorithm~\ref{alg:unguided}) works as follows:
First, we choose a candidate $c_\text{start}$ which is going to be the leftmost candidate on the axis $A$.
Since we have no guiding vote, each candidate might be placed at the leftmost position.
Hence we loop over all candidates (line~\ref{line:cstart}).
The corresponding axis under construction is $A = \left\langle c_\text{start} \right\rangle $.
We now aim to complete this axis by adding candidates to the right in such a way that all votes are single-peaked with respect to this axis.
To this end we employ the loop in line~\ref{line:fori}.
In this loop (variable $i$) we infer from the already placed candidate $a_i$ (the $i$-th candidate on $A$ from left) the candidate $a_{i+1}$ (or even more candidates farther to the right), or infer that $A$ cannot be extended to a single-peaked axis and thus try another start candidate.

Lines~\ref{line:foreachvotewithpeaki} to~\ref{line:end:foreachvotewithpeaki} are based on the following observation:
Let us assume that at a certain point $A=\left\langle c_1 \rhd  c_2 \rhd  c_3  \right\rangle$ and $V = \left\langle c_3 \succ c_2 \succ c_4 \succ c_5  \succ \bullet \right\rangle  \in \mathcal{P}$.
Since $c_3$, the peak of $V$, is already contained in $A$, there is only one compatible extension of $A$: $\left\langle c_1 \rhd  c_2 \rhd  c_3 \rhd  c_4 \rhd  c_5  \right\rangle$.
We formalize this extension operation with the $\oplus$ operator:
\begin{definition}
Let $A$ be an incomplete axis and $V$ a top order.
Furthermore, let $V[C\setminus A]=\left\langle c'_1 \succ c'_2 \succ \ldots \succ c'_j \succ \bullet\right\rangle$.
We define $A\oplus V = \left\langle A \rhd  c'_1 \rhd  c'_2 \rhd  \ldots \rhd  c'_j \right\rangle$ if $V$ is possibly single-peaked with respect to this axis and $A\oplus V = \incompatible$ otherwise.\label{def:oplus}
\end{definition}
The loop in line~\ref{line:foreachvotewithpeaki} considers all votes $V$ that have candidate $a_i$ as their top-ranked candidate (i.e., their maximal element).
If $A\oplus V = \incompatible$ then $A$ cannot be extended to a complete, single-peaked axis and we consider the next $c_\text{start}\in C$ in line~\ref{line:cstart}.
Otherwise, we obtain a new incomplete axis $A \leftarrow A\oplus V$.

It might be the case that the $({i+1})$st %
candidate on $A$ has not yet been determined after these steps.
The lines~\ref{line:nomorepeakinA} to~\ref{line:end:nomorepeakinA} deal with this case.
As the input profile is connected, there has to be at least one vote that ranks both a candidate on $A$ and a candidate that has not been placed yet.
The procedure \texttt{IntersectingVote} in line~\ref{line:intersectingvote} returns such a vote $V$ with $A\cap V\neq \emptyset$ and $V\setminus A\neq\emptyset$.
This procedure can be efficiently precomputed in such a way that it requires only $\bigO(m)$ time to provide an answer; details can be found in the proof of Theorem~\ref{thm:unguided}.

Let $V$ be a vote returned by \texttt{IntersectingVote}.
It holds that $V$'s top-ranked candidate is not placed on $A$ yet:
If its top-ranked candidate were contained in $A$, then $V$ would have been already considered in the first part of the algorithm (lines~\ref{line:foreachvotewithpeaki} to \ref{line:end:foreachvotewithpeaki}).
If $V$ does not contain $a_i$ (and thus $a_i$ is ranked last in $V$), $A$ cannot be extended to a single-peaked axis (lines~\ref{line:ainotinV1} and~\ref{line:ainotinV2}).
Now,
we employ the Guided Algorithm of Section~\ref{sec:guided} to find a further extension of $A$.
The main idea is to use $V$ as a guiding vote and find an axis for the candidates in $\{c\in V \mid c \succ a_i\}$.
In principle, this axis can be found independently of the existing axis $A$.
However, the leftmost and rightmost candidates have to be chosen with regard to ``external'' considerations:
The leftmost candidate has to be $a_i$, otherwise $A$ and the newly obtain partial axis $A'$ could not be combined.
For the rightmost candidate, we have to consider votes with candidates placed on the axis in future steps.
The following example illustrates the issue.
\begin{example}\label{ex:unguided}
Let $A=\left\langle a  \right\rangle$, $V_1 = \left\langle b \succ c \succ a \succ \bullet\right\rangle$ and $V_2 = \left\langle c \succ d \succ \bullet\right\rangle$.
The vote $V_1$ intersects $A$ and hence $C'=\{a,b,c\}$.
We employ the Guided Algorithm and might obtain $A'=\left\langle a \rhd  c \rhd  b  \right\rangle$.\footnote{Whether we obtain this axis or $\left\langle a \rhd  b \rhd  c  \right\rangle$ depends on whether the algorithm prefers placing candidates to the left or to the right if both choices are possible.}
Now observe that $A\oplus A' = A'$ can no longer be extended in a way that it is single-peaked for $V_2$.
This would have been possible if $c$ had been chosen as the rightmost candidate in $A'$.
\end{example}
As we can see from this example, we sometimes have to ``force'' the rightmost candidate in~$A'$.
We do this by adding an additional candidate $x$ to every vote (line~\ref{line:fork} to~\ref{line:end:fork}).
Hence, our candidate set under consideration is at this point $C'\leftarrow \{c\in V \mid c \succ a_i\}\cup\{a_i,x\}$.
The new candidate $x$ is now placed in each vote at the position of the highest-ranked candidate among those not contained in $A \cup C'$.
This is done by the \textnormal{\texttt{RepTop}} function:
\textnormal{\texttt{RepTop}}$(V,D,x)$ finds the one candidate in vote $V$ that is the highest-ranked among the candidates in $D$ and replaces it with candidate $x$.
By forcing this element $x$ to be the rightmost candidate, we ensure that $A'$ is chosen with consideration to all votes with ranked candidates not in $C'$.

\addtocounter{example}{-1}
\begin{example}[{\bf continued}] We apply \textnormal{\texttt{RepTop}}$(V,D,x)$ to the votes $V_1$ and $V_2$ with candidate sets $C'=\{a,b,c,x\}$ and $D=\{d\}$.
We obtain the votes $V'_1[C']=\left\langle b \succ c \succ a \succ x\right\rangle$ and $V_2'[C'] = \left\langle c \succ x \succ \bullet \right\rangle$.
Now, the Guided Algorithm can only return the axis $\left\langle a  \rhd b \rhd  c \rhd  x \right\rangle$.
\end{example}

We obtain a profile $\mathcal{P}'$ on the candidate set $C'$.
We now run the Guided Algorithm with input $(\mathcal{P}',V[C'],A',a_i,x)$, i.e., we employ the Guided Algorithm for the profile $\mathcal{P}'$ and guiding vote $V[C']$.
Furthermore, we require that the leftmost candidate on the axis is $a_i$ and the rightmost is $x$.
The Guided Algorithm either returns \nsp or an axis $A'$.
If it returns \nsp, the next  $c_\text{start}\in C$ is considered.
Otherwise, we continue with the extended axis $A \leftarrow A\oplus A'[C'\setminus\{x\}]$, i.e., candidate $x$ is not placed on $A$.

\newcommand{\thmunguidedtext}{The \textsc{Top Order Single-peaked Consistency} problem can be solved in {$\bigO(m^2\cdot n)$} time.}
\begin{theorem}
\thmunguidedtext\label{thm:unguided}
\end{theorem}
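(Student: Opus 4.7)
The plan is to establish both correctness and the claimed runtime bound by induction on the axis under construction, together with an amortized accounting of the work done inside each iteration of the outermost loop.

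For correctness, I would proceed in three parts. First, I would show that the operator $\oplus$ in Definition~\ref{def:oplus} correctly captures a \emph{forced} extension: if $A$ is a prefix of a single-peaked axis and $V\in\mathcal{P}$ has its peak already on $A$, then by Lemma~\ref{lem:singlepeaked-incomplete} (adapted to top orders; note that top orders only admit incomparable minimal elements, so only v- and u-valleys involving ranked candidates matter) the non-peak ranked candidates of $V$ outside $A$ must appear to the right of $A$ in the order given by $V$. Hence $A\oplus V$ is the unique extension consistent with~$V$, so any deviation must be rejected. Second, I would address the lines involving \texttt{IntersectingVote}: once no further votes with peak in $A$ remain, an intersecting vote $V$ must exist by connectedness. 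Its top candidate is \emph{outside} $A$ (otherwise it would have been handled earlier), and if $V$ does not rank $a_i$ at all, then the candidates in $V\setminus A$ cannot be placed consistently to the right of $a_i$ without creating a valley, justifying the rejection in lines~\ref{line:ainotinV1}--\ref{line:ainotinV2}. Third, I would justify the $x$-placeholder trick: the sub-profile $\mathcal{P}'$ restricted to $C'=\{c\in V\mid c\succ a_i\}\cup\{a_i,x\}$ with guiding order $V[C']$, leftmost candidate $a_i$, and rightmost candidate $x$, captures precisely the constraints on the segment between $a_i$ and the next ``external'' candidate; by \textnormal{\texttt{RepTop}}, every vote with ranked candidates outside $A\cup C'$ contributes its topmost such candidate as $x$, so forcing $x$ rightmost is equivalent to requiring that every future axis extension remains compatible with these votes. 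Invoking Theorem~\ref{thm:guided-weak} (more precisely, its underlying Guided Algorithm, extended to handle a forced leftmost \emph{and} rightmost candidate), we obtain a valid partial axis $A'$ or correctly certify \nsp. Finally, I would verify the completeness direction: if $\mathcal{P}$ is possibly single-peaked on some axis $A^\star$, then choosing $c_\text{start}$ to be the leftmost candidate on $A^\star$ makes every step of the algorithm succeed, by induction on $i$ using the uniqueness properties above.

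For the runtime, I would fix one iteration of the outer loop (one choice of $c_\text{start}$) and argue that the total work inside is $\bigO(m\cdot n)$. The key observation is that each vote $V$ is consumed at most once by the $\oplus$ loop (when its peak $a_i$ is first placed on $A$), contributing work proportional to the number of ranked candidates of $V$ that it appends to $A$; summed over all iterations of $i$, this accounts for at most $\bigO(m\cdot n)$ time. For the calls to the Guided Algorithm, the sub-candidate sets $C'_1, C'_2,\dots$ constructed across the iterations of $i$ pairwise intersect only in the single ``pivot'' candidate $a_i$, so $\sum_k |C'_k| = \bigO(m)$; since each call costs $\bigO(|C'_k|\cdot n)$ by Theorem~\ref{thm:guided-weak}, their total cost is $\bigO(m\cdot n)$. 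The routine \texttt{IntersectingVote} is made constant-time-per-query (modulo $\bigO(m)$ preprocessing per outer iteration) by maintaining, for each candidate already placed, a pointer into a list of votes ranking that candidate together with at least one unplaced candidate; updating these pointers as candidates are moved from $C\setminus A$ to $A$ has amortized $\bigO(1)$ cost. With $m$ choices of $c_\text{start}$, the overall runtime is $\bigO(m^2\cdot n)$.

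The main obstacle I expect is the correctness of the $x$-placeholder construction, specifically showing that restricting attention to the single topmost external candidate per vote is \emph{sufficient} to encode all future rightmost-placement constraints. The delicate point is that a vote $V_k$ might rank several candidates outside $A\cup C'$, yet we only promote the topmost one of these to the role of $x$; one has to argue that any valley this would create in the final axis is either already witnessed in $\mathcal{P}'$ after \textnormal{\texttt{RepTop}}, or will be detected when the corresponding external candidates are themselves placed in a later iteration of the outer loop over $i$. This requires a careful case analysis on the relative positions (with respect to $A^\star$) of $a_i$, the candidates in $C'\setminus\{x\}$, and the candidates represented by~$x$, together with a reuse of the valley characterization of Lemma~\ref{lem:singlepeaked-incomplete}; the full verification is best deferred to an appendix (Section~\ref{sec:app:unguided}).
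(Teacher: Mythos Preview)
Your plan follows the same decomposition as the paper: correctness via (i) the $\oplus$ operator giving the unique forced extension once a vote's peak is on $A$, (ii) connectedness guaranteeing an intersecting vote, and (iii) the $x$-placeholder reducing the next segment to a Guided-Algorithm instance; runtime via an amortized $\bigO(m\cdot n)$ bound per choice of $c_\text{start}$, using that the sets $C'$ across calls to the Guided Algorithm are essentially disjoint. The one notable divergence is your implementation of \texttt{IntersectingVote}. The paper does not maintain dynamic pointers; instead it precomputes, once and for all, for each candidate $c$ the at most two votes whose up-sets $\{c':c'\succ c\}$ are $\subseteq$-maximal. This costs $\bigO(m^2\cdot n)$ up front but buys an early rejection test (three such maximal sets, or two with non-empty intersection, already certify non-single-peakedness) and makes each query answerable by inspecting only those two stored votes for the current rightmost candidate $a_i$. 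Your pointer scheme is plausible, but the stated ``$\bigO(m)$ preprocessing per outer iteration'' is too small---you must at least touch the votes, so $\bigO(m\cdot n)$ per outer iteration is the right order (still within budget). Finally, you are right that the $x$-placeholder step is the delicate one; be aware, though, that the paper's own appendix proof is itself only a one-line justification at this point (``this is guaranteed by the $x$ element''), so the careful case analysis you sketch goes beyond what the published proof provides.
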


We refer the reader to the appendix, Section~\ref{sec:app:unguided}, for the correctness proof of Algorithm~\ref{alg:unguided} and its runtime calculations.

\begin{example}
Let us illustrate the Unguided Algorithm with a full example. Let $\mathcal{P}=(V_1$, $V_2$, $V_3$, $V_4)$ with
$V_1 = \left\langle b \succ c \succ a \succ \bullet\right\rangle$ (as in Example~\ref{ex:unguided}),
 $V_2 = \left\langle c \succ d \succ \bullet\right\rangle$  (also as in Example~\ref{ex:unguided}),
$V_3 = \left\langle f \succ g \succ h \succ e \succ a \succ\bullet\right\rangle$, and
$V_4=\langle h\succ g \succ f \succ \bullet \rangle$.
We assume that the algorithm starts with $c_\text{start}=h$, as this choice leads to a successful run.
We have exactly one vote with $h$ has its top-ranked candidate: $V_4$.
Thus, $A\leftarrow A\oplus V_4 = \langle h \rangle \oplus \langle h\succ  g \succ f \succ \bullet \rangle = \langle h\rhd  g \rhd f \rangle$.
We now look for votes with $f$ as their top-ranked choice; there is again only one: $V_3$.
As before, $A\leftarrow A\oplus V_3 = \langle h\rhd  g \rhd f \rangle \oplus \left\langle f \succ g \succ h \succ e \succ a \succ\bullet\right\rangle =\langle h\rhd  g \rhd f \rhd e \rhd a\rangle$. This is by Definition~\ref{def:oplus}, since $V_3$ is possibly single-peaked with respect to $\langle h\rhd  g \rhd f \rhd e \rhd a\rangle$.
We now have reached the situation described in Example~\ref{ex:unguided}.
Thus, using the Guided Algorithm, we complete the axis to $A=\langle h\rhd  g \rhd f \rhd e \rhd a\rhd b \rhd c\rangle$ and see that $\mathcal{P}=(V_1,V_2,V_3,V_4)$ is possibly single-peaked with respect to this so found axis.
\end{example}

\subsection{A 2-SAT-Based Algorithm}
\label{sec:2sat}

Theorem~\ref{thm:guidedNP}, Theorem~\ref{thm:weaksp}, and  Theorem~\ref{thm:guided-weak} leave open the case of profiles of local weak orders which contain at least one total order.
Here, we show that this case is also polynomial-time solvable.

\begin{theorem}
If the given profile contains a total order, the \textsc{Local Weak Order Single-peaked Consistency} problem can be solved in $\mathcal{O}(n\cdot m^3)$ time.
\label{thm:2sat}
\end{theorem}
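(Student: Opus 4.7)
The plan is to reduce \textsc{Local Weak Order Single-peaked Consistency} to 2-SAT by reusing the axis parametrization underlying the Guided Algorithm. After relabeling so that the guiding total order reads $\left\langle c_m \succ c_{m-1} \succ \cdots \succ c_1 \right\rangle$, I introduce a Boolean variable $x_i$ for each $i \in \{2,\ldots,m\}$, with the intended meaning ``$c_i$ is placed on the left side of the axis at step $i$''. The first structural observation, which follows by an easy induction on the placement step, is that for any two indices $i < j$ one has $c_i \rhd c_j$ on the resulting axis if and only if $x_i$ is \true: at step $i$ the candidate $c_i$ occupies the outermost still-free slot on its chosen side, and every later-placed candidate (including $c_j$) is forced interior to $c_i$ on that side or lands on the opposite side. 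Consequently every pairwise axis constraint ``$c_\alpha \rhd c_\beta$'' reduces to a single literal on $x_{\min(\alpha,\beta)}$.

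The critical reduction is that, unlike for general partial orders, the ``no u-valley'' requirement from Lemma~\ref{lem:singlepeaked-incomplete} is redundant in the local weak order setting. A u-valley on $a \rhd b \rhd c \rhd d$ requires $a \succ b$ and $d \succ c$, which forces all four candidates to lie in the weak-order part $X_2$ of the vote (isolated candidates in $X_1$ are incomparable to every other candidate). Restricted to $X_2$ the vote is a strict weak order, so comparing $a$ with $c$ gives the three cases $a \succ c$, $c \succ a$, or $a \sim c$, each of which, using transitivity of $\sim$ through $\succ$ in a weak order, produces a genuine v-valley on one of the sub-triples $(a,c,d)$, $(a,b,d)$, or $(a,b,c)$, whose axis sub-order is inherited from $a \rhd b \rhd c \rhd d$. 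Hence being possibly single-peaked is equivalent to the absence of v-valleys alone.

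It remains to encode v-valley avoidance in 2-CNF. For each vote $V_k$ and each ordered triple $(c_\alpha, c_\beta, c_\gamma)$ of distinct candidates with $c_\alpha \succ_k c_\beta$ and $c_\gamma \succ_k c_\beta$, I add the clause that negates ``$c_\alpha \rhd c_\beta$ and $c_\beta \rhd c_\gamma$''. Using the pairwise identity, this conjunction becomes a conjunction of at most two literals on $x_{\min(\alpha,\beta)}$ and $x_{\min(\beta,\gamma)}$: in the two sub-cases where $\beta$ is the smallest of the three indices, both literals refer to $x_\beta$ with opposite polarities, so the forbidden configuration is unreachable and no clause is needed; in the remaining four sub-cases the negation is a genuine 2-clause. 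Correctness then holds in both directions: any satisfying assignment yields, via the Guided Algorithm interpretation, an axis on which no vote contains a v-valley (hence possibly single-peaked by the previous paragraph), and any possibly single-peaked axis can be translated back into a satisfying assignment by reading off, for each $c_i$, whether it lies to the left of all candidates $c_j$ with $j > i$ (a well-defined dichotomy because the guiding total order is single-peaked with respect to such an axis, forcing all higher-indexed candidates to lie on one fixed side of $c_i$).

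For the running time, each of the $n$ votes contributes $\mathcal{O}(m^3)$ candidate triples, so the formula has $\mathcal{O}(m)$ variables and $\mathcal{O}(n\cdot m^3)$ clauses and can be constructed within the same time bound; standard linear-time 2-SAT solving then yields the $\mathcal{O}(n\cdot m^3)$ bound overall. I expect the main obstacle to be the u-valley-to-v-valley reduction: it must be carried out carefully, exploiting transitivity of $\sim$ in the weak order restricted to $X_2$, and it is precisely the feature that separates local weak orders from the arbitrary partial orders of Theorem~\ref{thm:guidedNP}, reconciling the present positive result with that earlier hardness result.
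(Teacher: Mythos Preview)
Your proposal is correct and takes a genuinely different route from the paper. The paper's encoding uses $\Theta(m^2)$ Boolean variables $ab$, one per ordered pair of candidates, together with XOR clauses $(ab\vee ba)\wedge(\neg ab\vee\neg ba)$; the bulk of the correctness argument (Lemma~\ref{lem:2sat-corr2}) then goes into proving that the relation $\{(a,b):ab=\true\}$ is \emph{transitive}, and this is exactly where the total order in the profile is exploited via a case split on which of three candidates it ranks last. Your encoding instead uses only $m-1$ variables $x_i$, tied to the Guided Algorithm's left/right choices, so the output is a total order by construction and no transitivity argument is needed; the total order in the profile is used earlier and more structurally, to guarantee that the $2^{m-1}$ axes you parametrize already cover (up to reversal) every axis for which the profile could be possibly single-peaked. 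A further point in your favor is that you make explicit the reduction of u-valleys to v-valleys for local weak orders, which is needed in both proofs but is left implicit in the paper's Lemma~\ref{lem:2sat-corr2} (where only v-valleys are checked). What the paper's approach buys is independence from the Guided Algorithm machinery; what yours buys is a smaller variable set and a cleaner correctness proof. Two minor points to tighten: handle the second axis ordering $a\rhd c\rhd b\rhd d$ in the u-valley reduction (your case split on $a$ versus $c$ assumed $a\rhd b\rhd c\rhd d$; the other ordering goes through by the same transitivity chain but lands on $(a,b,d)$ rather than $(a,b,c)$), and note that triples involving $c_1$ degenerate to unit clauses or tautologies since $c_1$ is fixed rightmost.
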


We encode a \textsc{Local Weak Order Single-peaked Consistency} instance in a \textsc{2-SAT} instance.
The \textsc{2-SAT} problem asks whether a Boolean formula in conjunctive
normal form (a conjunction of disjunctions of literals) where each clause has
size two (e.g., $(a\vee b)\wedge (\neg a \vee c)$) is satisfiable.
Solving \textsc{2-SAT} requires only linear time~\citep{aspvalllinear-time1979}.
The Boolean variables in our instance correspond to pairs of candidates, i.e., for each $a,b\in C$ we have a variable $ab$.
The intended meaning of these variables is that $ab=\true$ if and only if $a$ is left of $b$ on the axis.
Now, for each vote $V$ and triple $a,b,c\in C$, if $a\succ b$ and $c\succ b$ ($a,b,c$ may form a v-valley), we add the clauses 
\begin{align}
&(ba\vee cb)\text{\quad and}\label{eq:2sat-first-1}\\
&(ab\vee bc)\label{eq:2sat-first-2}
\end{align}
 to the \textsc{2-SAT} instance.
These clauses correspond the requirement that $b$ must not be placed between $a$ and $c$.
Finally, we add for each pair of variables $a,b$ the clauses 
\begin{align}
(ab\vee ba)\wedge(\neg ab\vee \neg ba),\label{eq:2sat-second}
\end{align}
corresponding to an exclusive or operator.
Solving the \textsc{2-SAT} instance either yields the information that the instance is not satisfiable or a true/false assignment to the variables.
In the first case, the profile is not single-peaked (as shown in Lemma~\ref{lem:2sat-corr1}).
In the second case, we obtain a relation $A = \{(a,b): ab=\true\}$ which is a total order and our desired axis (as shown in Lemma~\ref{lem:2sat-corr2}).
Since the instance contains at most $\mathcal{O}(n\cdot m^3)$ clauses, we obtain the stated runtime.

\begin{lemma}
If $\mathcal{P}$ is single-peaked with respect to an axis $A$, then the corresponding \textsc{2-SAT} instance is satisfiable.
\label{lem:2sat-corr1}
\end{lemma}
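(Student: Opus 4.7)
The plan is to exhibit an explicit satisfying assignment for the 2-SAT instance, read directly from the axis~$A$. Concretely, I would define, for each ordered pair of distinct candidates $a,b\in C$,
\[
ab = \true \ \iff\ a \text{ appears to the left of } b \text{ on } A,
\]
and then verify that each of the three families of clauses is satisfied.

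First I would dispose of the ``exclusive or'' clauses from~\eqref{eq:2sat-second}: since $A$ is a total order on $C$, for every pair $a\neq b$ exactly one of ``$a$ left of $b$'' and ``$b$ left of $a$'' holds, so exactly one of $ab$ and $ba$ is true; this makes both $(ab\vee ba)$ and $(\neg ab\vee\neg ba)$ true. Next, for the clauses~\eqref{eq:2sat-first-1} and~\eqref{eq:2sat-first-2}, I would observe the (purely propositional) equivalence
\[
(ba\vee cb)\wedge(ab\vee bc)\ \equiv\ \neg\bigl((ab\wedge bc)\vee(cb\wedge ba)\bigr),
\]
so that these two clauses together express exactly the statement ``$b$ does not lie strictly between $a$ and $c$ on~$A$.'' Thus it suffices to show that for every vote $V\in\mathcal{P}$ and every triple $a,b,c$ with $a\succ b$ and $c\succ b$ in~$V$, candidate $b$ is not between $a$ and $c$ on~$A$.

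The main (and essentially only) step is this last point, which I would prove by contradiction. Suppose $b$ lies strictly between $a$ and $c$ on $A$; without loss of generality assume $a\rhd b\rhd c$. Since $a,b,c$ are distinct and $a\succ b$ and $c\succ b$ hold in~$V$, the triple $(a,b,c)$ forms a v-valley in~$V$ with respect to~$A$ (in the sense of Definition~1). By the definition of single-peakedness for profiles of total orders (or directly by Lemma~\ref{lem:singlepeaked-incomplete}), this contradicts the assumption that $\mathcal{P}$ is single-peaked with respect to~$A$. Hence $b$ is not between $a$ and $c$, the clauses~\eqref{eq:2sat-first-1} and~\eqref{eq:2sat-first-2} are satisfied, and the assignment fulfills the whole 2-SAT instance.

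I do not expect any real obstacle: the only thing to be careful about is the propositional bookkeeping that the conjunction of the two clauses really captures ``not between,'' and that the case $a=c$ (if allowed by the construction) is trivial since then both clauses reduce to tautologies under the total-order assignment. The core content is just the observation that a middle-placed $b$ below both $a$ and $c$ is precisely a v-valley.
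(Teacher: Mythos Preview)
Your proposal is correct and follows essentially the same approach as the paper: define the truth assignment directly from the axis ($ab=\true$ iff $a\rhd b$) and verify that the exclusive-or clauses hold because $A$ is a total order and that the v-valley clauses hold because $\mathcal{P}$ is (possibly) single-peaked with respect to $A$. One small caveat: the displayed ``purely propositional'' equivalence you state is not a tautology in four independent variables---it relies on $ba=\neg ab$ and $cb=\neg bc$---but since you have already established the exclusive-or clauses for the assignment in question, this does not affect the argument.
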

\begin{proof}
We construct a valid truth assignment as follows:
If $a\rhd b$ on $A$, then $ab=\true$ and $ba=\false$.
Clauses of the form \eqref{eq:2sat-first-1} and \eqref{eq:2sat-first-2} are satisfied because no v-valleys occur on~$A$; Clauses of the form \eqref{eq:2sat-second} are satisfied because v-valleys cannot occur ($\mathcal{P}$ is single-peaked with respect to an axis $A$).
\end{proof}

\begin{lemma}
The relation $A = \{(a,b): ab=\true\}$, as returned by the \textsc{2-SAT} algorithm, is a total order and $\mathcal{P}$ is single-peaked with respect to $A$.
\label{lem:2sat-corr2}
\end{lemma}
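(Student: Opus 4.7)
The plan is to verify separately the two claims of the lemma: (a) $A$ is a total order, and (b) $\mathcal{P}$ is possibly single-peaked with respect to $A$, which by Lemma~\ref{lem:singlepeaked-incomplete} amounts to showing that no vote in $\mathcal{P}$ contains a u-valley or a v-valley on $A$.

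For (a), asymmetry and totality of $A$ follow immediately from the clauses in \eqref{eq:2sat-second}, which force exactly one of $ab, ba$ to be $\true$ for each unordered pair. For transitivity it suffices to show that $A$ contains no directed $3$-cycle, since together with asymmetry and totality this implies transitivity. Suppose towards a contradiction that $ab=bc=ca=\true$ (the reverse orientation is symmetric). Let $T$ be the total order contained in $\mathcal{P}$, and let $z\in\{a,b,c\}$ be the $T$-lowest of these three candidates, with $x,y$ the remaining two. Since $x\succ_T z$ and $y\succ_T z$, the construction adds the clauses $(zx\vee yz)$ and $(xz\vee zy)$, and a short case distinction on which of $a,b,c$ plays the role of $z$ shows that the cyclic assignment falsifies one of these two clauses in every case. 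The essential point, and the place where the hypothesis ``$\mathcal{P}$ contains a total order'' is used, is that the guiding total order ranks every triple of candidates and thus injects enough valley-avoidance clauses to forbid every $3$-cycle.

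For (b), ruling out v-valleys is direct: a v-valley $(a,b,c)$ in a vote $V$ with $a\succ_V b$ and $c\succ_V b$ would force $b$ to lie strictly between $a$ and $c$ on $A$, hence either $ab=bc=\true$ or $cb=ba=\true$; either case falsifies one of the clauses \eqref{eq:2sat-first-1}, \eqref{eq:2sat-first-2} added for this triple, contradicting satisfiability. The step I expect to be the main obstacle is ruling out u-valleys, since the $2$-SAT encoding targets only v-valleys. My plan is to show that in a local weak order every u-valley already implies a v-valley, so that the v-valley argument suffices. A u-valley $a\rhd b\rhd c\rhd d$ with $a\succ_V b$ and $d\succ_V c$ involves only candidates in the weak-order part $X_2$ of $V$, since elements of $X_1$ participate in no preference relation. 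Within the weak order $V[X_2]$ I would split on the $V$-comparison of $b$ and $c$: if $b\succ_V c$, then $a\succ_V c$ by transitivity and $(a,c,d)$ is a v-valley on $A$; if $c\succ_V b$, then $d\succ_V b$ and $(a,b,d)$ is a v-valley on $A$; if $b\sim_V c$, then the equivalence-class structure of weak orders yields both $a\succ_V c$ and $d\succ_V b$, so either triple is a v-valley. Since v-valleys have been excluded, no u-valley can survive either, and Lemma~\ref{lem:singlepeaked-incomplete} completes the proof.
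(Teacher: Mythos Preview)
Your proof is correct and follows essentially the same route as the paper: asymmetry and totality from \eqref{eq:2sat-second}, transitivity by ruling out $3$-cycles using the guiding total order in $\mathcal{P}$, and v-valleys excluded directly by the clauses \eqref{eq:2sat-first-1}--\eqref{eq:2sat-first-2}.

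You actually go one step further than the paper. The paper's proof of part~(b) only checks v-valleys and silently relies on the fact that, for local weak orders, a u-valley always forces a v-valley; it never states or proves this. Your case split on the $V$-comparison of $b$ and $c$ supplies exactly this missing observation (the analogue of Lemma~\ref{lem:extv} for local weak orders), so your argument is strictly more complete. One cosmetic point: you write the u-valley as $a\rhd b\rhd c\rhd d$, but Definition~\ref{def:valley} also allows $a\rhd c\rhd b\rhd d$; your argument is insensitive to this since the v-valleys you produce sit on $(a,c,d)$ and $(a,b,d)$, both of which are ordered the same way on $A$ in either case.
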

\begin{proof}
First, we want to show that $A$ is a total order.
Asymmetry and totality follow immediately from~\eqref{eq:2sat-second}. 
Towards a contradiction assume that $A$ is not transitive, i.e., there exist three candidates $x,y,z$ such that $\{(x,y),(y,z),(z,x)\}\subseteq A$.
Thus, $xy=yz=zx=\true$ and $yx=zy=xz=\false$.
Let $V$ be a total order contained in $\mathcal{P}$ (there exists at least one).
We distinguish three cases:
\begin{itemize}
\item The last ranked candidate of $x,y,z$ in $V$ is $y$:
By~\eqref{eq:2sat-first-1}, it has to hold that $(yx \vee zy)$, which is not the case.
\item The last ranked candidate of $x,y,z$ in $V$ is $x$:
By~\eqref{eq:2sat-first-2}, it has to hold that $(yx\vee xz)$, which is not the case.
\item The last ranked candidate of $a,b,c$ in $V$ is $c$:
By~\eqref{eq:2sat-first-1}, it has to hold that $(zy\vee xz)$, which is not the case.
\end{itemize}
Thus, $A$ is transitive.
It remains to show that $\mathcal{P}$ is single-peaked with respect to $A$.
Assume that there is a valley $a\succ b$, $c\succ b$ in some vote and it holds that $\{(a,b),(b,c)\}\subseteq A$.
Due to this valley, our \textsc{2-SAT} instance contains the clause $(ba\vee cb)$.
Thus, $(b,a)\in A$ or $(c,b)\in A$ and $ba=\true$ or $cb=\true$.
Hence, $ab=\false$ or $bc=\false$, which contradicts our assumption that $\{(a,b),(b,c)\}\subseteq A$.
\end{proof}

\section{Other Single-Peaked Concepts for Weak Orders}
\label{sec:other}

We now turn to three other models of single-peaked preferences for weak orders---Black single-peaked
and single-plateaued preferences, and at the end of the section, necessarily single-peaked preferences.

For weak orders, it is more common to view $a\sim b$ as indifference instead of missing information (i.e., incompleteness).
From the algorithmic point of view, which we had taken so far, this distinction is irrelevant.
From a conceptional point of view, it clearly makes a difference, in particular for the underlying assumption of voters' true preferences.
It may even be that both incompleteness and indifference are simultaneously be present in a voter's preference information; we however disregard this possibility.

Let us start with the definition of single-peaked preferences as introduced by Black, which applies to weak orders as well. %

\begin{definition}\label{def:blacksp}
A preference profile $\mathcal{P}$ of weak orders is \emph{Black single-peaked} with
respect to an axis $A$ if for every $V \in \mathcal{P}$,
$A$ can be split
at the most-preferred candidate (peak) of $V$ into two segments
$X$ and $Y$ (one of which can be empty) such that
$V$ has strictly increasing preferences along $X$ and strictly decreasing preferences along $Y$.
\end{definition}

In other words, for a weak preference order to be Black single-peaked it must have a single
most-preferred candidate and can
only contain indifference between at most two candidates at each
position in the order. Otherwise the segments $X$ and $Y$ referred to in
Definition~\ref{def:blacksp} would not be \emph{strictly}
increasing/decreasing.

A slightly weaker restriction than Black single-peakedness for weak 
orders %
is the single-plateaued restriction~\cite[Chapter 5]{bla:b:polsci:committees-elections}, which
extends Black single-peakedness to allow voters to state multiple
most-preferred candidates (i.e., a single plateau instead of a single peak).\footnote{We note that single-plateaued preferences are occasionally referred to as single-peaked preferences, e.g.,~in~\cite[Chapter 9]{fis:b:theory}.}
The model of possibly single-peaked preferences considered in this paper generalizes both single-plateaued and Black single-peaked preferences. Examples and visualizations of these three restrictions can be found in Figure~\ref{fig:weak-orders-overview}.

\begin{figure}
	\centering
	\begin{subfigure}{0.3\textwidth}
		\centering
		\begin{tikzpicture}[yscale=0.5,xscale=0.65]
		
		\def\xmin{1}
		\def\xmax{7}
		\def\ymin{0}
		\def\ymax{7}
		
		\draw[step=1cm,black!20,very thin] (\xmin,\ymin) grid (\xmax,\ymax);

		\draw[->] (\xmin -0.5,\ymin) -- (\xmax+0.5,\ymin) node[right] {};
		\foreach \x/\xtext in {1/a, 2/b, 3/c, 4/d, 5/e, 6/f, 7/g}
		\draw[shift={(\x,\ymin)}] (0pt,2pt) -- (0pt,-2pt) node[below] {$\strut\xtext$};
		\foreach \x/\xtext in {1, 2,3,4,5,6}
		\node[below] at (\x+0.5,\ymin) {$\strut\rhd$};  
		
		\foreach \x/\y in {4/6,5/6,3/4, 6/6, 2/3, 7/2, 1/1}
		\node[fill=blue, circle, inner sep=0.6mm] at (\x,\y) {};
		
		\draw[thick,blue] (1,1)--(2,3)--(3,4)--(4,6) -- (5,6)--(6,6)--(7,2);
		
		\foreach \x/\y in {4/4,5/3, 6/2, 3/5, 2/5, 7/1, 1/5}
		\node[fill=green!50!black, circle, inner sep=0.6mm] at (\x,\y) {};
		
		\draw[thick,green!50!black] (1,5)--(2,5)--(3,5)--(4,4) -- (5,3)--(6,2)--(7,1);  
		
		\foreach \x/\y in {1/2,2/4,3/7, 4/7, 5/5, 6/3, 7/1}
		\node[fill=red!50!black, circle, inner sep=0.6mm] at (\x,\y) {};
		
		\draw[thick,red!50!black]  (1,2)--(2,4)--(3,7)--(4,7) -- (5,5)--(6,3)--(7,1);
		
		\end{tikzpicture}
		\caption{Single-Plateaued}
	\end{subfigure}
	\quad
	\begin{subfigure}{0.3\textwidth}
		\centering
		\begin{tikzpicture}[yscale=0.5,xscale=0.65]
		
		\def\xmin{1}
		\def\xmax{7}
		\def\ymin{0}
		\def\ymax{7}
		
		\draw[step=1cm,black!20,very thin] (\xmin,\ymin) grid (\xmax,\ymax);

		\draw[->] (\xmin -0.5,\ymin) -- (\xmax+0.5,\ymin) node[right] {};
		\foreach \x/\xtext in {1/a, 2/b, 3/c, 4/d, 5/e, 6/f, 7/g}
		\draw[shift={(\x,\ymin)}] (0pt,2pt) -- (0pt,-2pt) node[below] {$\strut\xtext$};
		\foreach \x/\xtext in {1, 2,3,4,5,6}
		\node[below] at (\x+0.5,\ymin) {$\strut\rhd$};  
		
		\foreach \x/\y in {4/6,5/7,3/4, 6/6, 2/3, 7/3, 1/1}
		\node[fill=blue, circle, inner sep=0.6mm] at (\x,\y) {};
		\draw[thick,blue] (1,1)--(2,3)--(3,4)--(4,6) -- (5,7)--(6,6)--(7,3);
		
		\foreach \x/\y in {4/4,5/3, 6/2, 3/5, 2/6, 7/1, 1/2}
		\node[fill=green!50!black, circle, inner sep=0.6mm] at (\x,\y) {};
		\draw[thick,green!50!black] (1,2)--(2,6)--(3,5)--(4,4) -- (5,3)--(6,2)--(7,1);  
		
		\foreach \x/\y in {1/3, 2/4, 3/6, 4/7, 5/5, 6/3, 7/2}
		\node[fill=red!50!black, circle, inner sep=0.6mm] at (\x,\y) {};
		\draw[thick,red!50!black]  (1,3)--(2,4)--(3,6)--(4,7) -- (5,5)--(6,3)--(7,2);
		
		\end{tikzpicture}
		\caption{Black Single-Peaked}
	\end{subfigure}
	\quad
	\begin{subfigure}{0.3\textwidth}
		\centering
		\begin{tikzpicture}[yscale=0.5,xscale=0.65]
		
		\def\xmin{1}
		\def\xmax{7}
		\def\ymin{0}
		\def\ymax{7}
		
		\draw[step=1cm,black!20,very thin] (\xmin,\ymin) grid (\xmax,\ymax);

		\draw[->] (\xmin -0.5,\ymin) -- (\xmax+0.5,\ymin) node[right] {};
		\foreach \x/\xtext in {1/a, 2/b, 3/c, 4/d, 5/e, 6/f, 7/g}
		\draw[shift={(\x,\ymin)}] (0pt,2pt) -- (0pt,-2pt) node[below] {$\strut\xtext$};
		\foreach \x/\xtext in {1, 2,3,4,5,6}
		\node[below] at (\x+0.5,\ymin) {$\strut\rhd$};  
		
		\foreach \x/\y in {4/6,5/7,3/2, 6/6, 2/2, 7/2, 1/1}
		\node[fill=blue, circle, inner sep=0.6mm] at (\x,\y) {};
		\draw[thick,blue] (1,1)--(2,2)--(3,2)--(4,6) -- (5,7)--(6,6)--(7,2);
		
		\foreach \x/\y in {4/2,5/2, 6/2, 3/5, 2/5, 7/1, 1/2}
		\node[fill=green!50!black, circle, inner sep=0.6mm] at (\x,\y) {};
		\draw[thick,green!50!black] (1,2)--(2,5)--(3,5)--(4,2) -- (5,2)--(6,2)--(7,1);  
		
		\foreach \x/\y in {1/4,2/4,3/7, 4/7, 5/5, 6/3, 7/3}
		\node[fill=red!50!black, circle, inner sep=0.6mm] at (\x,\y) {};
		\draw[thick,red!50!black]  (1,4)--(2,4)--(3,7)--(4,7) -- (5,5)--(6,3)--(7,3);
		
		\end{tikzpicture}
		\caption{Possibly Single-Peaked}
	\end{subfigure}
\caption{Overview of Single-Peakedness Models for Weak Orders}\label{fig:weak-orders-overview}
\end{figure}
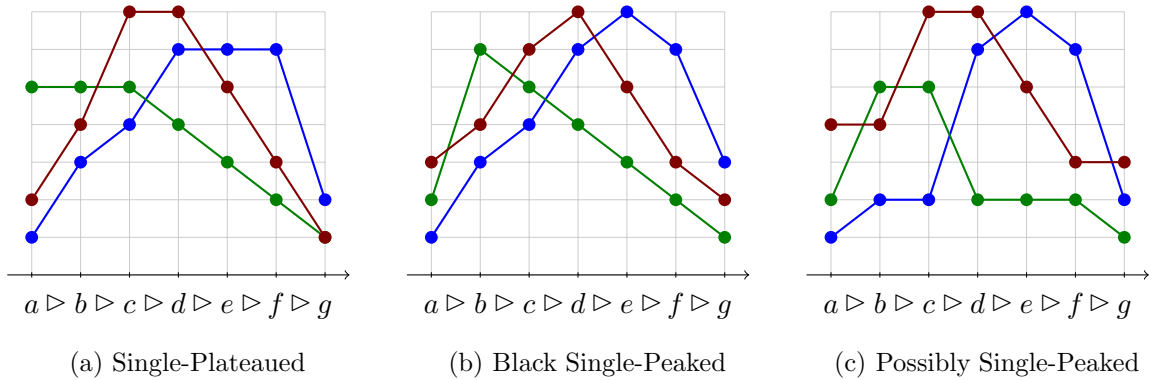

In this section, we have three main goals:
In Section~\ref{subsec:trans} we differentiate single-peakedness models with respect to the axiomatic properties they guarantee.
In Sections~\ref{subsec:plateued} and~\ref{subsec:black} we 
show that the consecutive ones approach can also be used to solve the consistency problem for single-plateaued and
for Black single-peaked preferences.
Finally, in Section~\ref{subsec:necessary}, we briefly consider the concept of \emph{necessarily} single-peaked preferences.

\subsection{Transitive Majority Relations and Condorcet Winners}\label{subsec:trans}

Preference profiles that satisfy either the Black single-peaked or single-plateaued restriction have several
desirable properties.
One of them is that both Black single-peaked and single-plateaued preferences guarantee transitive majority
relations \citep{bla:j:rationale-of-group-decision-making,bla:b:polsci:committees-elections}.
To be precise, we say that candidate $a$ is preferred to $b$ by a (strict) majority if $|V\in \mathcal{P}:a\succ b| > |V\in \mathcal{P}:b\succ a|$; in this case we write $a >_m b$, where
$>_m$ denotes the majority relation.
A transitive majority relation further guarantees the existence of one or more weak Condorcet winners.
A candidate $a$ is a weak Condorcet winner if there is no candidate $b$ with $b >_m a$.
Upon closer inspection, the Black single-peaked and the single-plateaued  restriction show different characteristics; we refer the reader to \citet{bar:j:indifference-domain}, who further discusses how the amounts of indifference permitted in these restrictions impact their properties.

The gain in generality for possibly single-peaked preferences comes at a price: This condition is no longer restrictive enough to guarantee a transitive majority relation.
This can be seen by considering the following simple profile reproduced from \citeA[Table~9.1]{fis:b:theory}.
Let $\mathcal{P}=(V_1,V_2,V_3,V_4,V_5)$ with

\[\begin{matrix}
  V_1:   & \left\langle b \succ a \succ c\right\rangle\\
  V_2,V_3: & \left\langle c \succ b \succ a\right\rangle\\
  V_4,V_5: & \left\langle a \succ b \sim c\right\rangle\\
\end{matrix}\]

The corresponding majority relation satisfies $a>_m c >_m b  >_m a $, and thus is clearly not transitive. Furthermore, no Condorcet winner exists.
Note that $\mathcal{P}$ is possibly single-peaked with respect to the axis
$a \rhd b \rhd c$ and that this profile is comprised of top orders. Thus,
there exist preference profiles of top orders that are possibly single-peaked
and do not have a transitive majority relation nor a Condorcet winner.

\subsection{Single-Plateaued Profiles and Their Recognition}\label{subsec:plateued}

Single-plateaued preferences are a much more restrictive model than
possibly single-peaked 
preferences since they are essentially Black single-peaked except that
each preference order can 
have multiple most-preferred
candidates~\cite[Chapter 5]{bla:b:polsci:committees-elections}.

We first provide a characterization of single-plateaued preferences, similar to the v-valley based characterization of possibly single-peaked preferences (Lemma~\ref{lem:extv}).
Since a preference order must be strictly increasing and then %
strictly decreasing with respect to an axis %
(except for its most-preferred candidates), we can again use
the v-valley substructure. Furthermore, we will need
an additional substructure to prevent two candidates that are ranked indifferent in a voter's preference order from appearing on the same side of that voter's peak (plateau).

\begin{definition}\label{def:nonpeak-plat}
A preference order $V$ contains a
\emph{plateau with respect to an axis $A$} if there exist
candidates $a,b \in C$ such that $a$ and $b$ are adjacent
in $A$ and $V$ states $a \sim b$.
A preference order $V$ contains a
\emph{nonpeak plateau with respect to an axis $A$} if there exist
candidates $a,b,c, \in C$ such that
$a \rhd b \rhd c$ on $A$ and $V$ states either $a \succ b \sim c$ or $c \succ b \sim a$.
\end{definition}

We can now state the following analogue of Lemma~\ref{lem:extv}.

\begin{lemma}\label{lem:plat}
Let $\mathcal{P}=(V_1,\ldots,V_n)$ be a profile of weak orders.
The following two statements are equivalent.
\begin{enumerate}
\item[(i)] The profile $\mathcal{P}$ is single-plateaued with respect to $A$.
\item[(ii)] Every vote $V\in\mathcal{P}$ contains neither a v-valley nor a nonpeak plateau
    with respect to $A$.
\end{enumerate}
\end{lemma}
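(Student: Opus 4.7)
The plan is to mirror the structure of Lemma~\ref{lem:extv}, treating v-valleys as before and using nonpeak plateaus to handle the additional tie structure that single-plateaued profiles must forbid. Throughout I will visualize a single-plateaued vote with respect to $A$ as an inverted-U shape with a flat top: preferences strictly increase along some (possibly empty) left segment $X$, are constant on a contiguous plateau $P$, and strictly decrease along the right segment $Y$.

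For the direction $(i)\Rightarrow(ii)$, I would argue by contradiction. A v-valley $c_1\rhd c_2\rhd c_3$ with $c_1,c_3\succ c_2$ would require a strict dip at the middle position, impossible for an inverted-U shape. For a nonpeak plateau $a\rhd b\rhd c$ with $a\succ b\sim c$ (the case $c\succ b\sim a$ is symmetric), the tie $b\sim c$ places $b$ and $c$ at the same preference level; they cannot both lie in the plateau $P$ because $a\succ b$, and by strict monotonicity within $X$ and $Y$ they cannot lie in the same side segment either. Hence $P$ must lie strictly between $b$ and $c$ on $A$, which forces $a$ (being to the left of $b$ on $A$) to lie in $X$ and strictly farther from $P$ than $b$; strict monotonic increase toward $P$ then yields $b\succ a$, contradicting $a\succ b$.

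For the direction $(ii)\Rightarrow(i)$, fix $V\in\mathcal{P}$ with no v-valley and no nonpeak plateau with respect to $A$, and let $P$ be the top (maximal) equivalence class of $V$. First I would show $P$ is contiguous on $A$: any gap $p_1\rhd c\rhd p_2$ with $p_1,p_2\in P$ and $c\notin P$ immediately gives a v-valley, since $p_1,p_2\succ c$. Writing $A=\langle c_1\rhd\cdots\rhd c_m\rangle$ and $P=\{c_i,\ldots,c_j\}$, I would then prove $c_{k+1}\succ c_k$ for every $k<i$ (the argument for $k>j$ is symmetric): a reversal $c_k\succ c_{k+1}$ yields a v-valley on $(c_k,c_{k+1},c_i)$ when $k+1<i$, and contradicts the maximality of $c_i\in P$ when $k+1=i$; a tie $c_k\sim c_{k+1}$ yields a nonpeak plateau on $(c_k,c_{k+1},c_i)$ of the form $c_i\succ c_{k+1}\sim c_k$ when $k+1<i$, and forces $c_k\in P$ by transitivity of $\sim$ when $k+1=i$, contradicting $k<i$. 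Transitivity of $\succ$ in weak orders then lifts these consecutive strict inequalities to strict monotonicity along all of $X$ and $Y$, certifying that $V$ is single-plateaued with respect to $A$.

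The main obstacle I expect is the boundary bookkeeping around the endpoints of the plateau, in particular the cases $k+1=i$ and $k-1=j$, where the candidate triple one would naturally write down degenerates and one must instead invoke the maximality of $P$ directly. Apart from that, the proof is a fairly mechanical enrichment of the valley-based argument of Lemma~\ref{lem:extv} with the nonpeak-plateau pattern, which is exactly tailored to forbid the only remaining way a weak order could fail to be single-plateaued along a given axis.
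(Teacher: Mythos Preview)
Your proof is correct and follows essentially the same strategy as the paper's: decompose the axis into an increasing segment, a plateau, and a decreasing segment, and argue that v-valleys and nonpeak plateaus are precisely the obstructions to this shape. Your treatment is in fact more thorough than the paper's---for $(ii)\Rightarrow(i)$ the paper simply invokes Lemma~\ref{lem:extv} and then declares single-plateauedness ``easy to see,'' whereas you spell out contiguity of the top class and strict monotonicity on either side (including the boundary cases $k+1=i$); and for the nonpeak-plateau case of $(i)\Rightarrow(ii)$ you derive the contradiction by showing $b$ and $c$ must straddle the plateau, while the paper instead argues directly that both must lie in the decreasing segment---but these are minor variations within the same overall argument.
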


\begin{proof}
Let $\mathcal{P}$ be a preference profile of weak orders, and let
$A$ be an axis.

If $\mathcal{P}$ is single-plateaued with respect to $A$ then for every preference order
$V \in \mathcal{P}$, $A$ can be split into segments $X$, $Y$, and $Z$ such that $V$
is strictly increasing along $X$, remaining constant along $Y$, and
strictly decreasing along $Z$.
Since $v$ is only ever strictly decreasing along $Z$ and $Z$ is the
rightmost segment of $A$, $V$ cannot contain a v-valley with respect to $A$.
For a nonpeak plateau to exist with respect to $A$ there must exist
candidates $a,b,c \in C$ such that $a \rhd b \rhd c$ in $A$ and $V$ states either
$a \succ b \sim c$ or $c \succ b \sim a$.

We first consider the case where $V$ states $a \succ b \sim c$. Since $V$ strictly prefers
$a$ to $b$ and $a$ to $c$, and both $b$ and $c$ are to the right of $a$ on
the axis,
we know that both $b$ and $c$ must be in segment $Z$.
However, $V$ is strictly decreasing along $Z$, so $V$ cannot have a nonpeak
plateau of this form.

We now consider the case where $V$ states $c \succ b \sim a$. Since $V$ strictly prefers
$c$ to $a$ and $c$ to $b$, and both $a$ and $b$ are to the left of $c$ on
the axis, %
we know that both $a$ and $b$ must be in segment $X$.
However, $V$ is strictly increasing along
$X$, so $V$ cannot have a nonpeak plateau of this form.

For the other direction, assume that no preference
order $V \in \mathcal{P}$ contains a v-valley with respect to $A$
and no preference order $V \in \mathcal{P}$
contains a nonpeak plateau with respect to $A$.
Since no preference order $V \in \mathcal{P}$ contains a v-valley with respect to $A$, we know from
Lemma~\ref{lem:extv} that $V$ is possibly single-peaked with respect to $A$. 
Since we also know that no preference order $V \in \mathcal{P}$ contains a nonpeak plateau
with respect to $A$ it is easy to see that $V$ is single-plateaued with respect
to $A$.~\end{proof}

We now extend the consecutive ones approach from Section~\ref{sec:consones} to single-plateaued profiles.
Since the nonpeak plateau substructure is needed in addition to the
v-valley substructure, we %
modify Construction~\ref{con:mat} so that
if a
preference order contains a nonpeak plateau with respect to an axis $A$,
then when the columns of its corresponding preference matrix are permuted according to
$A$ the matrix will contain a row with the sequence
$\cdots 1 \cdots 0 \cdots 1 \cdots$.
Further notice that if a preference order ranks three 
candidates indifferent to each other below its peak (plateau)
that it will have a nonpeak plateau with respect to \emph{every} possible axis.
To handle this case in the extension to
Construction~\ref{con:mat} we need to ensure that its corresponding preference
matrix will contain a row with the sequence $\cdots 1 \cdots 0 \cdots 1 \cdots$ for
every permutation of its columns.

\begin{construction}\label{con:mat3} %
Let $\mathcal P=(V_1,\dots,V_n)$ be a profile of weak orders over candidate set $C$ with $\card{C}=m$.
For each $V_i$ we initially construct an $m \times m$ binary matrix $X_i$
as described in Construction~\ref{con:mat}.

The following extensions to Construction~\ref{con:mat} ensure that
if $V_i$ has a nonpeak plateau with respect to an axis $A$ then when the columns
of $X_i$ are permuted according to $A$ it will not have the consecutive ones property.

If there exist
three
candidates $a,b,c \in C$ such that $V_i$ states $a \sim b \sim c$
and they are not $V_i$'s most-preferred candidates, then
output a matrix that does not have the consecutive ones property.

Otherwise, 
for each pair of
candidates $a,b \in C$ such that (i) $V_i$ ranks $a \sim b$ and (i) $a$ and $b$ are not maximal in $V_i$, %
append three additional rows to the matrix $X_i$:
to the column corresponding 
to $a$ append $\begin{bmatrix}0 & 1 & 1 \end{bmatrix}'$,
to the column corresponding to $b$ 
append $\begin{bmatrix}1 & 1 & 0 \end{bmatrix}'$,
to each column corresponding to a candidate 
strictly preferred to $a$ and $b$ 
append $\begin{bmatrix}1 & 1 & 1 \end{bmatrix}'$,
and to each column corresponding to a remaining candidate
append $\begin{bmatrix}0 & 0 & 0 \end{bmatrix}'$.
After constructing a matrix $X_i$ for each $V_i$, the matrices $X_1, \dots, X_n$
are row-wise concatenated to yield a matrix $X$.
\end{construction}

Construction~\ref{con:mat} ensures that no preference order contains
a v-valley and the extensions made in Construction~\ref{con:mat3} ensure
that no preference order contains
a nonpeak plateau.
So the following theorem can be shown by 
an argument similar to the proof of Theorem~\ref{thm:weaksp}.
Now the presence of v-valleys \emph{or}
nonpeak plateaus, not just v-valleys, is equivalent to a row
containing the sequence $\cdots 1 \cdots 0 \cdots 1 \cdots$.

\begin{theorem}\label{thm:splat}
The \textsc{Weak Order Single-plateaued Consistency} problem can be solved
in $\bigO(m^2\cdot n)$ time.
\end{theorem}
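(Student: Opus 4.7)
The plan is to mirror the proof of Theorem~\ref{thm:weaksp}, but using Lemma~\ref{lem:plat} in place of Lemma~\ref{lem:extv} and the enhanced matrix from Construction~\ref{con:mat3}. I would show that $X$ has the consecutive ones property if and only if $\mathcal{P}$ is single-plateaued consistent, with a witnessing column permutation directly yielding a single-plateaued axis. By Lemma~\ref{lem:plat}, $\mathcal{P}$ is single-plateaued with respect to $A$ iff no vote contains a v-valley or a nonpeak plateau with respect to $A$, so it suffices to check that these two forbidden substructures correspond exactly to the pattern $\cdots 1 \cdots 0 \cdots 1 \cdots$ appearing in the columns-permuted matrix.

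The v-valley direction is inherited verbatim from Theorem~\ref{thm:weaksp}: the original $m \times m$ block contributed by each vote (from Construction~\ref{con:mat}) encodes strict preferences so that a v-valley on an axis $A$ appears in exactly one row as the forbidden zero-in-the-middle pattern. The novel step is to analyze the three appended rows for each indifferent pair $(a,b)$ in $V_i$ with neither candidate maximal. Let $U = \{c : c \succ_i a\}$ be the set of candidates strictly preferred to $a$ (equivalently, to $b$). The middle appended row has ones exactly on $U \cup \{a,b\}$, forcing this set to form a contiguous block under any valid permutation. The first appended row has ones on $U \cup \{b\}$, which combined with the block constraint forces $a$ to be an endpoint of that block; the third row symmetrically forces $b$ to be the opposite endpoint. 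Hence the axis places every candidate in $U$ strictly between $a$ and $b$, which is precisely the condition that no nonpeak plateau involving the pair $(a,b)$ exists. The converse---given such an axis, the three rows have consecutive ones---is a direct check.

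The special case where a vote has three pairwise-indifferent non-maximal candidates $a \sim b \sim c$ is handled separately: any axis places one of them between the other two, yielding a nonpeak plateau, so the construction correctly outputs a matrix without the consecutive ones property. Putting the pieces together, the consecutive ones property of $X$ is equivalent to the existence of an axis witnessing Lemma~\ref{lem:plat}. For the runtime, each vote contributes its original $m$ rows plus three rows per indifferent non-maximal pair; in the ``otherwise'' case there are $O(m)$ such pairs per vote (each non-maximal indifference class has at most two elements), giving a matrix of dimensions $O(m \cdot n) \times m$ with $O(m^2 \cdot n)$ one-entries. Applying the Booth--Lueker algorithm yields the claimed $O(m^2 \cdot n)$ bound.

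The main obstacle is the careful bookkeeping to verify both directions of the equivalence, particularly showing that the three-row encoding genuinely enforces the ``$U$ strictly between $a$ and $b$'' condition and nothing stronger, so that the constraints from different indifferent pairs do not spuriously interact. These constraints should decouple cleanly because each new row-triple depends only on a single vote and a single indifferent pair, but this needs to be stated explicitly, and I would also verify that additional indifferent pairs $(a,b)$ and $(a',b')$ within the same vote (arising from distinct non-maximal tie-classes) impose compatible constraints that can be realized simultaneously whenever the underlying axis is single-plateaued.
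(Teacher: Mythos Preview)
Your proposal is correct and follows essentially the same approach as the paper: the paper's proof of Theorem~\ref{thm:splat} is only a brief sketch stating that the argument of Theorem~\ref{thm:weaksp} goes through using Construction~\ref{con:mat3} and Lemma~\ref{lem:plat}, with v-valleys \emph{or} nonpeak plateaus now corresponding to a $\cdots 1 \cdots 0 \cdots 1 \cdots$ pattern. Your analysis of the three appended rows (forcing $a$ and $b$ to be the endpoints of the block $U\cup\{a,b\}$), the handling of the three-way-tie case, and the $O(m)$ bound on indifferent non-maximal pairs per vote all supply details that the paper omits; the only minor point to note when you write it out is that Row~2 also forbids any $w\notin U\cup\{a,b\}$ from lying between $a$ and $b$, but this is already implied by the v-valley rows (such a $w$ would satisfy $a\succ w$ and $b\succ w$), so no spurious constraint is introduced.
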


\subsection{Black Single-Peaked Consistency}\label{subsec:black}

Recall that a preference order is Black single-peaked with respect to an axis $A$ if it is
strictly increasing to a single most-preferred candidate (peak) and then strictly
decreasing with respect to $A$, this is consequently a special case of single-plateaued preferences. So we again use the v-valley substructure to obtain a characterization result, but
as before we need an additional
substructure. 
Even if no preference order
has a v-valley with respect to $A$ it may not be Black single-peaked because it is
indifferent between two candidates on the same side of its peak or
has more than one most-preferred candidate.
We can handle the first condition just mentioned with the
nonpeak plateau substructure used in the previous section, but
the second condition requires us to forbid \emph{any} kind of plateau. 

\begin{lemma}\label{lem:sp}
Let $\mathcal{P}=(V_1,\ldots,V_n)$ be a profile of weak orders.
The following two statements are equivalent.
\begin{enumerate}
\item[(i)] The profile $\mathcal{P}$ is Black single-peaked with respect to $A$.
\item[(ii)] Every vote $V\in\mathcal{P}$ contains neither a v-valley nor a
    plateau with respect to $A$.
\end{enumerate}
\end{lemma}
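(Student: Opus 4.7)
The plan is to prove the two directions separately, with the forward direction being a direct verification from Definition~\ref{def:blacksp} and the reverse direction bootstrapping from Lemma~\ref{lem:extv}.

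For the direction (i) $\Rightarrow$ (ii), I would fix a vote $V \in \mathcal{P}$ and let $p$ be its unique peak on $A$, so that $A$ splits into a strictly increasing segment $X$ (up to $p$) and a strictly decreasing segment $Y$ (from $p$). For v-valleys: given any triple $a \rhd b \rhd c$, a case analysis on the location of $p$ yields the claim. If $p$ lies on $A$ at or to the left of $b$, then $b$ and $c$ both lie in $Y$, giving $b \succ c$ and hence ruling out $c \succ b$; symmetrically, if $p$ is strictly right of $b$, then $a$ and $b$ both lie in $X$, giving $b \succ a$ and ruling out $a \succ b$. For plateaus: any two adjacent candidates on $A$ either both belong to $X$, both belong to $Y$, or one of them is $p$; in each case strict monotonicity within the segment (or uniqueness of the peak) forbids $a \sim b$.

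For the direction (ii) $\Rightarrow$ (i), assume no vote contains a v-valley or a plateau. Fix $V \in \mathcal{P}$. The absence of plateaus means that for every consecutive pair on $A$, the vote $V$ has a strict preference. Inspecting a consecutive triple $c_i \rhd c_{i+1} \rhd c_{i+2}$ under this strictness, the only forbidden pattern of consecutive strict preferences is ``strict decrease followed by strict increase,'' which would constitute a v-valley. Hence the sequence of consecutive strict preferences along $A$ follows an ``up, up, \ldots, up, down, down, \ldots, down'' pattern, producing a unique transition point $p$, which serves as the peak. Taking $X$ to be the prefix of $A$ ending at $p$ and $Y$ the suffix beginning at $p$, transitivity of $\succ$ (weak orders have transitive strict part) lifts the strictness between consecutive candidates in $X$ (respectively $Y$) to strictness between arbitrary pairs within the same segment. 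This matches Definition~\ref{def:blacksp} exactly.

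The main obstacle is in the reverse direction: namely, upgrading from the local ``no down--up consecutive pattern plus no adjacent tie'' observation to the global ``strictly increasing on $X$, strictly decreasing on $Y$'' condition required by Black single-peakedness. I expect to handle this cleanly by invoking transitivity of $\succ$ within a single segment after fixing the unique transition point; one could alternatively first invoke Lemma~\ref{lem:extv} to obtain a possibly single-peaked extension and then argue that the no-plateau condition forces every such extension to be Black single-peaked, but the direct argument above is shorter and avoids reasoning about extensions.
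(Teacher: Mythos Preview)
Your proof is correct. Note that the paper actually omits the proof of this lemma entirely, stating only that ``the argument is very similar to the proof of Lemma~\ref{lem:plat}.'' The approach implicit in that remark is the one you mention as an alternative at the end: for the reverse direction, first invoke Lemma~\ref{lem:extv} to get possible single-peakedness from the absence of v-valleys, then use the absence of plateaus to upgrade to Black single-peakedness. Your direct argument via consecutive pairs and triples is cleaner, since it avoids the detour through extensions and makes the role of the plateau condition (forcing strictness between adjacent candidates, hence a unique transition point) completely transparent; the paper's suggested route has the minor advantage of reusing Lemma~\ref{lem:extv} verbatim rather than re-deriving the ``no down--up'' consecutive pattern. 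Both approaches are equally valid here, and you already identify the trade-off yourself.
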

We omit the proof, as the argument is very similar to the proof of Lemma~\ref{lem:plat}.

Now, we extend Construction~\ref{con:mat3} so that if a preference order
contains a plateau with respect to an axis $A$, then when the columns of
its preference matrix are permuted according to $A$ the matrix will contain
the sequence $\cdots 1 \cdots 0 \cdots 1 \cdots$. Since
Construction~\ref{con:mat3} already
ensures this for the case of nonpeak plateaus, our
extended construction below only needs to add a condition for plateaus that
contain the most-preferred candidates in a given preference order.

\begin{construction}\label{con:mat2}
Follow Construction~\ref{con:mat3} except add the following
condition while constructing a preference matrix $X_i$ for each
preference order $V_i \in {\cal P}$. %
If there exist two candidates
$a,b \in C$ such that $V_i$ states $a \sim b$
and they are $V_i$'s most-preferred candidates,
then output a matrix that does not have the consecutive ones property.
\end{construction}

When a preference order
has a unique most-preferred candidate and
is single-plateaued, it is clearly also Black single-peaked.
Construction~\ref{con:mat2}
ensures that no preference order contains
more than one most-preferred candidate the same way that
Construction~\ref{con:mat3} ensures that no preference order contains
three or more candidates that are all ranked indifferent to each other and
that are not the most
preferred candidates, since
this always results in a nonpeak plateau.
So the following theorem can be shown by essentially the same argument as for Theorem~\ref{thm:splat}, but using Lemma~\ref{lem:sp} instead of
Lemma~\ref{lem:plat}.

\begin{theorem}\label{thm::blacksp}
The \textsc{Weak Order Black Single-peaked Consistency} problem can be solved
in $\bigO(m^2\cdot n)$ time.

\end{theorem}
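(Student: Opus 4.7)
The plan is to follow the same template as the proof of Theorem~\ref{thm:splat}, swapping in Construction~\ref{con:mat2} for Construction~\ref{con:mat3} and Lemma~\ref{lem:sp} for Lemma~\ref{lem:plat}. Given the profile $\mathcal{P}$, I would build the binary matrix $X$ via Construction~\ref{con:mat2} and then run the linear-time \textsc{Consecutive Ones} test of \citet{boo-lue:j:consecutive-ones-property} on $X$; the algorithm reports that $\mathcal{P}$ is Black single-peaked consistent exactly when the test succeeds, and otherwise reports failure.

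The key lemma to establish is: $X$ has the consecutive ones property if and only if there is an axis $A$ witnessing Black single-peaked consistency of $\mathcal{P}$. By Lemma~\ref{lem:sp}, the latter is equivalent to the existence of an axis avoiding both v-valleys and plateaus in every vote. Three ingredients combine to give this equivalence. First, the core rows introduced by Construction~\ref{con:mat} already capture v-valleys, as proved in Theorem~\ref{thm:weaksp}: a row contains the pattern $\cdots 1 \cdots 0 \cdots 1 \cdots$ under the permutation induced by $A$ iff some vote has a v-valley with respect to $A$. Second, the three-row gadget inherited from Construction~\ref{con:mat3} enforces that every non-maximal indifferent pair $a\sim_i b$ lies on opposite sides of all candidates voter $i$ strictly prefers to them, which is precisely the condition that prevents nonpeak plateaus. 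Third, the abort rule from Construction~\ref{con:mat2} (triggered when two maximal candidates are indifferent in some vote) together with the abort rule from Construction~\ref{con:mat3} (triggered when three non-maximal candidates share an indifference class) rule out the remaining kinds of plateaus, namely plateaus at the peak and plateaus arising from indifference classes of size at least three. Combining these three observations with Lemma~\ref{lem:sp} yields both directions of the equivalence.

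For the runtime, the abort rule of Construction~\ref{con:mat3} ensures that in any surviving matrix, each vote has at most $\lfloor m/2 \rfloor$ non-maximal indifferent pairs, so each vote contributes $\bigO(m)$ rows to $X$ and the total size of $X$ is $\bigO(m \cdot n) \times m$ with $\bigO(m^2 \cdot n)$ ones. Running Booth--Lueker on $X$ then takes $\bigO(m^2 \cdot n)$ time, matching the stated bound. The main obstacle is the careful bookkeeping for the second ingredient above: verifying in detail that the $(0,1,1)$, $(1,1,0)$, $(1,1,1)$, and $(0,0,0)$ gadget columns force exactly the axis condition ``$a$ and $b$ straddle every candidate strictly preferred to them''---no more and no less---so that the construction neither misses a genuine nonpeak plateau nor spuriously forbids a valid Black single-peaked axis. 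Once that gadget lemma is in place, everything else is a direct adaptation of the argument for Theorem~\ref{thm:splat}, with Lemma~\ref{lem:sp} plugged in place of Lemma~\ref{lem:plat}.
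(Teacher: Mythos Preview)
Your proposal is correct and takes essentially the same approach as the paper: the paper's proof is a one-line remark that the argument for Theorem~\ref{thm:splat} goes through verbatim once Construction~\ref{con:mat2} replaces Construction~\ref{con:mat3} and Lemma~\ref{lem:sp} replaces Lemma~\ref{lem:plat}. Your write-up simply unpacks that remark in more detail, including the runtime bookkeeping and the gadget verification, which the paper leaves implicit.
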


\subsection{Necessarily Single-Peaked Preferences}\label{subsec:necessary}

The central definition of this paper is that of possibly single-peaked preferences.
As mentioned in  Section~\ref{sec:sp}, one could also consider \emph{necessarily single-peaked} preferences, i.e., incomplete preferences for which every extension is single-peaked.
We conclude this section by considering this much more restrictive variant.
For weak orders, we can give a clear answer how these two concepts compare:
necessarily single-peaked preferences can be characterized as a subclass of single-plateaued preferences.

\begin{proposition}
A profile of weak orders is necessarily single-peaked with respect to an axis $A$ if and only if it is single-plateaued with respect to $A$ in such a way that plateaus have a size of at most two.
\end{proposition}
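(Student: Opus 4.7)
The proof naturally splits into two directions, both of which can be argued vote by vote since necessarily single-peakedness and single-plateauedness with respect to a fixed axis $A$ are per-voter properties.

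For the direction \emph{necessarily single-peaked $\Rightarrow$ single-plateaued with plateau size at most two}, I would invoke Lemma~\ref{lem:plat} and verify the three required conditions on a vote $V$ that is necessarily single-peaked with respect to $A$. No v-valley can exist in $V$, since any such v-valley would survive in every extension. The plateau cannot contain three distinct maximal elements $p_1 \rhd p_2 \rhd p_3$: they can be linearly ordered arbitrarily by some extension, and in particular by one beginning $p_1 > p_3 > p_2 > \cdots$, which places a v-valley on $(p_1,p_2,p_3)$. Finally, $V$ cannot contain a nonpeak plateau $a \succ b \sim c$ with $a \rhd b \rhd c$: breaking the tie the ``wrong'' way yields an extension with $a > c > b$ on this triple, again a v-valley; the case $c \succ b \sim a$ is symmetric.

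For the converse direction, let $V$ be single-plateaued with respect to $A$ and have plateau size at most two, and let $V'$ be an arbitrary extension. Suppose, towards a contradiction, that $V'$ has a v-valley on $a \rhd b \rhd c$, so $a \succ' b$ and $c \succ' b$. If no pair among $\{a,b,c\}$ is tied in $V$, then $V$ itself inherits both strict comparisons, hence the v-valley, contradicting Lemma~\ref{lem:plat}. Otherwise, one pair is tied, and by Lemma~\ref{lem:plat} applied to $V$ (no nonpeak plateau and no v-valley), that tied pair must either lie within the plateau or sit on opposite sides of it. The plateau-size bound guarantees that a size-two plateau is consecutive on $A$ (otherwise some non-plateau candidate between the two plateau elements would induce a v-valley in $V$). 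I would then rule out each of the three pairings in turn: in every subcase, combining the location of the tied pair relative to the plateau with strict monotonicity of $V$ along each segment of $A$ bordering the plateau produces a strict comparison in $V$ that $V'$ must respect but which directly contradicts $a \succ' b$ or $c \succ' b$.

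The main obstacle is the case analysis in the converse direction: one must track where the plateau sits on the axis, which of $\{a,b,c\}$ belongs to the plateau or to each segment, and ensure that the extension cannot create the posited v-valley without violating a strict $V$-comparison. The bound $|\text{plateau}| \leq 2$ is used crucially to eliminate configurations such as plateau $= \{a,c\}$ (non-consecutive on $A$ once $b$ is forced between them) and to force every same-side pair from $\{a,b,c\}$ to inherit a strict comparison from the single-plateaued structure of $V$.
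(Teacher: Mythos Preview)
Your proposal is correct and matches the paper's approach: the three bad-extension constructions you give for the forward direction (surviving v-valley, three-element plateau, nonpeak plateau broken the wrong way) are exactly the paper's contrapositive cases. For the converse, the paper writes only that it is ``easy to see'', so your case analysis on which pair among $\{a,b,c\}$ is tied actually supplies more detail than the published proof.
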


\begin{proof}
Let $\mathcal{P}$, a profile of weak orders, be single-plateaued with respect to an axis $A$ such that all plateaus have a size $\leq 2$.
It is easy to see that for each $V \in {\cal P}$ that every extension of $V$ to a total order cannot contain a v-valley with
respect to $A$.

For the other direction, 
we consider two cases: (i) $\mathcal{P}$ is not single-plateaued with respect to $A$ and (ii) $\mathcal{P}$ is single-plateaued with respect to axis $A$ but a $V\in \mathcal{P}$ contains a plateau of size larger than two.
In case~(i), we use Lemma~\ref{lem:plat}. There are two ways in which the single-plateaued property can be violated:
a vote $V$ contains a v-valley or a nonpeak plateau with respect to $A$.
If $V$ contains a v-valley, then none of its extensions is single-peaked with respect to $A$.
If $V$ contains a nonpeak plateau, then it is of the form $p\succ a\sim b$, where $p$ is a maximal element of $V$, with $a\rhd b\rhd p$ on $A$ (or reversed).
Let $V'$ be any extension of $V$ for which $p \succ' a\succ' b$ holds.
Then $V'$ is not single-peaked with respect to $A$, and hence $\mathcal{P}$ is not necessarily single-peaked with respect to $A$.
Now, consider case~(ii). Let $V\in\mathcal{P}$ with $a\sim b\sim c$, all of which are maximal elements in $V$, i.e., we have a plateau of size at least three.
Without loss of generality assume that $a\rhd b\rhd c$ on $A$.
Let $V'$ be any extension of $V$ that satisfies $a \succ' b$ and $c \succ' b$.
Then $V'$ contains a v-valley and thus $\mathcal{P}$ is not necessarily single-peaked with respect to $A$.
\end{proof}

Interestingly, this ``plateau of size at most two'' definition is exactly the way~\citet{arr:b:polsci:social-choice} defined single-peakedness, introducing a slight deviation from Black's original definition
\cite<as mentioned by>{dummett1961stability}.

\section{Discussion and Future Work}
\label{sec:disc}

We have analyzed the \textsc{$\mathcal{T}$ Single-peaked Consistency} problem for $\mathcal{T}\in \{$\textsc{Partial Order, Local Weak Order, Weak Order, Top Order, Total Order}$\}$.
An overview of the results is displayed in Table~\ref{tab:results}.
Despite the \NP-completeness of \textsc{Partial Order Single-peaked Consistency}, we have found four fast algorithms for plausible application scenarios.
The Guided Algorithm and the \textsc{2-SAT} based algorithm each require a guiding vote.
Such an order is likely to exist---at least implicitly---for large preference profiles.
In the case that no guiding order exists, the consecutive ones approach and the Unguided Algorithm are applicable.
Also, we have found that \textsc{Partial Order Single-peaked Consistency} is solvable in polynomial time if the axis is already part of the input.
This covers a large spectrum of possible scenarios of real-world preferences where our algorithms could be applied.

\begin{table}
\begin{center}
\begin{tabular}{l|ll}
$\mathcal{T}$ & general & with guiding vote \\ 
\hline 
\textsc{Partial Order} & NP-c~(Cor~\ref{cor:POSPC-npc}) & NP-c~(Thm~\ref{thm:guidedNP}) \\ 
\textsc{Local Weak O.} & NP-c~(Thm~\ref{thm:SPE-npc}) & $\mathcal{O}(m^3 \cdot n)$~(Thm~\ref{thm:2sat}) \\ 
\textsc{Weak Order} & $\calO(m^2 \cdot n)$~(Thm~\ref{thm:weaksp}) & $\calO(m \cdot n)$~(Thm~\ref{thm:guided-weak}) \\ 
\textsc{Top Order} & $\calO(m^2 \cdot n)$~(Thm~\ref{thm:unguided}) & $\calO(m \cdot n)$~(Thm~\ref{thm:guided-weak}) \\ 
\textsc{Total Order} & $\calO(m \cdot n)$ (\citet{esc-lan-ozt:c:single-peaked-consistency}, also Thm~\ref{thm:guided-weak}) & n/a
\end{tabular}
\end{center}
\caption{Overview of the complexity results and algorithms for \textsc{$\mathcal{T}$ Single-Peaked Consistency}}\label{tab:results}
\end{table}%

\subsection{Implementation} 
We implemented our consecutive ones-based algorithm for the recognition of possibly single-peaked preferences for weak orders (and the extensions to the construction for single-plateaued and Black single-peaked preferences). The code is written in Python 3.7 and is available at \url{https://github.com/zmf6921/incompletesp}.
We use a SAT solver for the consecutive ones instances.
(This is a much simpler and more reliable approach than implementing the available linear-time algorithms and still yields very fast runtimes.)
We ran the algorithm on all currently available profiles of weak orders available on PrefLib.org~\cite{preflib};
these are in total 379 toc-instances\footnote{At PrefLib, profiles of weak orders are referred to as ``tied-order complete (toc)'', c.f.\ \url{http://www.preflib.org/data/format.php#toc}.}.
We found that none of these instances are possibly single-peaked.
This highlights a disadvantage of the single-peaked domain: its poor robustness. Even a minor change in a single voter's preferences can violate the single-peaked condition. And although the possibly single-peaked domain is a more general restriction, this drawback remains. 
Given this fragility, the result is not surprising but leads us to ask the question of how {\em close} these profiles are to being possibly single-peaked.
We discuss this research direction further below.
On a more positive note, even large instances with $>1{,}000$ candidates and instances with $>25{,}000$ distinct votes could be solved in seconds 
(the maximum running time was 20.2 seconds, the average was 0.8 seconds, the median 0.02 seconds, processed on a 4GHz Intel i7 with 16 GB of RAM).

\subsection{Top Orders and Scoring Rules}
We would like to mention one particular application of our algorithms concerning a specific class of voting rules, so-called scoring rules.
Scoring rules are specified by a family of scoring vectors. For an election over $m$ candidates, a scoring rule uses
the corresponding $m$-candidate scoring vector $(\alpha_1,\ldots,\alpha_m)$ to determine the winner(s).
A vote $\left\langle c_1 \succ \dots \succ c_m\right\rangle$ gives $\alpha_1$ points to $c_1$, $\alpha_2$ points to $c_2$, etc.
The candidate(s) with the highest score win.
Often, scoring vectors of the type $(\alpha_1,\ldots,\alpha_k,0,\ldots,0)$ with $\alpha_1 \ge \dots \ \alpha_k > 0$ are considered.
For example, the voting in the Eurovision Song Contest uses the scoring vector $(12,9,8,7,6,5,4,3,2,1,0,\ldots,0)$~\citep{ginsburgh2008eurovision}.
For such scoring vectors, top orders (with $k$ ranked candidates) constitute \emph{full} information for determining
the winner.
It is therefore debatable whether the input of such voting rules may be considered to be given as a profile of total orders, as total orders contain problem-irrelevant information.
\citet{bra-bri-hem-hem:j:sp2} study the constructive coalitional weighted manipulation problem for scoring rules in single-peaked elections.
The authors consider the axis to be part of the input (for good reasons as explained in their paper).
The computation of such an axis with existing algorithms is only possible if preferences are specified by total orders and thus contain problem-irrelevant information.
If only relevant information is given, i.e., the input consists of top orders, algorithms such as those presented in this paper are required.

\subsection{Future Research}
Let us discuss some directions for future research.
When discussing our implementation and the outcome of our PrefLib analysis, we mentioned the lack of robustness of (possibly) single-peaked profiles.
This issue of robustness has been addressed by considering {\em nearly} single-peaked profiles, i.e., profiles that are close to being single-peaked according to some measure \citep{elkind2012clone,cor-gal-spa:c:spwidth,cor-gal-spa:c:spsc-width,fal-hem-hem:j:nearly-sp,bredereck2016nicelystructured,jair/ErdelyiLP17-nearlysp}. This line of work is limited to profiles of total orders; an extension to possibly single-peaked profiles could further generalize the applicability of single-peakedness.
The first step in this direction has been taken by \citet{MenonL16}, who consider profiles of top orders in which most voters have single-peaked preferences.
Furthermore, work on matrices that almost satisfy the consecutive ones property (see~Section~\ref{sec:consones}) is highly relevant for this purpose; we refer the reader to a survey by \citet{dom2009consecutive} for a literature overview. It would be of great interest to see how close to being possibly single-peaked the profiles on PrefLib are.

Finally, work by \citet{durand2003finding}, \citet{LacknerL-likelihoodSP}, and \citet{chen2018number} analyzed the number of single-peaked profiles of total orders. To extend this work to profiles of partial orders could shed light on how much generality can be gained by moving from total orders to partial orders.
Even more interesting would be a comparison of single-peaked concepts for weak orders (see~Section~\ref{sec:other}). Clearly, the possibly single-peaked definition is more general than, e.g., the single-plateaued definition---but a quantitative answer is missing.

\section*{Acknowledgements}

This work is based on a conference paper by \citet{incompletesp} and a follow-up conference paper by \citet{fitzsimmons2014single}.

Martin Lackner was supported by the Austrian Science Foundation FWF, grant P31890.
Zack Fitzsimmons was supported in part by the National Science Foundation under grant
no.\ CCF-1101452 and by a National Science Foundation Graduate Research
Fellowship under grant no.\ DGE-1102937. The research was done in part while Zack
Fitzsimmons was on research leave at Rensselaer Polytechnic Institute.
The collaboration between the two authors of this paper has been supported by COST Action IC1205 on Computational Social Choice.

We thank Edith Elkind, Piotr Faliszewski, Edith Hemaspaandra, David Narv{\'a}ez, Stanis{\l}aw Radziszowski, and the anonymous referees for their
helpful comments and suggestions.
We thank Dominik Peters for helpful discussions and the permission to use the Ti\textit{k}Z script that produced Figure~\ref{fig:weak-orders-overview}.

\newpage
\appendix
\section{Proof details}

\subsection{Correctness of the Guided Algorithm}
\label{sec:app:corr-guided}

We are going to prove that the Guided Algorithm (Algorithm~\ref{alg:untie}) is correct.
In the following, we write $a\succsim b$ to denote that either $a \succ b$ or $a\sim b$.
We consider the Guided Algorithm at any given point during its runtime.
In particular, we consider the sets $A_L$ and $A_R$ as constructed by the algorithm.
Let $V_g$ be the guiding vote in $\mathcal{P}$.

\begin{lemma}
Let $k\in[n]$, $a_{\ell}=\max_k(A_L)$, $a_r=\max_k(A_R)$, and $c_j \in \{c_i,\ldots, c_m\}$.
It either holds that $c_j\succsim_k a_r$ or that $c_j\succsim_k a_{\ell}$.\label{lem:alllarger}
\end{lemma}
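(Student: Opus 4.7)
The natural approach is induction on the iteration index $i$ of the outer loop of Algorithm~\ref{alg:untie}, viewing the lemma as an invariant maintained as long as the algorithm has not returned \nsp. Before tackling the induction, I would first dispose of the guiding vote $V_g$ (i.e., the case $k=1$) separately: since $V_g$ ranks $c_m \succ \cdots \succ c_1$ and $A_L \cup A_R \subseteq \{c_1, \ldots, c_{i-1}\}$, any $c_j$ with $j \ge i$ is ranked strictly above every candidate already placed, so both disjuncts of the claim hold trivially. Hence I may assume $k \ge 2$, which is exactly the range of the inner loop in the algorithm where the four conditions are actually checked.

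For the base case $i = 2$, the initial state is $A_L = \langle \rangle$ and $A_R = \langle c_1 \rangle$; as $\max_k(A_L)$ is undefined, I adopt the convention that the disjunct $c_j \succsim_k \max_k(A_L)$ is vacuously satisfied, so the claim holds. For the inductive step, assume the claim before iteration $i$ and analyze how $A_L$ and $A_R$ change when $c_i$ is placed. Suppose without loss of generality that $c_i$ is placed on the right; then $A_L$ is unchanged and $\max_k(A_R')$ is either the old $\max_k(A_R)$ or $c_i$ itself. In the former subcase the inductive hypothesis applied to $\{c_{i+1}, \ldots, c_m\} \subseteq \{c_i, \ldots, c_m\}$ yields the claim at once. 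In the latter subcase I would argue by contradiction: if some $c_j \in \{c_{i+1}, \ldots, c_m\}$ satisfied both $c_i \succ_k c_j$ and $\max_k(A_L) \succ_k c_j$, then because $c_j \in C_{>i}$ implies $c_j \succsim_k \min_k(C_{>i})$, both conjuncts of condition~\eqref{eq:cond1} would be satisfied by voter $k$, which would force $\mathit{right} = \mathit{false}$ and contradict the fact that $c_i$ was placed on the right. The case of $c_i$ placed on the left is entirely analogous and invokes condition~\eqref{eq:cond3}.

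The only real subtleties are bookkeeping. I need to note that only the side receiving $c_i$ can acquire a new maximum, and the degenerate case where $A_L$ first becomes non-empty is absorbed into the inductive step via the same convention used in the base case. I do not anticipate a deeper obstacle, because the placement conditions \eqref{eq:cond1}--\eqref{eq:cond4} were designed precisely to block the configurations the invariant forbids, so the induction goes through essentially by construction.
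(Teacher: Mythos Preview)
Your proof is correct, and the key step---showing that if the newly placed $c_i$ becomes the new maximum on its side and the invariant fails, then condition \eqref{eq:cond1} (resp.\ \eqref{eq:cond3}) would already have been triggered for voter $k$---is exactly the engine of the paper's argument as well.

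The packaging differs. The paper does not induct on $i$; it argues by a single direct contradiction: assuming $a_r \succ_k c_j$ and $a_\ell \succ_k c_j$, it jumps straight to the iteration $i'$ at which the \emph{later} of $a_\ell,a_r$ was placed (w.l.o.g.\ $a_r$), observes that $a_\ell$ is already in $A_L$ at that moment, and checks that \eqref{eq:cond1} fires at $i'$, contradicting placement on the right. Your induction reaches the same contradiction in Case~2 of the inductive step, with Case~1 and the base case serving only as bookkeeping that the paper sidesteps by picking the critical iteration directly. So your route is a bit more mechanical, the paper's a bit more economical, but the substantive idea is identical.
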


\begin{proof}
Without loss of generality we assume that $a_{\ell}$ was placed before $a_r$.
Towards a contradiction assume that both $a_r \succ_k c_j$ and $a_{\ell} \succ_k c_j$.
Let us consider the algorithm at the point when $a_r$ was placed ($c_{i'}=a_r$ for some $i'<i$).
We will show that (R1) is true and thus $a_r$ could not have been placed on the right-hand side.
Recall condition (R1):
\begin{align*}
c_i \succ_k \min_k(C_{>i}) &\text{\quad and \quad}
\max_k(A_L) \succ_k \min_k(C_{>i})
\end{align*}
Since $c_{i'} = a_r \succsim_k c_j \succsim_k \min_k(C_{>i'})$ and $\max_k(A_L)= a_{\ell}  \succ_k c_j \succsim_k \min_k(C_{>i'})$, (R1) is true and $c_{i'}$ could not have been placed on the right side of the axis.
\end{proof}

\begin{lemma}
Let $k\in[n]$ and $c_j = \max_k(c_i,\ldots, c_m)$.
Furthermore, let $a,a'\in A_R$ such that candidate $a$ has been placed on $A_R$ before $a'$.
Then it either holds that $a' \succsim_k c_j$ or it holds that $a' \succsim_k a$.
\label{lem:twoonsameside}%
\end{lemma}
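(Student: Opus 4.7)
The plan is to prove the statement by contradiction: assume both $c_j \succ_k a'$ and $a \succ_k a'$, and derive that condition \eqref{eq:cond2} must have held when $a'$ was being placed, contradicting the fact that the algorithm placed $a'$ on $A_R$ at that step.

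First, I would set up the time-stamps carefully. Since $a,a' \in A_R$ and $a$ was placed before $a'$, there exist indices $i_1 < i_2$ (with $i_2 \le i-1$, because we are considering the state at iteration $i$ where $c_i$ is currently under consideration) such that $a = c_{i_1}$ and $a' = c_{i_2}$. At the moment $a'$ is placed, the ``under consideration'' candidate is $c_{i_2}=a'$, the set of not-yet-placed candidates is $C_{>i_2}=\{c_{i_2+1},\dots,c_m\}$, and the right-hand side already contains $a$.

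Next, I would verify that both conjuncts of \eqref{eq:cond2} hold at that earlier iteration. For the second conjunct: since $a \in A_R$ at time $i_2$, we have $\max_k(A_R)\succsim_k a$, and combining with the assumption $a \succ_k a'$ yields $\max_k(A_R)\succ_k a'=c_{i_2}$. For the first conjunct: because $i_2 \le i-1$, the set $\{c_i,\dots,c_m\}$ is contained in $C_{>i_2}$, so $c_j \in C_{>i_2}$ and hence $\max_k(C_{>i_2})\succsim_k c_j$, which together with the assumption $c_j\succ_k a'$ gives $\max_k(C_{>i_2})\succ_k c_{i_2}$. Thus \eqref{eq:cond2} is satisfied for voter $k$ at iteration $i_2$, so the algorithm would have set $\textit{right}\leftarrow\textsl{false}$ and could not have appended $a'$ to $A_R$—a contradiction. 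Therefore $a'\succsim_k c_j$ or $a'\succsim_k a$.

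The only subtle point—the place that needs the most careful handling—is the bookkeeping to guarantee $c_j \in C_{>i_2}$ and $a \in A_R$ \emph{at the moment $a'$ was placed}, rather than at the current iteration $i$. Once this monotonicity is in place (candidates are only added, never removed, from $A_R$ across iterations; and $C_{>\cdot}$ only shrinks as $i$ grows), the contradiction from \eqref{eq:cond2} is essentially immediate, and the structure of the argument parallels that of Lemma~\ref{lem:alllarger}.
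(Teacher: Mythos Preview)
Your proof is correct and follows exactly the same approach as the paper: both argue by looking at the iteration when $a'$ was placed on $A_R$ and use that condition~\eqref{eq:cond2} must have been false at that point. The paper's proof is a two-sentence sketch, and you have simply unpacked the ``direct consequence'' with the explicit bookkeeping (verifying $a\in A_R$ and $c_j\in C_{>i_2}$ at time $i_2$).
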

\begin{proof}
We consider the algorithm at the point where $a'$ was placed on the right-hand side, i.e., in $A_R$.
At this point, condition (R2) had to be false; the fact that $a' \succsim_k c_j$ or $a' \succsim_k a$ holds is a direct consequence.
\end{proof}

\begin{proposition}
The Guided Algorithm (Algorithm~\ref{alg:untie}) is correct, i.e., it outputs an axis if and only if the given preference profile is single-peaked, and, furthermore, $\mathcal{P}$ is single-peaked with respect to any axis that is returned by the algorithm.
\end{proposition}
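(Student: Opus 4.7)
The plan is to prove both directions of correctness. By Lemma~\ref{lem:extv}, possible single-peakedness for weak orders is captured by the absence of v-valleys, so the entire argument can ignore u-valleys. I would split the proof into (i) \emph{soundness}---if Algorithm~\ref{alg:untie} returns an axis $A$, then $\mathcal{P}$ is possibly single-peaked with respect to $A$---and (ii) \emph{completeness}---if $\mathcal{P}$ is possibly single-peaked with respect to some axis, then the algorithm returns an axis.

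For soundness I would argue by contradiction. Suppose the returned axis $A$ contains a v-valley $a \rhd b \rhd c$ in some vote $V_k$, with $a \succ_k b$ and $c \succ_k b$. The essential geometric invariant is that later-placed candidates end up more centrally on $A$, since each placement goes to the right end of $A_L$ or the left end of $A_R$. I would then distinguish cases based on which of $a,b,c$ was placed last and on which of $A_L, A_R$ each landed, and in every case pinpoint the step at which one of conditions \eqref{eq:cond1}, \eqref{eq:cond2}, \eqref{eq:cond3}, or \eqref{eq:cond4} must have fired---thereby contradicting the fact that the algorithm actually completed that placement. Lemma~\ref{lem:alllarger} is what translates ``$a \succ_k b$ with $b$ still unplaced'' into the matching condition ``$\max_k(A_L) \succ_k \min_k(C_{>i})$,'' and Lemma~\ref{lem:twoonsameside} handles the sub-case in which $b$ is the middle of the valley while $a$ and $c$ already sit on opposite sides (the case the algorithm description explicitly defers, claiming it is detected at an earlier step rather than at step $b$).

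For completeness I plan an inductive exchange argument. Assume $\mathcal{P}$ is possibly single-peaked with respect to some witness axis $A^\ast$; I would cosmetically reorient $A^\ast$ so that $c_1$ is its rightmost candidate, which is consistent with the algorithm's symmetric starting choice. The inductive hypothesis at step $i$ is that the partial structure $\langle A_L \rhd A_R\rangle$ built so far can be completed to a single-peaked witness axis. When placing $c_i$, the algorithm checks both sides against the four conditions: if some witness axis is consistent with ``right'', the greedy choice preserves the hypothesis; if only ``left'' is consistent, we take it; if neither is consistent, I would show that $\mathcal{P}$ admits no witness axis at all, using the already-established soundness direction. The interesting case is when both sides look consistent and the algorithm picks right while the current witness $A^\ast$ placed $c_i$ left. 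Here I would exhibit a new witness axis obtained from $A^\ast$ by reflecting a suitable set of unplaced candidates across the boundary between $A_L$ and $A_R$, and verify using the two lemmas that no reflected vote acquires a v-valley.

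The hard part will be this exchange step in the completeness direction. The challenge is to characterize precisely which subset of unplaced candidates can be flipped across the boundary while preserving v-valley freeness for every vote simultaneously; the invariants supplied by Lemmas~\ref{lem:alllarger} and~\ref{lem:twoonsameside} are tailored exactly for this job, constraining the relative preferences of placed and unplaced candidates so that the reflection operation remains single-peaked vote-by-vote. A secondary, lighter obstacle is bookkeeping for the amortized-time computation of the maxima and minima referenced in the conditions, but that is needed only for the runtime claim rather than correctness.
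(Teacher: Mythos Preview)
Your soundness direction is essentially the paper's: both argue by contradiction, locating a condition among (\ref{eq:cond1})--(\ref{eq:cond4}) that should have fired at some placement step. The paper organises its case split by the guiding-vote order of $a,b,c$, which is equivalent to your placement-order split since the algorithm places $c_1,c_2,\dots$ in guiding-vote order. One correction: neither lemma is used in the paper's soundness argument. The link from ``$a\succ_k b$, $b$ unplaced, $a\in A_L$'' to ``$\max_k(A_L)\succ_k\min_k(C_{>i})$'' is plain monotonicity of $\max$/$\min$, not Lemma~\ref{lem:alllarger}; and Lemma~\ref{lem:twoonsameside} concerns two candidates on the \emph{same} side of the axis, not the opposite-sides situation you describe.

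Your completeness direction is genuinely different from the paper's. The paper does \emph{not} maintain an inductive witness axis or perform an exchange step. Instead, when the algorithm outputs \nsp, it takes the specific pair of conditions that fired---one of (\ref{eq:cond1})/(\ref{eq:cond2}) together with one of (\ref{eq:cond3})/(\ref{eq:cond4}), possibly in two different votes $V,V'$---extracts the five concrete candidates those conditions name, and enumerates the handful of axis types on just those candidates that are compatible with the guiding vote, showing that each one forces a v-valley in $V$ or in $V'$. Lemmas~\ref{lem:alllarger} and~\ref{lem:twoonsameside} enter exactly here, to dispatch the residual sub-cases where the bare configuration is not immediately bad. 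This is a direct, finite obstruction argument with no invariant to carry.

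Your exchange approach is standard for greedy correctness and could likely be completed, but the step you flag as hard is genuinely so: reflecting a subset of unplaced candidates across the $A_L$/$A_R$ boundary must preserve v-valley-freeness for \emph{every} vote simultaneously, yet the non-firing of (\ref{eq:cond1}) and (\ref{eq:cond2}) constrains only how $c_i$ itself relates to the two sides, not how the remaining candidates in $C_{>i}$ interact with $A_L,A_R$ after reflection. The paper's obstruction argument sidesteps this difficulty entirely, at the price of a somewhat longer case analysis.
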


\begin{proof}
We first show that if an axis $A$ is found, the profile $\mathcal{P}$ is single-peaked with respect to $A$.
Note that the guiding vote $V_g$ is single-peaked with respect to $A$, as the Guided Algorithm constructs $A$ based on $V_g$.
Towards a contradiction assume that there is a vote $V\in \mathcal{P}$ that is not single-peaked with respect to $A$.
This means that there are three candidates $a,b,c$ with order $a\rhd b\rhd c$ on $A$, $a\succ b$ and $c\succ b$.
We have to distinguish six cases of how $a,b,c$ are ordered by the guiding vote $\succ_g$:
\begin{itemize}
\item $a\prec_g b\prec_g c$ ($a$ is placed first, then $b$, then $c$; other candidates in arbitrary order):
Let us consider the algorithm at the point when $b$ is being placed, i.e., $b=c_i$, and when the conditions for vote $V$ are being checked.
It holds that either $a\in A_L$ or $a\in A_R$.
If $a\in A_L$, observe that condition~\eqref{eq:cond4} is satisfied since $a\succ b$ and $c\succ b$.
Consequently, $b$ has to be placed on the right side ($\textit{left} =\false$).
Then it holds that $a\rhd c\rhd b$ on the axis generated by the algorithm which contradicts our assumption that $a\rhd b\rhd c$ holds.
In the case that $a\in A_R$ condition~\eqref{eq:cond2} is satisfied.
This leads to a contradiction by the same argument.
\item $c\prec_g b\prec_g a$:
This case is analogous.
\item $a\prec_g c\prec_g b$:
Now we consider the point where $c$ is being placed, i.e., $c=c_i$, and when the conditions for vote $V$ are being checked.
It holds that either $a\in A_L$ or $a\in A_R$.
If $a\in A_L$, observe that \eqref{eq:cond1} is satisfied and hence $c$ has to be placed on the left side ($\textit{right} =\false$).
Then it holds that $a\rhd c\rhd b$ on the axis generated by the algorithm which contradicts our assumption that $a\rhd b\rhd c$ holds.
In the case that $a\in A_R$, condition~\eqref{eq:cond3} is satisfied and we obtain a contradiction by the same argument.
\item $c\prec_g a\prec_g b$:
This case is analogous to the previous one.
\item $b\prec_g c\prec_g a$ or $b\prec_g a\prec_g c$:
Since $b$ is placed on $A$ before $a$ and $c$, the resulting axis cannot be $a\rhd b\rhd c$, a contradiction.
\end{itemize}

For the other direction, let us show that if the algorithm returns \nsp, then the profile $\mathcal{P}$ is not single-peaked.
First, let us observe under what conditions the algorithm returns \nsp.
There are four cases: Either conditions \eqref{eq:cond1} and \eqref{eq:cond3}, \eqref{eq:cond1} and \eqref{eq:cond4}, \eqref{eq:cond2} and \eqref{eq:cond3} or conditions \eqref{eq:cond2} and \eqref{eq:cond4} hold.
These pairs of conditions may either hold for the same vote or for two distinct votes; we denote these two votes $V$ and $V'$ although it might be that these two are the same.
\begin{itemize}
\item While placing $c_i$, condition~\eqref{eq:cond3} holds for some vote $V$ and condition~\eqref{eq:cond1} holds for some vote $V'$:\\
We have the following five candidates in these conditions: $a_r=\max_{V}(A_R)$, $c_i$, $c_j=$ $\min_{V}(C_{>i})$ in condition~\eqref{eq:cond3} and $a_{\ell}=\max_{V'}(A_L)$, $c_i$, $c'_j=\min_{V'}(C_{>i})$ in condition~\eqref{eq:cond1}.
In Figure~\ref{fig:proof-condL1andR1}, the known information about the votes $V$ and $V'$ is shown.
Since condition~\eqref{eq:cond1} and \eqref{eq:cond3} are symmetrical, we can assume without loss of generality that $a_{\ell}$ is placed before $a_r$.
Thus, we can assume that the guiding vote is $\{ c_j, c'_j\} \succ_g c_i\succ_g a_r\succ_g a_{\ell}$, as shown in the figure (the order of $c_j$ and $c_j'$ is not relevant and thus can be arbitrary).

There are four types of axes possible that are compatible with the guiding vote $V_g$. 
(The order of candidates in sets is arbitrary.)
\begin{itemize}
\item $\left\langle a_{\ell} \rhd  c_i \rhd  \{c_j, c_j'\} \rhd  a_r  \right\rangle$:
Vote $V$ is not single-peaked with respect to any axis of this type (or their reverse).
\item $\left\langle a_{\ell} \rhd  \{c_j, c_j'\} \rhd  c_i \rhd  a_r  \right\rangle$:
Vote $V'$ is not single-peaked with respect to any axis of this type (or their reverse).
\item $\left\langle a_{\ell} \rhd  a_r \rhd  \{c_j, c_j'\} \rhd  c_i \right\rangle$:
Vote $V$ is not single-peaked with respect to any axis of this type (or their reverse).
\item $\left\langle a_{\ell} \rhd  a_r \rhd  c_i \rhd  \{c_j, c_j'\} \right\rangle$:
Consider vote $V'$ and Lemma~\ref{lem:alllarger}.
Since $a_{\ell} \succ' c'_j$ it has to hold that $c'_j \succsim' a_r$.
Consequently, $a_{\ell} \succ' a_r$ and $c_i \succ' a_r$.
Thus, the candidates $a_{\ell}, a_r, c_i$ form a v-valley for vote $V'$.
\end{itemize}
Since these are all possible axes, we can conclude that the profile is not single-peaked.

\item While placing $c_i$, condition~\eqref{eq:cond4} holds for some vote $V$ and condition~\eqref{eq:cond2} holds for some vote $V'$:\\
This case is similar to the previous one.
In particular, we use the same candidate variables.
In Figure~\ref{fig:proof-condL2andR2}, the known information about the votes $V$ and $V'$ is shown.

\begin{figure}
    \centering
    \begin{minipage}{0.48\textwidth}
		\centering 
		{\renewcommand{\tabcolsep}{0.9em}
		\begin{tabular}{cccc}
			Guiding vote $V_g$ & Vote $V$ & Vote $V'$ \\
			\begin{tikzpicture}
			\tikzstyle{every node}=[circle, inner sep= 0.07cm];
			\tikzstyle{every path}=[thick]		;	
			\node (5) at (-0.4,3*0.9) {$c_j'$};
			\node (4) at (0.4,3*0.9) {$c_j$};
			\node (3) at (0,2*0.9) {$c_i$};
			\node (2) at (0,1*0.9) {$a_r$};
            \node (1) at (0,0*0.9) {$a_{\ell}$};
			\draw (4)  -- (3) -- (2) -- (1);
			\draw (5) --  (3);
			\end{tikzpicture}
			& 
			\begin{tikzpicture}
			\tikzstyle{every node}=[circle, inner sep= 0.07cm];
			\tikzstyle{every path}=[thick];
			\node (3a) at (-0.4,2*1.2) {$a_r$};
			\node (3b) at (0.4,2*1.2) {$c_i$};
			\node (2) at (0,1*1.2) {$c_j$};
			\node (invisible) at (0,0) {};
			\draw (3a) -- (2);
			\draw (3b) --(2);
			\end{tikzpicture}
			& 
			\begin{tikzpicture}
			\tikzstyle{every node}=[circle, inner sep= 0.07cm];
			\tikzstyle{every path}=[thick];
            \node (3a) at (-0.4,2*1.2) {$a_{\ell}$};
			\node (3b) at (0.4,2*1.2) {$c_i$};
			\node (2) at (0,1*1.2) {$c_j'$};
			\node (invisible) at (0,0) {};
			\draw (3a) -- (2);
			\draw (3b) --(2);
			\end{tikzpicture}
		\end{tabular}
		}
    	\caption{Conditions (L1) and (R1)}\label{fig:proof-condL1andR1}
    \end{minipage}\hfill
    \begin{minipage}{0.48\textwidth}
        \centering
		{\renewcommand{\tabcolsep}{0.9em}
		\begin{tabular}{cccc}
			Guiding vote $V_g$ & Vote $V$ & Vote $V'$ \\
			\begin{tikzpicture}
			\tikzstyle{every node}=[circle, inner sep= 0.07cm];
			\tikzstyle{every path}=[thick];
			\node (5) at (-0.4,3*0.9) {$c_j'$};
			\node (4) at (0.4,3*0.9) {$c_j$};
			\node (3) at (0,2*0.9) {$c_i$};
			\node (2) at (0,1*0.9) {$a_r$};
            \node (1) at (0,0*0.9) {$a_{\ell}$};
			\draw (4)  -- (3) -- (2) -- (1);
			\draw (5) --  (3);
			\end{tikzpicture}
			& 
			\begin{tikzpicture}
			\tikzstyle{every node}=[circle, inner sep= 0.07cm];
			\tikzstyle{every path}=[thick];
            \node (3a) at (-0.4,2*1.2) {$a_{\ell}$};
			\node (3b) at (0.4,2*1.2) {$c_j$};
			\node (2) at (0,1*1.2) {$c_i$};
			\node (invisible) at (0,0) {};
			\draw (3a) -- (2);
			\draw (3b) --(2);
			\end{tikzpicture}
			& 
			\begin{tikzpicture}
			\tikzstyle{every node}=[circle, inner sep= 0.07cm];
			\tikzstyle{every path}=[thick];
			\node (3a) at (-0.4,2*1.2) {$a_r$};
			\node (3b) at (0.4,2*1.2) {$c_j'$};
			\node (2) at (0,1*1.2) {$c_i$};
			\node (invisible) at (0,0) {};
			\draw (3a) -- (2);
			\draw (3b) --(2);
			\end{tikzpicture}
		\end{tabular}
		}
		\caption{Conditions (L2) and (R2)}\label{fig:proof-condL2andR2}
    \end{minipage}
\end{figure}

There are four types of axes possible that are compatible with $V_g$. 
(The order of candidates in sets is arbitrary.)
\begin{itemize}
\item $\left\langle a_{\ell} \rhd  c_i \rhd  \{c_j, c_j'\} \rhd  a_r  \right\rangle$:
Vote $V$ is not single-peaked with respect to any axis of this type (or their reverse).
\item $\left\langle a_{\ell} \rhd  \{c_j, c_j'\} \rhd  c_i \rhd  a_r  \right\rangle$:
Vote $V'$ is not single-peaked with respect to any axis of this type (or their reverse).
\item $\left\langle a_{\ell} \rhd  a_r \rhd  \{c_j, c_j'\} \rhd  c_i \right\rangle$:
Consider vote $V$ and Lemma~\ref{lem:alllarger}.
Since $a_{\ell} \succ c_i$ it has to hold that $c_i \succsim a_r$.
Thus, the candidates $a_{\ell}, a_r, c_j$ form a v-valley for vote $V$.
\item $\left\langle a_{\ell} \rhd  a_r \rhd  c_i \rhd  \{c_j, c_j'\} \right\rangle$:
Vote $V'$ is not single-peaked with respect to any axis of this type (or their reverse).
\end{itemize}

\item While placing $c_i$, condition~\eqref{eq:cond3} holds for vote $V$ and condition~\eqref{eq:cond2} holds for vote $V'$:\\
We have the following five candidates in these conditions: $a_r=\max_{V}(A_R)$ $c_i$, $c_j=\min_{V}(C_{>i})$ in condition~\eqref{eq:cond3} and $a'_r=\max_{V'}(A_R)$, $c_i$, $c'_j=\min_{V'}(C_{>i})$ in condition~\eqref{eq:cond2}.
In Figure~\ref{fig:proof-condL1andR2}, the known information about the votes $V$ and $V'$ is shown.
In the following arguments it is irrelevant which of $c_j$ and $c_j'$ is placed first.
However, for $a_r$ and $a_r'$ this is relevant.
We will consider both possibilities (cases 1 and 2 in Figure~\ref{fig:proof-condL1andR2}).
There are four types of axes possible that are compatible with either of these two guiding votes. 
(The order of candidates in sets is arbitrary.)
\begin{itemize}
\item $\left\langle \{a_r,a_r'\} \rhd  \{c_j, c_j'\} \rhd  c_i  \right\rangle$:
Vote $V$ is not single-peaked with respect to any axis of this type (or their reverse).
\item $\left\langle \{a_r,a_r'\} \rhd  c_i \rhd  \{c_j, c_j'\}  \right\rangle$:
Vote $V'$ is not single-peaked with respect to any axis of this type (or their reverse).
\item $\left\langle a_r \rhd  \{c_j, c_j'\} \rhd  c_i \rhd  a_r' \right\rangle$:
Both $V$ and $V'$ are not single-peaked with respect to any axis of this type (or their reverse).
\item $\left\langle a_r \rhd  c_i \rhd  \{c_j, c_j'\} \rhd  a_r' \right\rangle$:
Here we have to distinguish whether $a_r$ or $a_r'$ is placed first (case 1 and 2 in Figure~\ref{fig:proof-condL1andR2}).
\begin{figure}
\begin{center}
{\renewcommand{\tabcolsep}{0.9em}
\begin{tabular}{cccc}
	Guiding vote $V_g$ & Guiding vote $V_g$ & Vote $V$ & Vote $V'$ \\
	(case 1) & (case 2) & &\\
	\begin{tikzpicture}
	\tikzstyle{every node}=[circle, inner sep= 0.07cm];
	\tikzstyle{every path}=[thick]
	\node (5) at (-0.4,3) {$c_j'$};
	\node (4) at (0.4,3) {$c_j$};
	\node (3) at (0,2) {$c_i$};
	\node (2) at (0,1) {$a_r$};
	\node (1) at (0,0) {$a_r'$};
	\draw (4)  -- (3) -- (2) -- (1);
	\draw (5) --  (3);
	\end{tikzpicture}
	& 
	\begin{tikzpicture}
	\tikzstyle{every node}=[circle, inner sep= 0.07cm];
	\tikzstyle{every path}=[thick]
	\node (5) at (-0.4,3) {$c_j'$};
	\node (4) at (0.4,3) {$c_j$};
	\node (3) at (0,2) {$c_i$};
	\node (2) at (0,1) {$a_r'$};
	\node (1) at (0,0) {$a_r$};
	\draw (4)  -- (3) -- (2) -- (1);
	\draw (5) --  (3);
	\end{tikzpicture}
	& 
	\begin{tikzpicture}
	\tikzstyle{every node}=[circle, inner sep= 0.07cm];
	\tikzstyle{every path}=[thick]
	\node (3a) at (-0.4,2*1.2) {$c_i$};
	\node (3b) at (0.4,2*1.2) {$a_r$};
	\node (2) at (0,1*1.2) {$c_j$};
	\node (invisible) at (0,0) {};
	\draw (3a) -- (2);
	\draw (3b) --(2);
	\end{tikzpicture}
	& 
	\begin{tikzpicture}
	\tikzstyle{every node}=[circle, inner sep= 0.07cm];
	\tikzstyle{every path}=[thick]
	\node (3a) at (-0.4,2*1.2) {$c_j'$};
	\node (3b) at (0.4,2*1.2) {$a_r'$};
	\node (2) at (0,1*1.2) {$c_i$};
	\node (invisible) at (0,0) {};
	\draw (3a) -- (2);
	\draw (3b) --(2);
	\end{tikzpicture}
\end{tabular}
}
\end{center}
\caption{Condition (L1) and Condition (R2)}
\label{fig:proof-condL1andR2}
\end{figure}
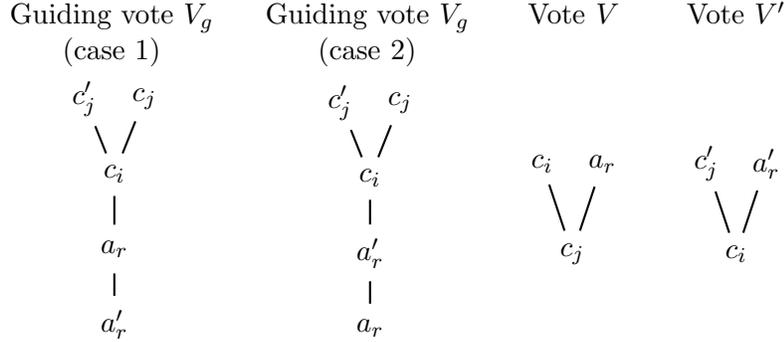
\begin{itemize}
\item[Case 1 ($a_r'\prec_g a_r$):]
Lemma~\ref{lem:twoonsameside} applied to $V'$ implies 
that either (i) $a_r \succsim' c_i$ or (ii) $a_r \succsim' a_r'$.
In both cases we encounter a valley with respect to vote $V'$ for the candidates $\{a_r,a_r',c_i\}$.
\item[Case 2 ($a_r\prec_g a_r'$):]
This case is similar to the previous.
Lemma~\ref{lem:twoonsameside} applied to $V$ implies 
that either (i) $a_r \succsim c_j$ or (ii) $a_r' \succsim a_r$.
In both cases we encounter a valley with respect to vote $V$ for the candidates $\{a_r',a_r,c_j\}$.
\end{itemize}
We see that in both cases either $V$ or $V'$ is not single-peaked with respect to any axis of this type (or their reverse).
\end{itemize}

\item While placing $c_i$, condition~\eqref{eq:cond4} holds for some vote and condition~\eqref{eq:cond1} holds for some vote:\\
This can be shown analogously to the previous case since \eqref{eq:cond1} and \eqref{eq:cond3} are symmetrical as well as \eqref{eq:cond2} and \eqref{eq:cond4}, see~Figure~\ref{fig:cond}.
\end{itemize}
We have shown that if the algorithm returns \nsp then the profile $\mathcal{P}$ is indeed not single-peaked.
\end{proof}

\subsection{Runtime and Correctness for the Unguided Algorithm}
\label{sec:app:unguided}

\begin{reptheorem}{thm:unguided}\thmunguidedtext
\end{reptheorem}

\begin{proof}
Let us first discuss the runtime of Algorithm~\ref{alg:unguided}. 
Some functions have to be precomputed to achieve the $\bigO(m^2\cdot n)$ runtime,
 in particular the function \texttt{IntersectingVote}.
The function $\texttt{IntersectingVote}(A)$ returns a vote $V$ with $A\cap V\neq \emptyset$ and $V\setminus A\neq \emptyset$.
We show that it suffices to compute a list of $2m$ votes to answer \texttt{IntersectingVote} function calls in constant time.
Let us first make the following observation:
Let $c\in C$.
Consider the set of votes for which the sets $\{c'\in C \mid c' \succ c\}$ are maximal (with respect to $\subseteq$).
If we consider a single-peaked axis, then candidates in such a set have to form a contiguous subsequence either directly left or directly right of $c$.
Since these sets are maximal, only two of them can exist (assuming single-peakedness).
Consequently, we compute these maximal sets for each candidate.
If three or more exist for one candidate, we can terminate the algorithm already at this point.
Also, if two maximal sets have a non-empty intersection, the algorithm terminates.
(The candidates in the intersection would have to lie left and right of $c$).
Otherwise we store the (at most) two corresponding votes for each candidate.

Let $A=\left\langle a_1 \rhd  \ldots \rhd  a_{i-1} \rhd  a_{i} \right\rangle$, i.e., $a_i$ is the rightmost candidate in the incomplete axis $A$.
The function call $\texttt{IntersectingVote}(A)$ can now be answered by considering the one or two maximal votes for $a_i$.
The function simply returns the vote where $a_{i-1}$ is not ranked higher than $a_i$.
(It might be that both votes do not rank $a_{i-1}$ higher than $a_i$.
In this case $A$ cannot be extended to a single-peaked axis, but this is going to be detected by algorithm.
Any of the two axes can be returned.)

It remains to observe that finding the (at most two) maximal votes for a candidate $c$ requires $\bigO(m\cdot n)$ time.
This has to be done for every candidate and consequently this preprocessing requires $\bigO(m^2\cdot n)$ time.

We can now analyze the runtime of the algorithm.
The main loop (line~\ref{line:cstart}) iterates over all $m$ candidates.
The loop in line~\ref{line:foreachvotewithpeaki} iterates over every vote at most once.
Consequently, the $\oplus$ operator is applied at most $n$ times.
Since $A\oplus V$ can be computed in $\bigO(m)$ time, the lines~\ref{line:foreachvotewithpeaki} to \ref{line:end:foreachvotewithpeaki} have a total runtime of $\bigO(m^2\cdot n)$.

It remains to determine the runtime of the lines~\ref{line:nomorepeakinA} to \ref{line:end:nomorepeakinA}.
Due to the preprocessing of the \texttt{IntersectingVote} procedure we can obtain $V$ in constant time.
The profile $\mathcal{P}'$ can be generated in $\bigO(\card{C'}\cdot n)$ time.
Applying the Guided Algorithm requires $\bigO(\card{C'}\cdot n)$ time as well (Theorem~\ref{thm:guided-weak}).
Observe that after applying the Guided Algorithm the candidates in $C'$ are placed on the axis.
Consequently, the Guided Algorithm is always applied to a disjoint set of candidates (except maybe $a_i$).
Hence for a fixed $c_\text{start}\in C$, the total runtime of the Guided Algorithm is $\bigO(m\cdot n)$.
Taking the loop in line~\ref{line:cstart} into account, we obtain a total runtime of $\bigO(m^2\cdot n)$.

Let us now show that the Unguided Algorithm (Algorithm~\ref{alg:unguided}) is correct, i.e., it outputs an axis if and only if the given preference profile is single-peaked and, furthermore, $\mathcal{P}$ is single-peaked with respect to any axis that is returned by the algorithm.
If the algorithm outputs an axis, it is certainly single-peaked since this is tested for every vote in line~\ref{line:oplus}.
By the same argument, one can conclude that if the profile is not single-peaked, the algorithm returns \nsp.

It remains to prove that the algorithm always returns a valid axis in case of a single-peaked profile.
Let us consider the algorithm at the time when $c_\text{start}$ is the leftmost candidate of a valid axis.
We show that the algorithm will find a complete axis with $c_\text{start}$ as leftmost candidate.
First, observe that for every $i$ in line~\ref{line:fori} either there is a vote $V$ with $a_i$ as its top-ranked candidate or the condition in line~\ref{line:nomorepeakinA} is true.
In the first case, it is easy to verify that the $\oplus$ operator adds candidates to the axis in the only possible way.
In the second case, 
an intersecting vote is found as guaranteed by the connectedness condition.
Then the Guided Algorithm is applied and the axis is extended by the candidates in $C'$.
It remains to verify that the resulting axis $A$ is single-peaked for all votes with a non-empty intersection with $C'$.
This is guaranteed by the $x$ element, which ensures that candidates outside of $A\cup C'$ are taken into account.
\end{proof}

\bibliography{litfull}

\begin{thebibliography}{}

\bibitem[\protect\BCAY{Arrow}{Arrow}{1951}]{arr:b:polsci:social-choice}
Arrow, K. \BBOP1951\BBCP.
\newblock {\Bem Social Choice and Individual Values}.
\newblock John Wiley and Sons.

\bibitem[\protect\BCAY{Aspvall, Plass,\ \BBA\ Tarjan}{Aspvall
  et~al.}{1979}]{aspvalllinear-time1979}
Aspvall, B., Plass, M., \BBA\ Tarjan, R. \BBOP1979\BBCP.
\newblock \BBOQ A linear-time algorithm for testing the truth of certain
  quantified {B}oolean formulas\BBCQ\
\newblock {\Bem Information Processing Letters}, {\Bem 8\/}(3), 121--123.

\bibitem[\protect\BCAY{Aziz}{Aziz}{2014}]{aziz2014testing}
Aziz, H. \BBOP2014\BBCP.
\newblock \BBOQ Testing top monotonicity\BBCQ\
\newblock \BTR\ arXiv:1403.7625~[cs.GT], arXiv.org.

\bibitem[\protect\BCAY{Ballester\ \BBA\ Haeringer}{Ballester\ \BBA\
  Haeringer}{2011}]{bal-hae:j:characterization-single-peaked}
Ballester, M.~A.\BBACOMMA\  \BBA\ Haeringer, G. \BBOP2011\BBCP.
\newblock \BBOQ A characterization of the single-peaked domain\BBCQ\
\newblock {\Bem Social Choice and Welfare}, {\Bem 36\/}(2), 305--322.

\bibitem[\protect\BCAY{Barber\`a}{Barber\`a}{2007}]{bar:j:indifference-domain}
Barber\`a, S. \BBOP2007\BBCP.
\newblock \BBOQ Indifferences and domain restrictions\BBCQ\
\newblock {\Bem Analyse \& Kritik}, {\Bem 29\/}(2), 146--162.

\bibitem[\protect\BCAY{Barber\`a\ \BBA\ Moreno}{Barber\`a\ \BBA\
  Moreno}{2011}]{bar-mor:j:top-monotonicity}
Barber\`a, S.\BBACOMMA\  \BBA\ Moreno, B. \BBOP2011\BBCP.
\newblock \BBOQ Top monotonicity: {A} common root for single peakedness, single
  crossing and the median voter result\BBCQ\
\newblock {\Bem Games and Economic Behavior}, {\Bem 73\/}(2), 345--359.

\bibitem[\protect\BCAY{{{Bartholdi}}, Tovey,\ \BBA\ Trick}{{{Bartholdi}}
  et~al.}{1989}]{bar-tov-tri:j:who-won}
{{Bartholdi}}, III, J., Tovey, C., \BBA\ Trick, M. \BBOP1989\BBCP.
\newblock \BBOQ Voting schemes for which it can be difficult to tell who won
  the election\BBCQ\
\newblock {\Bem Social Choice and Welfare}, {\Bem 6\/}(2), 157--165.

\bibitem[\protect\BCAY{{{Bartholdi}}\ \BBA\ Trick}{{{Bartholdi}}\ \BBA\
  Trick}{1986}]{bar-tri:j:stable-matching-from-psychological-model}
{{Bartholdi}}, III, J.\BBACOMMA\  \BBA\ Trick, M. \BBOP1986\BBCP.
\newblock \BBOQ Stable matching with preferences derived from a psychological
  model\BBCQ\
\newblock {\Bem Operations Research Letters}, {\Bem 5\/}(4), 165--169.

\bibitem[\protect\BCAY{Baumeister, Faliszewski, Lang,\ \BBA\ Rothe}{Baumeister
  et~al.}{2012}]{Baumeister2012}
Baumeister, D., Faliszewski, P., Lang, J., \BBA\ Rothe, J. \BBOP2012\BBCP.
\newblock \BBOQ Campaigns for lazy voters: {T}runcated ballots\BBCQ\
\newblock In {\Bem Proceedings of the 11th International Conference on
  Autonomous Agents and Multiagent Systems (AAMAS 2012)}, \BPGS\ 577--584.
  IFAAMAS.

\bibitem[\protect\BCAY{Baumeister\ \BBA\ Rothe}{Baumeister\ \BBA\
  Rothe}{2012}]{Baumeister2012186}
Baumeister, D.\BBACOMMA\  \BBA\ Rothe, J. \BBOP2012\BBCP.
\newblock \BBOQ Taking the final step to a full dichotomy of the possible
  winner problem in pure scoring rules\BBCQ\
\newblock {\Bem Information Processing Letters}, {\Bem 112\/}(5), 186 -- 190.

\bibitem[\protect\BCAY{Betzler, Slinko,\ \BBA\ Uhlmann}{Betzler
  et~al.}{2013}]{bet-sli-uhl:j:mon-cc}
Betzler, N., Slinko, A., \BBA\ Uhlmann, J. \BBOP2013\BBCP.
\newblock \BBOQ On the computation of fully proportional representation\BBCQ\
\newblock {\Bem Journal of Artificial Intelligence Research}, {\Bem 47},
  475--519.

\bibitem[\protect\BCAY{Betzler, Bredereck,\ \BBA\ Niedermeier}{Betzler
  et~al.}{2014}]{BetzlerBN14}
Betzler, N., Bredereck, R., \BBA\ Niedermeier, R. \BBOP2014\BBCP.
\newblock \BBOQ Theoretical and empirical evaluation of data reduction for
  exact {K}emeny rank aggregation\BBCQ\
\newblock {\Bem Autonomous Agents and Multi-Agent Systems}, {\Bem 28\/}(5),
  721--748.

\bibitem[\protect\BCAY{Betzler\ \BBA\ Dorn}{Betzler\ \BBA\
  Dorn}{2010}]{betzler2010towards}
Betzler, N.\BBACOMMA\  \BBA\ Dorn, B. \BBOP2010\BBCP.
\newblock \BBOQ Towards a dichotomy for the possible winner problem in
  elections based on scoring rules\BBCQ\
\newblock {\Bem Journal of Computer and System Sciences}, {\Bem 76\/}(8),
  812--836.

\bibitem[\protect\BCAY{Betzler, Hemmann,\ \BBA\ Niedermeier}{Betzler
  et~al.}{2009}]{betzler2009multivariate}
Betzler, N., Hemmann, S., \BBA\ Niedermeier, R. \BBOP2009\BBCP.
\newblock \BBOQ A multivariate complexity analysis of determining possible
  winners given incomplete votes\BBCQ\
\newblock In {\Bem Proceedings of the 21st International Joint Conference on
  Artificial Intelligence (IJCAI 2009)}, \BPGS\ 53--58.

\bibitem[\protect\BCAY{Black}{Black}{1948}]{bla:j:rationale-of-group-decision-making}
Black, D. \BBOP1948\BBCP.
\newblock \BBOQ On the rationale of group decision-making\BBCQ\
\newblock {\Bem Journal of Political Economy}, {\Bem 56\/}(1), 23--34.

\bibitem[\protect\BCAY{Black}{Black}{1958}]{bla:b:polsci:committees-elections}
Black, D. \BBOP1958\BBCP.
\newblock {\Bem The Theory of Committees and Elections}.
\newblock Cambridge University Press.

\bibitem[\protect\BCAY{Booth\ \BBA\ Lueker}{Booth\ \BBA\
  Lueker}{1976}]{boo-lue:j:consecutive-ones-property}
Booth, K.\BBACOMMA\  \BBA\ Lueker, G. \BBOP1976\BBCP.
\newblock \BBOQ Testing for the consecutive ones property, interval graphs, and
  graph planarity using {PQ}-tree algorithms\BBCQ\
\newblock {\Bem Journal of Computer and System Sciences}, {\Bem 13\/}(3),
  335--379.

\bibitem[\protect\BCAY{Boutilier\ \BBA\ Rosenschein}{Boutilier\ \BBA\
  Rosenschein}{2016}]{brandt2015handbook-chapter10}
Boutilier, C.\BBACOMMA\  \BBA\ Rosenschein, J. \BBOP2016\BBCP.
\newblock \BBOQ Incomplete information and communication in voting\BBCQ\
\newblock In Brandt, F., Conitzer, V., Endriss, U., Lang, J., \BBA\ Procaccia,
  A.\BEDS, {\Bem Handbook of Computational Social Choice}. Cambridge University
  Press.

\bibitem[\protect\BCAY{Brandt, Brill, Hemaspaandra,\ \BBA\ Hemaspaandra}{Brandt
  et~al.}{2015}]{bra-bri-hem-hem:j:sp2}
Brandt, F., Brill, M., Hemaspaandra, E., \BBA\ Hemaspaandra, L. \BBOP2015\BBCP.
\newblock \BBOQ Bypassing combinatorial protections: Polynomial-time algorithms
  for single-peaked electorates\BBCQ\
\newblock {\Bem Journal of Artificial Intelligence Research}, {\Bem 53},
  439--496.

\bibitem[\protect\BCAY{Bredereck, Chen,\ \BBA\ Woeginger}{Bredereck
  et~al.}{2013}]{bre-che-woe:j:single-crossing}
Bredereck, R., Chen, J., \BBA\ Woeginger, G. \BBOP2013\BBCP.
\newblock \BBOQ A characterization of the single-crossing domain\BBCQ\
\newblock {\Bem Social Choice and Welfare}, {\Bem 41\/}(4), 989--998.

\bibitem[\protect\BCAY{Bredereck, Chen,\ \BBA\ Woeginger}{Bredereck
  et~al.}{2016}]{bredereck2016nicelystructured}
Bredereck, R., Chen, J., \BBA\ Woeginger, G.~J. \BBOP2016\BBCP.
\newblock \BBOQ Are there any nicely structured preference profiles
  nearby?\BBCQ\
\newblock {\Bem Mathematical Social Sciences}, {\Bem 79}, 61--73.

\bibitem[\protect\BCAY{Cantala}{Cantala}{2004}]{can:j:single-peaked-outside-option}
Cantala, D. \BBOP2004\BBCP.
\newblock \BBOQ Choosing the level of a public good when agents have an outside
  option\BBCQ\
\newblock {\Bem Social Choice and Welfare}, {\Bem 22\/}(3), 491--514.

\bibitem[\protect\BCAY{Chen\ \BBA\ Finnendahl}{Chen\ \BBA\
  Finnendahl}{2018}]{chen2018number}
Chen, J.\BBACOMMA\  \BBA\ Finnendahl, U. \BBOP2018\BBCP.
\newblock \BBOQ On the number of single-peaked narcissistic or single-crossing
  narcissistic preference profiles\BBCQ\
\newblock {\Bem Discrete Mathematics}, {\Bem 341\/}(5), 1225--1236.

\bibitem[\protect\BCAY{Chen, Pruhs,\ \BBA\ Woeginger}{Chen
  et~al.}{2017}]{chen2017one}
Chen, J., Pruhs, K.~R., \BBA\ Woeginger, G.~J. \BBOP2017\BBCP.
\newblock \BBOQ The one-dimensional euclidean domain: finitely many
  obstructions are not enough\BBCQ\
\newblock {\Bem Social Choice and Welfare}, {\Bem 48\/}(2), 409--432.

\bibitem[\protect\BCAY{Conitzer}{Conitzer}{2009}]{con:j:single-peaked}
Conitzer, V. \BBOP2009\BBCP.
\newblock \BBOQ Eliciting single-peaked preferences using comparison
  queries\BBCQ\
\newblock {\Bem Journal of Artificial Intelligence Research}, {\Bem 35},
  161--191.

\bibitem[\protect\BCAY{Cornaz, Galand,\ \BBA\ Spanjaard}{Cornaz
  et~al.}{2013}]{cor-gal-spa:c:spsc-width}
Cornaz, D., Galand, L., \BBA\ Spanjaard, O. \BBOP2013\BBCP.
\newblock \BBOQ Kemeny elections with bounded single-peaked or single-crossing
  width\BBCQ\
\newblock In {\Bem Proceedings of the 23rd International Joint Conference on
  Artificial Intelligence (IJCAI 2013)}, \BPGS\ 76--82. IJCAI/AAAI Press.

\bibitem[\protect\BCAY{Cornaz, Galand,\ \BBA\ Spanjaard}{Cornaz
  et~al.}{2012}]{cor-gal-spa:c:spwidth}
Cornaz, D., Galand, L., \BBA\ Spanjaard, O. \BBOP2012\BBCP.
\newblock \BBOQ Bounded single-peaked width and proportional
  representation\BBCQ\
\newblock In {\Bem Proceedings of the 20th European Conference on Artificial
  Intelligence (ECAI 2012)}, \BPGS\ 270--275. IOS Press.

\bibitem[\protect\BCAY{De~Gemmis, Iaquinta, Lops, Musto, Narducci,\ \BBA\
  Semeraro}{De~Gemmis et~al.}{2010}]{de2010learning}
De~Gemmis, M., Iaquinta, L., Lops, P., Musto, C., Narducci, F., \BBA\ Semeraro,
  G. \BBOP2010\BBCP.
\newblock \BBOQ Learning preference models in recommender systems\BBCQ\
\newblock In {\Bem Preference Learning}, \BPGS\ 387--407. Springer.

\bibitem[\protect\BCAY{Dery, Kalech, Rokach,\ \BBA\ Shapira}{Dery
  et~al.}{2014}]{dery2014reaching}
Dery, L.~N., Kalech, M., Rokach, L., \BBA\ Shapira, B. \BBOP2014\BBCP.
\newblock \BBOQ Reaching a joint decision with minimal elicitation of voter
  preferences\BBCQ\
\newblock {\Bem Information Sciences}, {\Bem 278}, 466--487.

\bibitem[\protect\BCAY{Doignon\ \BBA\ Falmagne}{Doignon\ \BBA\
  Falmagne}{1994}]{doignon1994polynomial}
Doignon, J.-P.\BBACOMMA\  \BBA\ Falmagne, J.-C. \BBOP1994\BBCP.
\newblock \BBOQ A polynomial time algorithm for unidimensional unfolding
  representations\BBCQ\
\newblock {\Bem Journal of Algorithms}, {\Bem 16\/}(2), 218--233.

\bibitem[\protect\BCAY{Dom}{Dom}{2009}]{dom2009consecutive}
Dom, M. \BBOP2009\BBCP.
\newblock \BBOQ Algorithmic aspects of the consecutive-ones property\BBCQ\
\newblock In {\Bem Bulletin of the European Association for Theoretical
  Computer Science}, \lowercase{\BVOL}~98, \BPGS\ 27--59. EATCS.

\bibitem[\protect\BCAY{Dummett\ \BBA\ Farquharson}{Dummett\ \BBA\
  Farquharson}{1961}]{dummett1961stability}
Dummett, M.\BBACOMMA\  \BBA\ Farquharson, R. \BBOP1961\BBCP.
\newblock \BBOQ Stability in voting\BBCQ\
\newblock {\Bem Econometrica}, {\Bem 29\/}(1), 33--43.

\bibitem[\protect\BCAY{Durand}{Durand}{2003}]{durand2003finding}
Durand, S. \BBOP2003\BBCP.
\newblock \BBOQ Finding sharper distinctions for conditions of transitivity of
  the majority method\BBCQ\
\newblock {\Bem Discrete applied mathematics}, {\Bem 131\/}(3), 577--595.

\bibitem[\protect\BCAY{Dwork, Kumar, Naor,\ \BBA\ Sivakumar}{Dwork
  et~al.}{2001}]{dwo-kum-nao-siv:c:rank-aggregation}
Dwork, C., Kumar, R., Naor, M., \BBA\ Sivakumar, D. \BBOP2001\BBCP.
\newblock \BBOQ Rank aggregation methods for the web\BBCQ\
\newblock In {\Bem Proceedings of the 10th International World Wide Web
  Conference (WWW 2001)}, \BPGS\ 613--622. ACM Press.

\bibitem[\protect\BCAY{Elkind, Faliszewski, Lackner,\ \BBA\ Obraztsova}{Elkind
  et~al.}{2015}]{elkind2015incompletecrossing}
Elkind, E., Faliszewski, P., Lackner, M., \BBA\ Obraztsova, S. \BBOP2015\BBCP.
\newblock \BBOQ The complexity of recognizing incomplete single-crossing
  preferences\BBCQ\
\newblock In {\Bem Proceedings of the 29th AAAI Conference on Artificial
  Intelligence (AAAI 2015)}, \BPGS\ 865--871. AAAI Press.

\bibitem[\protect\BCAY{Elkind, Faliszewski,\ \BBA\ Slinko}{Elkind
  et~al.}{2012}]{elkind2012clone}
Elkind, E., Faliszewski, P., \BBA\ Slinko, A. \BBOP2012\BBCP.
\newblock \BBOQ Clone structures in voters' preferences\BBCQ\
\newblock In {\Bem Proceedings of the 13th ACM conference on electronic
  commerce (EC 2012)}, \BPGS\ 496--513. ACM.

\bibitem[\protect\BCAY{Elkind\ \BBA\ Ismaili}{Elkind\ \BBA\
  Ismaili}{2015}]{elkind2015owa}
Elkind, E.\BBACOMMA\  \BBA\ Ismaili, A. \BBOP2015\BBCP.
\newblock \BBOQ {OWA}-based extensions of the chamberlin--courant rule\BBCQ\
\newblock In {\Bem Proceedings of the 4th International Conference on
  Algorithmic Decision Theory (ADT 2015)}, \BPGS\ 486--502. Springer.

\bibitem[\protect\BCAY{Elkind\ \BBA\ Lackner}{Elkind\ \BBA\
  Lackner}{2015}]{elkind2015structure}
Elkind, E.\BBACOMMA\  \BBA\ Lackner, M. \BBOP2015\BBCP.
\newblock \BBOQ Structure in dichotomous preferences\BBCQ\
\newblock In {\Bem Proceedings of the 24th International Joint Conference on
  Artificial Intelligence (IJCAI 2015)}, \BPGS\ 2019--2025.

\bibitem[\protect\BCAY{Elkind, Lackner,\ \BBA\ Peters}{Elkind
  et~al.}{2016}]{elkind2016preferencerestrictions}
Elkind, E., Lackner, M., \BBA\ Peters, D. \BBOP2016\BBCP.
\newblock \BBOQ Preference restrictions in computational social choice:
  {R}ecent progress\BBCQ\
\newblock In {\Bem Proceedings of the 25th International Joint Conference on
  Artificial Intelligence (IJCAI 2016)}, \BPGS\ 4062--4065. IJCAI/AAAI Press.

\bibitem[\protect\BCAY{Elkind, Lackner,\ \BBA\ Peters}{Elkind
  et~al.}{2017}]{ELP-trends}
Elkind, E., Lackner, M., \BBA\ Peters, D. \BBOP2017\BBCP.
\newblock \BBOQ Structured preferences\BBCQ\
\newblock In Endriss, U.\BED, {\Bem Trends in Computational Social Choice},
  \BCH~10, \BPGS\ 187--207. AI Access.

\bibitem[\protect\BCAY{Erd\'{e}lyi, Lackner,\ \BBA\ Pfandler}{Erd\'{e}lyi
  et~al.}{2017}]{jair/ErdelyiLP17-nearlysp}
Erd\'{e}lyi, G., Lackner, M., \BBA\ Pfandler, A. \BBOP2017\BBCP.
\newblock \BBOQ Computational aspects of nearly single-peaked electorates\BBCQ\
\newblock {\Bem Journal of Artificial Intelligence Research}, {\Bem 58},
  297--337.

\bibitem[\protect\BCAY{Escoffier, Lang,\ \BBA\ {\"O}zt{\"u}rk}{Escoffier
  et~al.}{2008}]{esc-lan-ozt:c:single-peaked-consistency}
Escoffier, B., Lang, J., \BBA\ {\"O}zt{\"u}rk, M. \BBOP2008\BBCP.
\newblock \BBOQ Single-peaked consistency and its complexity\BBCQ\
\newblock In {\Bem Proceedings of the 18th European Conference on Artificial
  Intelligence (ECAI 2008)}, \BPGS\ 366--370.

\bibitem[\protect\BCAY{Fagin, Kumar, Mahdian, Sivakumar,\ \BBA\ Vee}{Fagin
  et~al.}{2006}]{fagin2006comparing}
Fagin, R., Kumar, R., Mahdian, M., Sivakumar, D., \BBA\ Vee, E. \BBOP2006\BBCP.
\newblock \BBOQ Comparing partial rankings\BBCQ\
\newblock {\Bem SIAM Journal on Discrete Mathematics}, {\Bem 20\/}(3),
  628--648.

\bibitem[\protect\BCAY{Fagin, Kumar,\ \BBA\ Sivakumar}{Fagin
  et~al.}{2003}]{fagin2003comparing}
Fagin, R., Kumar, R., \BBA\ Sivakumar, D. \BBOP2003\BBCP.
\newblock \BBOQ Comparing top k lists\BBCQ\
\newblock {\Bem SIAM Journal on Discrete Mathematics}, {\Bem 17\/}(1),
  134--160.

\bibitem[\protect\BCAY{Faliszewski, Hemaspaandra,\ \BBA\
  Hemaspaandra}{Faliszewski et~al.}{2014}]{fal-hem-hem:j:nearly-sp}
Faliszewski, P., Hemaspaandra, E., \BBA\ Hemaspaandra, L. \BBOP2014\BBCP.
\newblock \BBOQ The complexity of manipulative attacks in nearly single-peaked
  electorates\BBCQ\
\newblock {\Bem Artificial Intelligence}, {\Bem 207}, 69--99.

\bibitem[\protect\BCAY{Faliszewski, Hemaspaandra, Hemaspaandra,\ \BBA\
  Rothe}{Faliszewski et~al.}{2011}]{fal-hem-hem-rot:j:shield}
Faliszewski, P., Hemaspaandra, E., Hemaspaandra, L., \BBA\ Rothe, J.
  \BBOP2011\BBCP.
\newblock \BBOQ The shield that never was: Societies with single-peaked
  preferences are more open to manipulation and control\BBCQ\
\newblock {\Bem Information and Computation}, {\Bem 209\/}(2), 89--107.

\bibitem[\protect\BCAY{Fishburn}{Fishburn}{1973}]{fis:b:theory}
Fishburn, P. \BBOP1973\BBCP.
\newblock {\Bem The Theory of Social Choice}.
\newblock Princeton University Press.

\bibitem[\protect\BCAY{Fitzsimmons\ \BBA\ Hemaspaandra}{Fitzsimmons\ \BBA\
  Hemaspaandra}{2015}]{fit-hem:c:voting-with-ties}
Fitzsimmons, Z.\BBACOMMA\  \BBA\ Hemaspaandra, E. \BBOP2015\BBCP.
\newblock \BBOQ Complexity of manipulative actions when voting with ties\BBCQ\
\newblock In {\Bem Proceedings of the 4th International Conference on
  Algorithmic Decision Theory (ADT 2015)}, \BPGS\ 103--119. Springer.

\bibitem[\protect\BCAY{Fitzsimmons}{Fitzsimmons}{2015}]{fitzsimmons2014single}
Fitzsimmons, Z. \BBOP2015\BBCP.
\newblock \BBOQ Single-peaked consistency for weak orders is easy\BBCQ\
\newblock In {\Bem Proceedings of the 15th Conference on Theoretical Aspects of
  Rationality and Knowledge (TARK 2015)}, \BPGS\ 127--140.

\bibitem[\protect\BCAY{Fitzsimmons\ \BBA\ Hemaspaandra}{Fitzsimmons\ \BBA\
  Hemaspaandra}{2016}]{FitzsimmonsH16}
Fitzsimmons, Z.\BBACOMMA\  \BBA\ Hemaspaandra, E. \BBOP2016\BBCP.
\newblock \BBOQ Modeling single-peakedness for votes with ties\BBCQ\
\newblock In {\Bem Proceedings of the 8th European Starting {AI} Researcher
  Symposium (STAIRS 2016)}, \BPGS\ 63--74. {IOS} Press.

\bibitem[\protect\BCAY{Fulkerson\ \BBA\ Gross}{Fulkerson\ \BBA\
  Gross}{1965}]{ful-gro:j:contiguous-ones}
Fulkerson, D.\BBACOMMA\  \BBA\ Gross, G. \BBOP1965\BBCP.
\newblock \BBOQ Incidence matrices and interval graphs\BBCQ\
\newblock {\Bem Pacific Journal of Mathematics}, {\Bem 15\/}(5), 835--855.

\bibitem[\protect\BCAY{Gaertner}{Gaertner}{2002}]{gaertner-domainrestrictions}
Gaertner, W. \BBOP2002\BBCP.
\newblock \BBOQ Domain restrictions\BBCQ\
\newblock In Arrow, K., Sen, A., \BBA\ Suzumura, K.\BEDS, {\Bem Handbook of
  Social Choice and Welfare, Volume 1}, \BCH~3, \BPGS\ 131--170. Elsevier.

\bibitem[\protect\BCAY{Garey\ \BBA\ Johnson}{Garey\ \BBA\
  Johnson}{1979}]{gar-joh:b:int}
Garey, M.\BBACOMMA\  \BBA\ Johnson, D. \BBOP1979\BBCP.
\newblock {\Bem Computers and Intractability: A Guide to the Theory of
  NP-Completeness}.
\newblock {W. H. Freeman and Company}.

\bibitem[\protect\BCAY{Gibbard}{Gibbard}{1973}]{gib:j:polsci:manipulation}
Gibbard, A. \BBOP1973\BBCP.
\newblock \BBOQ Manipulation of voting schemes\BBCQ\
\newblock {\Bem Econometrica}, {\Bem 41\/}(4), 587--601.

\bibitem[\protect\BCAY{Ginsburgh\ \BBA\ Noury}{Ginsburgh\ \BBA\
  Noury}{2008}]{ginsburgh2008eurovision}
Ginsburgh, V.\BBACOMMA\  \BBA\ Noury, A. \BBOP2008\BBCP.
\newblock \BBOQ The {E}urovision song contest. {I}s voting political or
  cultural?\BBCQ\
\newblock {\Bem European Journal of Political Economy}, {\Bem 24\/}(1), 41--52.

\bibitem[\protect\BCAY{Habib, McConnell, Paul,\ \BBA\ Viennot}{Habib
  et~al.}{2000}]{habib2000lex}
Habib, M., McConnell, R., Paul, C., \BBA\ Viennot, L. \BBOP2000\BBCP.
\newblock \BBOQ Lex-{BFS} and partition refinement, with applications to
  transitive orientation, interval graph recognition and consecutive ones
  testing\BBCQ\
\newblock {\Bem Theoretical Computer Science}, {\Bem 234\/}(1), 59--84.

\bibitem[\protect\BCAY{Hemaspaandra, Hemaspaandra,\ \BBA\ Rothe}{Hemaspaandra
  et~al.}{2016}]{hemaspaandra2016complexity}
Hemaspaandra, E., Hemaspaandra, L., \BBA\ Rothe, J. \BBOP2016\BBCP.
\newblock \BBOQ The complexity of manipulative actions in single-peaked
  societies\BBCQ\
\newblock In Rothe, J.\BED, {\Bem Economics and Computation}, \BPGS\ 327--360.
  Springer.

\bibitem[\protect\BCAY{Hemaspaandra, Hemaspaandra,\ \BBA\ Rothe}{Hemaspaandra
  et~al.}{1997}]{HemaspaandraHR97}
Hemaspaandra, E., Hemaspaandra, L.~A., \BBA\ Rothe, J. \BBOP1997\BBCP.
\newblock \BBOQ Exact analysis of {Dodgson} elections: {Lewis Carroll's} 1876
  voting system is complete for parallel access to {NP}\BBCQ\
\newblock {\Bem Journal of the ACM}, {\Bem 44\/}(6), 806--825.

\bibitem[\protect\BCAY{Hemaspaandra, Spakowski,\ \BBA\ Vogel}{Hemaspaandra
  et~al.}{2005}]{HemaspaandraSV05}
Hemaspaandra, E., Spakowski, H., \BBA\ Vogel, J. \BBOP2005\BBCP.
\newblock \BBOQ The complexity of {Kemeny} elections\BBCQ\
\newblock {\Bem Theoretical Computer Science}, {\Bem 349\/}(3), 382--391.

\bibitem[\protect\BCAY{Hsu}{Hsu}{2002}]{hsu2002simple}
Hsu, W.-L. \BBOP2002\BBCP.
\newblock \BBOQ A simple test for the consecutive ones property\BBCQ\
\newblock {\Bem Journal of Algorithms}, {\Bem 43\/}(1), 1--16.

\bibitem[\protect\BCAY{Konczak\ \BBA\ Lang}{Konczak\ \BBA\
  Lang}{2005}]{kon-lan:c:incomplete-prefs}
Konczak, K.\BBACOMMA\  \BBA\ Lang, J. \BBOP2005\BBCP.
\newblock \BBOQ Voting procedures with incomplete preferences\BBCQ\
\newblock In {\Bem Proceedings of the 1st Multidisciplinary Workshop on
  Advances in Preference Handling (MPref 2005)}, \BPGS\ 124--129.

\bibitem[\protect\BCAY{Lackner\ \BBA\ Lackner}{Lackner\ \BBA\
  Lackner}{2017}]{LacknerL-likelihoodSP}
Lackner, M.-L.\BBACOMMA\  \BBA\ Lackner, M. \BBOP2017\BBCP.
\newblock \BBOQ On the likelihood of single-peaked preferences\BBCQ\
\newblock {\Bem Social Choice and Welfare}, {\Bem 48\/}(4), 717--745.

\bibitem[\protect\BCAY{Lackner}{Lackner}{2014}]{incompletesp}
Lackner, M. \BBOP2014\BBCP.
\newblock \BBOQ Incomplete preferences in single-peaked electorates\BBCQ\
\newblock In {\Bem Proceedings of the 28th AAAI Conference on Artificial
  Intelligence (AAAI 2014)}, \BPGS\ 742--748. AAAI Press.

\bibitem[\protect\BCAY{LeGrand, Markakis,\ \BBA\ Mehta}{LeGrand
  et~al.}{2007}]{legrand2007some}
LeGrand, R., Markakis, E., \BBA\ Mehta, A. \BBOP2007\BBCP.
\newblock \BBOQ Some results on approximating the minimax solution in approval
  voting\BBCQ\
\newblock In {\Bem AAMAS'07}, \BPGS\ 1185--1187.

\bibitem[\protect\BCAY{Liu\ \BBA\ Guo}{Liu\ \BBA\
  Guo}{2016}]{liu2016parameterized}
Liu, H.\BBACOMMA\  \BBA\ Guo, J. \BBOP2016\BBCP.
\newblock \BBOQ Parameterized complexity of winner determination in minimax
  committee elections\BBCQ\
\newblock In {\Bem Proceedings of the 2016 International Conference on
  Autonomous Agents \& Multiagent Systems}, \BPGS\ 341--349. International
  Foundation for Autonomous Agents and Multiagent Systems.

\bibitem[\protect\BCAY{Magiera\ \BBA\ Faliszewski}{Magiera\ \BBA\
  Faliszewski}{2019}]{magiera2019recognizing}
Magiera, K.\BBACOMMA\  \BBA\ Faliszewski, P. \BBOP2019\BBCP.
\newblock \BBOQ Recognizing top-monotonic preference profiles in polynomial
  time\BBCQ\
\newblock {\Bem Journal of Artificial Intelligence Research}, {\Bem 66},
  57--84.

\bibitem[\protect\BCAY{Masthoff}{Masthoff}{2015}]{Masthoff2015GroupRecommenderSystems}
Masthoff, J. \BBOP2015\BBCP.
\newblock \BBOQ Group recommender systems: Aggregation, satisfaction and group
  attributes\BBCQ\
\newblock In Francesco~Ricci, Lior~Rokach, B.~S.\BED, {\Bem Recommender Systems
  Handbook}. Springer.

\bibitem[\protect\BCAY{Mattei\ \BBA\ Walsh}{Mattei\ \BBA\
  Walsh}{2013}]{preflib}
Mattei, N.\BBACOMMA\  \BBA\ Walsh, T. \BBOP2013\BBCP.
\newblock \BBOQ Preflib: A library of preference data\BBCQ\
\newblock In {\Bem Proceedings of the 3rd International Conference on
  Algorithmic Decision Theory (ADT 2013)}. Springer.

\bibitem[\protect\BCAY{McConnell}{McConnell}{2004}]{mcconnell2004certifying}
McConnell, R.~M. \BBOP2004\BBCP.
\newblock \BBOQ A certifying algorithm for the consecutive-ones property\BBCQ\
\newblock In {\Bem Proceedings of the 15th Annual ACM-SIAM Symposium on
  Discrete Algorithms}, \BPGS\ 768--777. Society for Industrial and Applied
  Mathematics.

\bibitem[\protect\BCAY{Meidanis, Porto,\ \BBA\ Telles}{Meidanis
  et~al.}{1998}]{meidanis1998consecutive}
Meidanis, J., Porto, O., \BBA\ Telles, G.~P. \BBOP1998\BBCP.
\newblock \BBOQ On the consecutive ones property\BBCQ\
\newblock {\Bem Discrete Applied Mathematics}, {\Bem 88\/}(1), 325--354.

\bibitem[\protect\BCAY{Menon\ \BBA\ Larson}{Menon\ \BBA\
  Larson}{2017}]{men-lar:j:top-truncated-sp}
Menon, V.\BBACOMMA\  \BBA\ Larson, K. \BBOP2017\BBCP.
\newblock \BBOQ Computational aspects of strategic behaviour in elections with
  top-truncated ballots\BBCQ\
\newblock {\Bem Autonomous Agents and Multiagent Systems}, {\Bem 31\/}(6),
  1506--1547.

\bibitem[\protect\BCAY{Menon\ \BBA\ Larson}{Menon\ \BBA\
  Larson}{2016}]{MenonL16}
Menon, V.\BBACOMMA\  \BBA\ Larson, K. \BBOP2016\BBCP.
\newblock \BBOQ Reinstating combinatorial protections for manipulation and
  bribery in single-peaked and nearly single-peaked electorates\BBCQ\
\newblock In Schuurmans, D.\BBACOMMA\  \BBA\ Wellman, M.~P.\BEDS, {\Bem
  Proceedings of the Thirtieth {AAAI} Conference on Artificial Intelligence,
  February 12-17, 2016, Phoenix, Arizona, {USA.}}, \BPGS\ 565--571. {AAAI}
  Press.

\bibitem[\protect\BCAY{Moulin}{Moulin}{1991}]{mou:b:axioms}
Moulin, H. \BBOP1991\BBCP.
\newblock {\Bem Axioms of Cooperative Decision Making}.
\newblock Cambridge University Press.

\bibitem[\protect\BCAY{Moulin}{Moulin}{1984}]{moulin1984generalized}
Moulin, H. \BBOP1984\BBCP.
\newblock \BBOQ Generalized {Condorcet}-winners for single peaked and
  single-plateau preferences\BBCQ\
\newblock {\Bem Social Choice and Welfare}, {\Bem 1\/}(2), 127--147.

\bibitem[\protect\BCAY{Opatrny}{Opatrny}{1979}]{Opatrny-betweenness}
Opatrny, J. \BBOP1979\BBCP.
\newblock \BBOQ Total ordering problem\BBCQ\
\newblock {\Bem SIAM Journal on Computing}, {\Bem 8\/}(1), 111--114.

\bibitem[\protect\BCAY{Peters}{Peters}{2017}]{peters2016recognising}
Peters, D. \BBOP2017\BBCP.
\newblock \BBOQ Recognising multidimensional {Euclidean} preferences\BBCQ\
\newblock In {\Bem Proceedings of the 31st AAAI Conference on Artificial
  Intelligence (AAAI 2017)}, \BPGS\ 642--648. AAAI Press.

\bibitem[\protect\BCAY{Peters}{Peters}{2018}]{peters2018single}
Peters, D. \BBOP2018\BBCP.
\newblock \BBOQ Single-peakedness and total unimodularity: New polynomial-time
  algorithms for multi-winner elections\BBCQ\
\newblock In {\Bem Proceedings of the 32nd AAAI Conference on Artificial
  Intelligence (AAAI 2018)}, \BPGS\ 1169--1176. AAAI Press.

\bibitem[\protect\BCAY{Peters\ \BBA\ Lackner}{Peters\ \BBA\
  Lackner}{2017}]{peters2016spoc}
Peters, D.\BBACOMMA\  \BBA\ Lackner, M. \BBOP2017\BBCP.
\newblock \BBOQ Preferences single-peaked on a circle\BBCQ\
\newblock In {\Bem Proceedings of the 31st AAAI Conference on Artificial
  Intelligence (AAAI 2017)}, \BPGS\ 649--655. AAAI Press.

\bibitem[\protect\BCAY{Pini, Rossi, Venable,\ \BBA\ Walsh}{Pini
  et~al.}{2011}]{pini2011incompleteness}
Pini, M., Rossi, F., Venable, K., \BBA\ Walsh, T. \BBOP2011\BBCP.
\newblock \BBOQ Incompleteness and incomparability in preference aggregation:
  {C}omplexity results\BBCQ\
\newblock {\Bem Artificial Intelligence}, {\Bem 175\/}(7-8), 1272--1289.

\bibitem[\protect\BCAY{Raffinot}{Raffinot}{2011}]{raffinot2011consecutive}
Raffinot, M. \BBOP2011\BBCP.
\newblock \BBOQ Consecutive ones property testing: {C}ut or swap\BBCQ\
\newblock In {\Bem Conference on Computability in Europe}, \BPGS\ 239--249.
  Springer.

\bibitem[\protect\BCAY{Rothe, Spakowski,\ \BBA\ Vogel}{Rothe
  et~al.}{2003}]{rot-spa-vog:j:young}
Rothe, J., Spakowski, H., \BBA\ Vogel, J. \BBOP2003\BBCP.
\newblock \BBOQ Exact complexity of the winner problem for {Young}
  elections\BBCQ\
\newblock {\Bem Theory of Computing Systems}, {\Bem 36\/}(4), 375--386.

\bibitem[\protect\BCAY{Satterthwaite}{Satterthwaite}{1975}]{sat:j:polsci:manipulation}
Satterthwaite, M. \BBOP1975\BBCP.
\newblock \BBOQ Strategy-proofness and {Arrow's} conditions: Existence and
  correspondence theorems for voting procedures and social welfare
  functions\BBCQ\
\newblock {\Bem Journal of Economic Theory}, {\Bem 10\/}(2), 187--217.

\bibitem[\protect\BCAY{Schafer, Frankowski, Herlocker,\ \BBA\ Sen}{Schafer
  et~al.}{2007}]{schafer2007collaborative}
Schafer, J.~B., Frankowski, D., Herlocker, J., \BBA\ Sen, S. \BBOP2007\BBCP.
\newblock \BBOQ Collaborative filtering recommender systems\BBCQ\
\newblock In {\Bem The adaptive web}, \BPGS\ 291--324. Springer.

\bibitem[\protect\BCAY{Walsh}{Walsh}{2007}]{wal:c:uncertainty-in-preference-elicitation-aggregation}
Walsh, T. \BBOP2007\BBCP.
\newblock \BBOQ Uncertainty in preference elicitation and aggregation\BBCQ\
\newblock In {\Bem Proceedings of the 22nd National Conference on Artificial
  Intelligence (AAAI 2007)}, \BPGS\ 3--8. AAAI Press.

\bibitem[\protect\BCAY{Xia\ \BBA\ Conitzer}{Xia\ \BBA\
  Conitzer}{2011}]{xia2011determining}
Xia, L.\BBACOMMA\  \BBA\ Conitzer, V. \BBOP2011\BBCP.
\newblock \BBOQ Determining possible and necessary winners under common voting
  rules given partial orders\BBCQ\
\newblock {\Bem Journal of Artificial Intelligence Research}, {\Bem 41\/}(2),
  25--67.

\end{thebibliography}
\bibliographystyle{theapa}

\end{document}